\theoremstyle{plain}
\newtheorem{theorem}{Theorem}[section]
\newtheorem{proposition}[theorem]{Proposition}
\newtheorem{lemma}[theorem]{Lemma}
\newtheorem{corollary}[theorem]{Corollary}
\theoremstyle{definition}
\newtheorem{definition}[theorem]{Definition}
\theoremstyle{remark}
\newtheorem{remark}[theorem]{Remark}
\newcommand{\game}{\mathcal{G}}
\newcommand{\agentset}{\mathcal{N}}
\newcommand{\edgeset}{\mathcal{E}}
\newcommand{\actionset}[1]{\mathcal{S}_{#1}}
\newcommand{\NE}{\mathbf{\bar{x}}}
\newcommand{\A}{\mathbf{A}}
\newcommand{\defeq}{\vcentcolon=}
\newcommand{\x}{\mathbf{x}}
\newcommand{\e}{\mathbf{e}}
\newcommand{\y}{\mathbf{y}}
\newcommand{\zeros}{\mathbf{0}}
\newcommand{\U}{\mathbf{u}} 
\newcommand{\epsX}{\epsilon_X}
\newcommand{\epsY}{\epsilon_Y}
\newcommand{\R}{\mathbb{R}}
\newcommand{\X}{\mathcal{X}}
\title[Multi-Agent Learning in Network Games]{On the Stability of Learning\\ in Network Games with Many Players}
\author{Aamal Hussain}
\affiliation{
  \institution{Imperial College London}
  \city{}
  \country{}}
\email{aamal.hussain15@imperial.ac.uk}
\author{Dan Leonte}
\affiliation{
  \institution{Imperial College London}
  \city{}
  \country{}}
\email{dan.leonte16@imperial.ac.uk}
\author{Francesco Belardinelli}
\affiliation{
  \institution{Imperial College London}
  \city{}
  \country{}}
\email{francesco.belardinelli@imperial.ac.uk}
\author{Georgios Piliouras}
\affiliation{
  \institution{Singapore University of Technology and Design}
  \city{}
  \country{}}
\email{georgios@sutd.edu.sg}
\begin{abstract}
Multi-agent learning algorithms have been shown to display complex, unstable behaviours in a wide array of games. In fact, previous works indicate that convergent behaviours are less likely to occur as the total number of agents increases. This seemingly prohibits convergence to stable strategies, such as Nash Equilibria, in games with many players.

To make progress towards addressing this challenge we study the Q-Learning Dynamics, a classical model for exploration and exploitation in multi-agent learning. In particular, we study the behaviour of Q-Learning 
on games where interactions between agents are constrained by a network. We determine a number of sufficient conditions, depending on the game and network structure, which guarantee that agent strategies converge to a unique stable strategy, called the Quantal Response Equilibrium (QRE). Crucially, these sufficient conditions are independent of the total number of agents, allowing for provable convergence in arbitrarily large games. 

Next, we compare the learned QRE to the underlying NE of the game, by showing that any QRE is an $\epsilon$-approximate Nash Equilibrium. We first provide tight bounds on $\epsilon$ and show how these bounds lead naturally to a centralised scheme for choosing exploration rates, which enables independent learners to learn stable approximate Nash Equilibrium strategies. We validate the method through experiments and demonstrate its effectiveness even in the presence of numerous agents and actions. Through these results, we show that independent learning dynamics may converge to approximate Nash Equilibria, even in the presence of many agents.
\end{abstract}
\keywords{Multi-Agent Learning, Quantal Response Equilibrium, Online Learning in Games}
\newcommand{\BibTeX}{\rm B\kern-.05em{\sc i\kern-.025em b}\kern-.08em\TeX}
\begin{document}


\pagestyle{fancy}
\fancyhead{}


\maketitle 


\section{Introduction}

Game Theory (EGT) has emerged as a powerful formalism for studying learning in multi-agent settings \cite{schwartz:MARL,tuyls:foundational-models}. Here, agents are required to explore their state space to determine optimal actions, whilst simultaneously maximising their expected reward in the face of the changing behaviour of their opponents. By modelling these situations as idealised games it is possible to study the effect of various factors, such as payoffs and number of agents, on the dynamics of learning. An important question which is often studied from this lens is whether popular multi-agent learning algorithms converge to an equilibrium \cite{metrick:fp,harris:fp,piliouras:potential} (most often the Nash Equilibrium). 

Unfortunately, it seems that the general answer to this question is \emph{no}. Recent work has shown that, even in zero-sum games, the dynamics of no-regret learning algorithms can be cyclic \cite{piliouras:cycles} or chaotic \cite{cheung:vortices}. In addition, even small deviations from the zero-sum setting can result in robustly non-convergent dynamics \cite{cheung:decomposition,hussain:ijcai} so that in general-sum games, non-convergent behaviour appears to be the norm \cite{pangallo:bestreply,pangallo:taxonomy,galla:complex,flokas:donotmix,kleinberg:nashbarrier,imhof:cycles,galla:cycles,svs:chaos,sato:diversity,griffin:evonetworks}. To make matters worse, recent findings in \cite{sanders:chaos} suggest that, as the number of agents in the game increases, the likelihood for chaotic dynamics also increases when agents have low exploration rates. Similarly, the results of \cite{hussain:aamas} imply that incredibly large exploration rates may be required in games with many agents in order to ensure convergence. This seemingly presents a bottleneck for strong convergence guarantees in multi-agent settings with many agents. 

Despite this, many real world problems such as resource allocation \cite{amelina:load-balancing,grammatico:NAG}, routing \cite{Bielawski2021,Chotibut2019TheOne,Chotibut2019TheOptimistic} and robotics \cite{hamann:swarm,Shokri2020Leader-FollowerActiveness} consider a large number of agents who continuously adapt to one another. These practical applications in conjunction with the negative results in the face of many players immediately yield the following question: 
\begin{center}
\emph{Is there any hope for independent learning agents to converge to an equilibrium in games with many players?}
\end{center}

 To make progress in answering this question, this work examines multi-agent learning in \emph{network} games. Here, it is assumed that agents can only interact with their neighbours within an underlying communication network. Such systems are ubiquitous: machine learning architectures often impose structure between models \cite{hoang:mgan,li:triple-gan}; in robotic systems, agents interact through communication networks \cite{Grammatico2016DecentralizedControl,Shokri2020Leader-FollowerActiveness}; in both economics and biology, agent interactions are constrained through social networks. Network games refine the setting of \cite{sanders:chaos,hussain:aamas}, in which it was assumed that each agent is directly influenced by every other agent in the environment. This work provides strong evidence that the network structure matters, in some cases even more so than the total number of agents. 

\paragraph{Model and Contribution} We consider agents who update via the \emph{Q-Learning} dynamic, \cite{sato:qlearning,tuyls:qlearning}, a foundational model from game theory which describes the behaviour of agents who balance exploration and exploitation. Similar to \cite{hussain:aamas} we determine a number of sufficient conditions on exploration rates such that Q-Learning is guaranteed to converge to a unique equilibrium. In this work, however, we find that these conditions depend on graph theoretic properties of the interaction network. In our experiments, we examine how these conditions depend on the total number of
agents and find network structures for which there is no explicit dependence. These implications are visualised on a number of
representative network games and it is shown that large numbers of agents may converge to an equilibrium, so long as weakly connected network structures are used. By contrast, if the network is strongly connected, we recover the results of \cite{hussain:aamas,sanders:chaos} and show that stability depends on the total number of agents. 

The equilibrium solution to which Q-Learning converges is the \emph{Quantal Response Equilibrium} (QRE) \cite{mckelvey:qre,piliouras:zerosum}, a widely studied extension of the Nash Equilibrium for agents who explore their state space \cite{piliouras:potential,galstyan:2x2,gemp:sample}. In this work, we quantify the `distance' between a QRE and NE by showing that any QRE is an approximate Nash Equilibrium and providing tight bounds on this approximation. Using this, we present a procedure for choosing exploration rates so that Q-Learning agents may converge `closer' to the Nash Equilibrium, whilst maintaining the stability of the dynamic. We validate this procedure in a number of large scale network games and show that it leads to improvements in the convergence of Q-Learning dynamics towards approximate Nash Equilibria.

\paragraph{Related Work}
In \cite{galla:complex} the authors showed that the Experience Weighted Attraction (EWA) dynamic, which is closely related to Q-Learning \citep{piliouras:zerosum}, achieves chaos in classes of two-player games. Advancing this result, \cite{sanders:chaos} showed that chaotic dynamics become more prevalent as the number of agents
increase. Similar to this work,
\cite{hussain:aamas} apply the framework of \emph{monotone game} \cite{parise:network,facchinei:VI,tatarenko:monotone} to show that Q-Learning Dynamics converge to a unique equilibrium in any game, given sufficient exploration. However, they also find that this condition increases with the number of agents.

Besides online learning, other approaches have been developed to try to \emph{compute} Nash Equilibria in games. For our purposes, the most relevant of these are homotopy-like methods \cite{turocy:homotopy,herings:homotopy}. The principle of these methods is to perturb the payoff functions so that the resulting perturbed game is `easier' to solve. Then, by iteratively annealing this perturbation, one can approximate the underlying NE. Recently \cite{gemp:sample} applies an entropy perturbation of payoffs and use gradient-descent based approach to solve for a continuum of \emph{Quantal Response Equilibria} (QRE), which eventually leads to a NE \cite{mckelvey:qre}. Whilst homotopy methods present a powerful tool for computing approximate equilibria, they often lack the advantages of decentralisation provided by online learning and may not come with strong guarantees. \cite{Perolat2020FromRegularization} combines the entropy perturbation approach with online learning and show that, in two-player zero-sum games, this method allows independent learners to converge asymptotically to an NE. However, as with most learning strategies, its behaviour in many player, general sum games is unknown.

We address the problem of learning in many player games by examining the role of an underlying communication network. A number of works in game theory have shown that network structure affects the uniqueness and stability of NE \cite{ballester:whoiswho,bramoulle:networks,parise:network,melo:network,piliouras:poincare}. Our main result refines that of \cite{hussain:aamas} to include the network and show that Q-Learning dynamics can reach a QRE in any network game, given sufficiently high exploration rates. Crucially, these conditions are explicitly independent of the total number of agents. We also show that the QRE achieved by Q-Learning is an approximate Nash Equilibrium, and design a centralised scheme for updating exploration rates so that Q-Learning dynamics converge along the continuum of stable QRE to an approximate Nash Equilibrium.

\section{Preliminaries} \label{sec::Prelims}
    
We begin in Section \ref{sec::model} by defining the network game model, which is the setting on
which we study the Q-Learning dynamics, which we describe in Section \ref{sec::LearningModel}.

\subsection{Game Model}\label{sec::model}

In this work, we consider \textit{network polymatrix games} \cite{piliouras:zerosum}. A Network Game
is described by the tuple $\game = (\agentset, \edgeset, (u_k, \actionset{k})_{k \in \agentset})$,
where $\agentset$ denotes a finite set $\agentset$ of players, indexed by $k = 1, \ldots, N$. Each
agent can choose from a finite set $\actionset{k}$ of actions, indexed by $i = 1, \ldots, n$. We
denote the \emph {strategy} $\x_k$ of an agent $k$ as the probabilities with which they play their
actions. Then, the set of all strategies of agent $k$ is $\Delta(\actionset{k}) := \left\{ \x_k \in
\R^{n} \, : \, \sum_i x_{ki} = 1, x_ {ki} \geq 0 \right\}$. Each agent is also given a payoff
function $u_k \, : \Delta(\actionset{k}) \times \Delta(\actionset{-k}) \rightarrow \R$.
Agents are connected via an underlying network defined by $\edgeset$. In particular, $\edgeset$
consists of pairs $(k, l) \in \agentset \times \agentset$ of connected agents
$k$ and $l$. For any agent $k \in \agentset$, we denote by $\agentset_k = \{ l
\in \agentset : (k, l) \in \edgeset\}$ the \emph{neighbours} of $k$, i.e.~all
the agents who directly interact with agent $k$ in the network. An
equivalent way to define the network is through an \emph{adjacency matrix} $G$ such that
\begin{equation*}
    [G]_{k,l} = \begin{cases}
        1, \text{ if agents $k, l$ are connected,} \\
        0, \text{ otherwise.}
    \end{cases}.
\end{equation*}
It is assumed that the network is undirected so that $G$ is a symmetric matrix. Each edge $(k, l)
\in \edgeset$ corresponds to a pair of payoff matrices $A^{kl}$, $A^{lk}$. With these
specifications, the payoff received by each agent $k$ under joint strategy $\x = (\x_k, \x_{-k})$ is given by
\begin{equation} \label{eqn::GPGPayoffs} u_k(\x_k, \x_{-k}) = \sum_{(k, l) \in \edgeset} \x_k \cdot A^{kl} \x_l.
\end{equation}
 For any $\x \in \Delta =: \times_k \Delta(\actionset{k})$, we can define the reward to agent $k$
 for playing action $i$ as $r_{ki}(\x_{-k}) = \partial u_{ki}(\x)/\partial x_{ki}$. Under
 this notation, $u_k(\x_k, \x_{-k}) = \langle \x_k, r_k(\x_{-k}) \rangle$. With this in place, we can
 define suitable equilibrium solutions for the game.

\begin{definition}[Nash Equilibrium (NE)]
    A joint mixed strategy $\NE \in \Delta$ is a
\emph {Nash Equilibrium} (NE) if, for all agents $k$ and all actions $i \in
\actionset{k}$
    \begin{equation*}
        \NE_{k} = \arg\max_{\y_k \in \Delta_k} u_k(\y_k, \NE_{-k}).
    \end{equation*}
\end{definition}

\begin{definition}[Quantal Response Equilibrium (QRE)] A joint mixed strategy $\NE \in \Delta$ is a
\emph {Quantal Response Equilibrium} (QRE) if, for all agents $k$ and all actions $i \in
\actionset{k}$
    \begin{equation*}
        \NE_{ki} = \frac{\exp(r_{ki}(\NE_{-k})/T_k)}{\sum_{j \in \actionset{k}} \exp(r_{kj}(\NE_{-k})/T_k)}.
    \end{equation*}
\end{definition}

The QRE \cite{camerer:bgt,mckelvey:qre} naturally extends the Nash Equilibrium through the
parameter $T_k$, known as the \emph{exploration rate}. In particular,
the limit $T_k \rightarrow 0$ corresponds exactly to the Nash Equilibrium, whereas the limit $T_k
\rightarrow \infty$ corresponds to the case where action $i \in \actionset{k}$ is
played with the same probability regardless of its associated reward. The link between the QRE and
the Nash Equilibrium is made precise through the following result.

\begin{proposition}[\cite{melo:qre}] \label{prop::melo-qre}
    Consider a game $\game = (\agentset, \edgeset, (u_k, \actionset{k})_{k \in \agentset})$ and let
$T_1, \ldots, T_N > 0$ be exploration rates. Define the perturbed game $\game^H = (\agentset,
\edgeset, (u_k^H, \actionset{k})_{k \in \agentset})$ with the payoff functions
\begin{equation*}
    u_k^H(\x_k, \x_{-k}) = u_k(\x_k, \x_{-k}) - T_k \langle \x_k, \ln \x_{k} \rangle.
\end{equation*}
Then $\NE \in \Delta$ is a QRE of $\game$ iff it is a Nash Equilibrium of $\game^H$.
\end{proposition}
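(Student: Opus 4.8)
The plan is to fix an agent $k$ and an arbitrary profile $\x_{-k}$ of the other players, and to analyse the single-agent program $\max_{\y_k \in \Delta_k} u_k^H(\y_k, \x_{-k})$ that defines a best response in $\game^H$. Because $\game$ is a polymatrix game, $u_k(\y_k, \x_{-k}) = \langle \y_k, r_k(\x_{-k})\rangle$ is \emph{linear} in $\y_k$ for fixed $\x_{-k}$, and $r_k(\x_{-k})$ genuinely depends only on $\x_{-k}$; hence
\[
u_k^H(\y_k, \x_{-k}) = \langle \y_k, r_k(\x_{-k})\rangle - T_k \langle \y_k, \ln \y_k\rangle
\]
is the sum of a linear term and the strictly concave negative-entropy term (with the convention $0\ln 0 = 0$) scaled by $T_k > 0$, so it is strictly concave on the compact convex set $\Delta_k$. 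Consequently the maximiser exists and is unique, and $\NE$ is a Nash Equilibrium of $\game^H$ if and only if, for every $k$, $\NE_k$ is this unique maximiser when $\x_{-k} = \NE_{-k}$.

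Next I would identify the maximiser in closed form. Since $\partial/\partial y_{ki}\,[-y_{ki}\ln y_{ki}] = -\ln y_{ki} - 1 \to +\infty$ as $y_{ki}\to 0^+$, no boundary face of $\Delta_k$ can contain the maximiser, so it lies in the relative interior and is characterised by the stationarity condition of the Lagrangian $\mathcal{L}(\y_k,\lambda) = u_k^H(\y_k,\x_{-k}) - \lambda\big(\sum_i y_{ki} - 1\big)$ for the single equality constraint. Setting $\partial \mathcal{L}/\partial y_{ki} = r_{ki}(\x_{-k}) - T_k(\ln y_{ki} + 1) - \lambda = 0$ gives $y_{ki} = \exp\!\big((r_{ki}(\x_{-k}) - \lambda)/T_k - 1\big)$, and eliminating $\lambda$ via $\sum_i y_{ki} = 1$ yields
\[
y_{ki} = \frac{\exp(r_{ki}(\x_{-k})/T_k)}{\sum_{j\in\actionset{k}} \exp(r_{kj}(\x_{-k})/T_k)}.
\]
By strict concavity this stationary point is the unique global maximiser of the subproblem.

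Combining the two steps, $\NE$ is a Nash Equilibrium of $\game^H$ if and only if for every agent $k$ and action $i$ we have $\NE_{ki} = \exp(r_{ki}(\NE_{-k})/T_k)\big/\sum_{j} \exp(r_{kj}(\NE_{-k})/T_k)$, which is precisely the defining condition of a QRE of $\game$; this establishes both directions of the equivalence at once.

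I expect the only delicate point to be justifying that the maximiser lies in the relative interior of $\Delta_k$, so that the non-negativity constraints $y_{ki}\ge 0$ are inactive and the plain single-multiplier Lagrangian computation is valid; this is handled by the blow-up of the entropy gradient at the boundary noted above. Everything else is routine strictly-concave optimisation on the simplex, and it is the polymatrix (bilinear) structure of $u_k$ that keeps each per-agent subproblem well posed by making $r_k(\x_{-k})$ independent of $\x_k$.
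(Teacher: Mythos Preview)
The paper does not actually prove this proposition: it is quoted from \cite{melo:qre} and simply stated as a known result, so there is no in-paper argument to compare against. Your proof is correct and is the standard one---strict concavity of $\y_k \mapsto \langle \y_k, r_k(\x_{-k})\rangle - T_k\langle \y_k,\ln\y_k\rangle$ on $\Delta_k$, interiority of the maximiser from the entropy gradient blow-up, and the single-multiplier Lagrangian yielding the softmax expression---and it cleanly delivers both directions of the equivalence.
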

    
\paragraph{Game Structure} To achieve our main result, we must parameterise
interactions in the network game. This allows us to consider network games which are not
necessarily zero-sum. First, we define the \emph{influence bound} for each agent $k$.
\begin{definition}[Influence Bound]
    Let $\game = (\agentset, \edgeset, (u_k, \actionset{k})_{k \in \agentset})$
    be a network game. Then, for any $k \in \agentset$, the \emph{influence
    bound} is given by
    \begin{equation}
        \delta_k = \max_{i \in \actionset{k}, a_{-k}, \Tilde{a}_{-k} \in \actionset{-k}} \{ |r_{ki}(a_{-k}) - r_{ki}
        ( \Tilde{a}_{-k})| \},
    \end{equation}
    where the pure strategies $a_{-k}, \Tilde{a}_{-k} \in \actionset{-k}$ differ only in
    the action of one agent $l \in \agentset_k$.
\end{definition}
The influence bound describes how sensitive each agent's reward is to changes in opponent strategies.
As another parameterisation which is directly applicable to network games, we define the \emph{intensity of identical interests}.
\begin{definition}[Intensity of Identical Interests]
    Let $\game$ be a network game whose edgeset $\edgeset$ is associated with the payoff matrices $(A^{kl}, A^{lk})_{(k, l) \in \edgeset}$.
    The \emph{intensity of identical interests} $\sigma_I$ of $\game$ is given as
    \begin{equation} \label{eqn::intensity}
        \sigma_I = \max_{(k, l) \in \edgeset} \lVert A^{kl} + (A^{lk})^\top \rVert_2, 
    \end{equation}
    where $\lVert M \rVert_2 = \sup_{\Vert\x\rVert_2 = 1} \lVert M\x \rVert_2$ denotes the operator
    $2$-norm \cite{meiss:book}.
\end{definition}
The intensity of identical interests can be thought of as a measure of how \emph{cooperative} a network
game is. The reasoning for this is as follows. Suppose $A, B$ are the payoff matrices which maximise (\ref{eqn::intensity}) and suppose that $B^\top = cA$ for some $c = (-1, 1)$. Then, $\sigma_I$ is minimised when $c = -1$, in which case $A, B$ is zero-sum, and is maximised at $c=1$ so that $A = B^\top$, which defines an game of identical interests.

\subsection{Learning Model} \label{sec::LearningModel}

In this work, we analyse the \emph{Q-Learning dynamic}, a prototypical model for determining
optimal policies by balancing exploration and exploitation \cite{sutton:barto,schwartz:MARL}. In this model, each agent $k \in
\agentset$ maintains a history of the past performance of each of their actions. This history is
updated via the Q-update
\begin{equation*}
    Q_{ki}(\tau + 1) = (1 - \alpha_k) Q_{ki}(\tau) + \alpha_k r_{ki}(\x_{-k}(\tau)),
\end{equation*}
where $\tau$ denotes the current time step. 

$Q_{ki}(\tau)$ denotes the \emph{Q-value} maintained
by agent $k$ about the performance of action $i \in S_k$. In effect, $Q_{ki}$ gives a discounted
history of the rewards received when $i$ is played, with $1 - \alpha_k$ as the discount factor.

Given these Q-values, each agent updates their mixed strategies according to the Boltzmann
distribution, given by
\begin{equation*}
    x_{ki}(\tau) = \frac{\exp(Q_{ki}(\tau)/T_k) }{\sum_j \exp(Q_{kj}(\tau)/T_k)},
\end{equation*}
in which $T_k \in [0, \infty)$ is the \emph{exploration rate} of agent $k$. 

It was shown in \cite{tuyls:qlearning,sato:qlearning} that a continuous time approximation of
the Q-Learning algorithm could be written as
\begin{equation} \tag{QLD} \label{eqn::QLD}
    \frac{\dot{x}_{k i}}{x_{k i}}=r_{k i}\left(\mathbf{x}_{-k}\right)-\langle \mathbf{x}_k, r_k(\mathbf{x}) \rangle +T_k \sum_{j \in S_k} x_{k j} \ln \frac{x_{k j}}{x_{k i}},
\end{equation}
which we call the \emph{Q-Learning dynamics} (QLD). The fixed points of this dynamic coincide
with the (QRE) of the game \cite{piliouras:zerosum}. QLD can also be seen as an entropy regularised form of the well-studied \emph{replicator dynamics} (RD) \cite{smith:replicator,hofbauer:book}. Besides its importance in the study of population biology \cite{chakraborty:chaos}, RD is known to be a special case of the generalised \emph{Follow the
Regularised Leader} learning dynamic \cite{mertikopoulos:reinforcement}, which models agents who
maximise their accumulated payoffs subject to a penalisation function. RD
has been shown to display asymptotic convergence in potential games \cite{hofbauer:book}, cyclic
behaviour in zero-sum games \cite{piliouras:cycles} and chaos in a number of other classes
\cite{sato:rps,griffin:evonetworks}. The connection between RD and QLD is explored in \cite{piliouras:potential}.

\section{Guaranteed Convergence of Q-Learning in Network Games}

\begin{figure*}[t]
    \captionsetup{justification=centering}
        \centering
        \begin{subfigure}[b]{0.3\textwidth}
            \centering
            \includegraphics[width=0.6\textwidth]{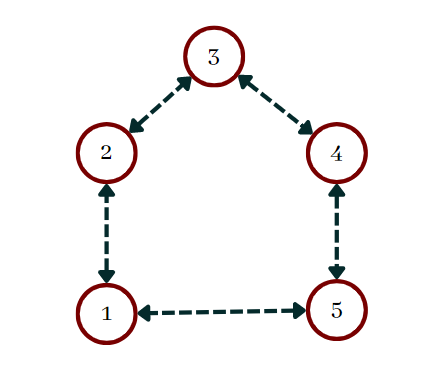}
            \caption*{Ring Network \\ $\left\lVert G \right\rVert_\infty = 2, \left\lVert G \right\rVert_2 = 2$}
        \end{subfigure}
        \begin{subfigure}[b]{0.3\textwidth}
            \centering
            \includegraphics[width=0.6\textwidth]{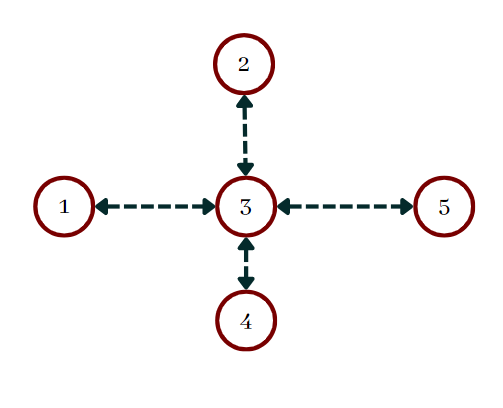}
            \caption*{Star Network \\ $\left\lVert G \right\rVert_\infty = N-1, \left\lVert G
            \right\rVert_2 = \sqrt{N - 1}$}
        \end{subfigure}
        \begin{subfigure}[b]{0.3\textwidth}
            \centering
            \includegraphics[width=0.6\textwidth]{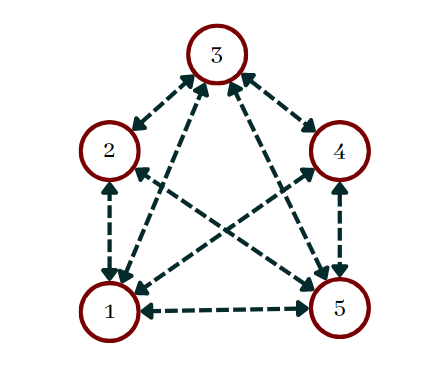}
            \caption*{Fully Connected Network \\ $\left\lVert G \right\rVert_\infty = N-1, \left\lVert G \right\rVert_2 = N - 1$}
        \end{subfigure}
    
     \caption{Examples of networks with five agents and associated $\left\lVert G \right\rVert_\infty$
     and $\left\lVert G \right\rVert_2$. }\label{fig::example-networks}
    \end{figure*}
    

    \begin{figure*}[t]
        \centering
        \begin{subfigure}[b]{0.45\textwidth}
            \centering
            \includegraphics[width=0.85\textwidth]{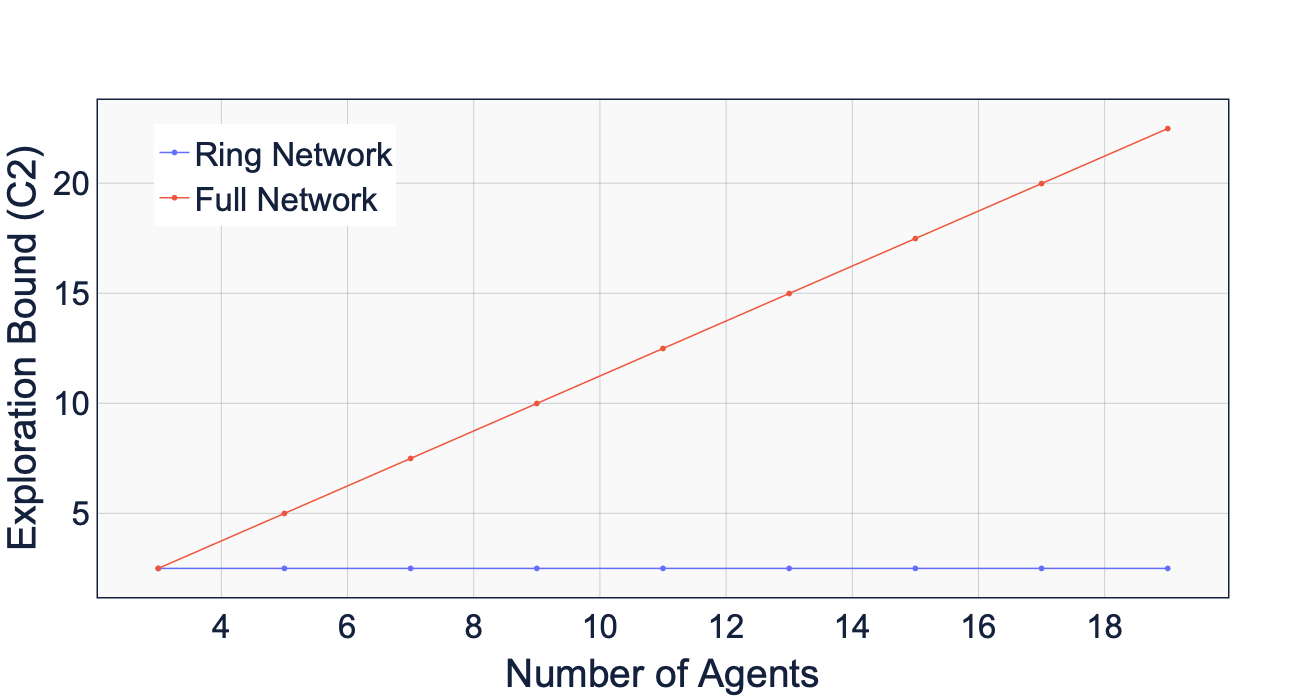}
            \caption*{Shapley Game}
        \end{subfigure}
        \begin{subfigure}[b]{0.45\textwidth}
            \centering
            \includegraphics[width=0.85\textwidth]{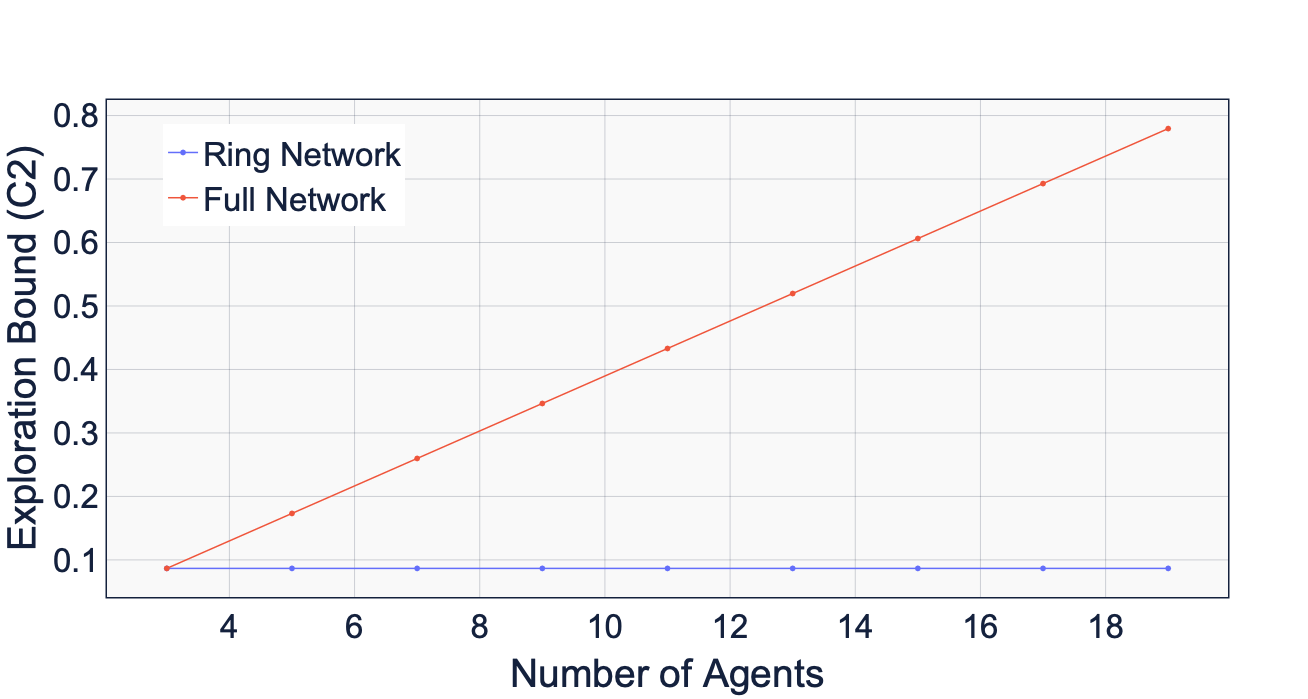}
            \caption*{Sato Game}
        \end{subfigure}
      
        \begin{subfigure}[b]{0.45\textwidth}
            \centering
            \includegraphics[width=0.85\textwidth]{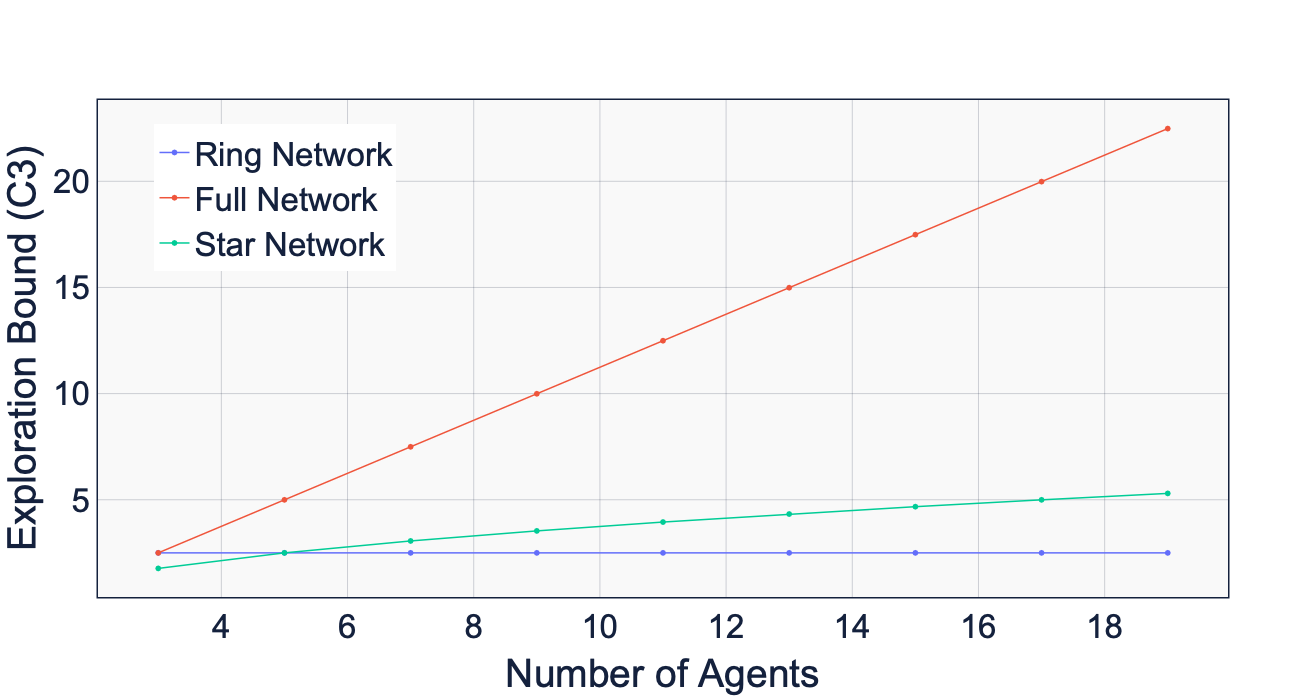}
            \caption*{Shapley Game}
        \end{subfigure}
        \begin{subfigure}[b]{0.45\textwidth}
            \centering
            \includegraphics[width=0.85\textwidth]{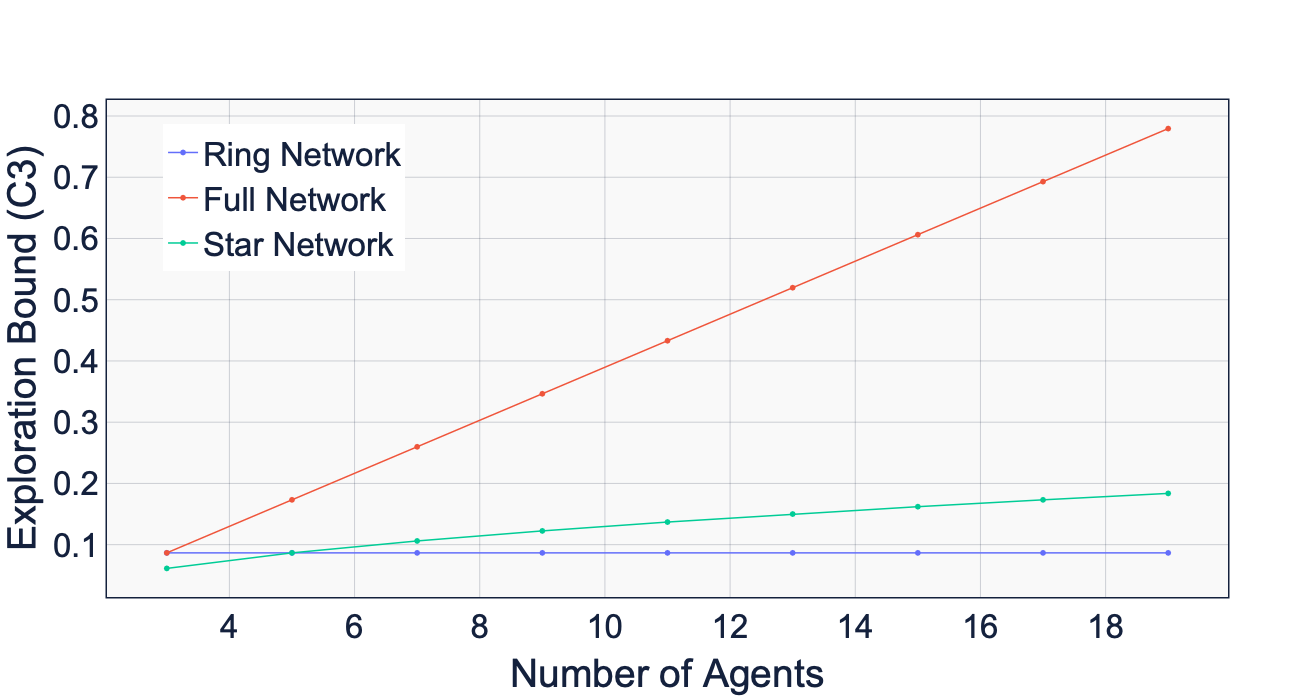}
            \caption*{Sato Game}
        \end{subfigure}
     \caption{Lower Bound on sufficient exploration as defined by (Top) (\ref{eqn::infty-cond})
     in a Full Network and Ring Network (Bottom) (\ref{eqn::2-cond}) in a Full Network, Star Network and Full
     Network.}\label{fig::stability-boundary}
    \end{figure*}

   \begin{figure*}[t]
        \centering
        \begin{subfigure}[b]{0.45\textwidth}
            \centering
            \includegraphics[width=0.85\textwidth]{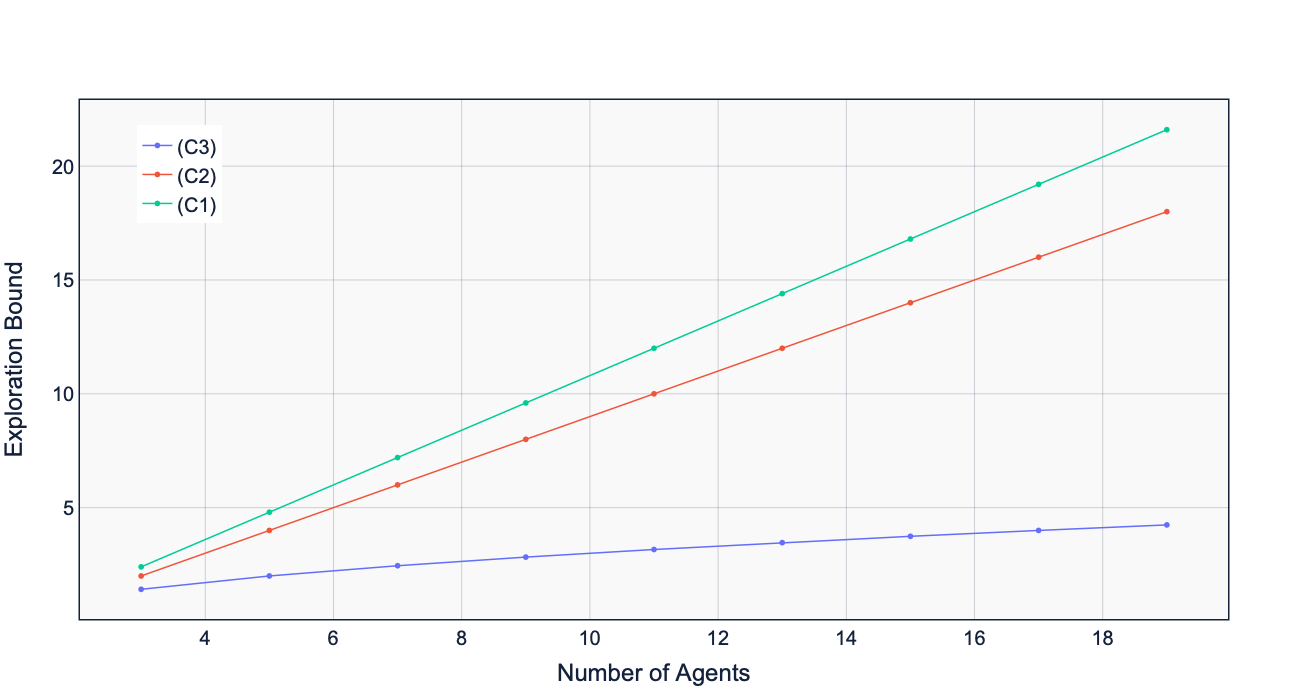}
            \caption*{Shapley Game}
        \end{subfigure}
        \begin{subfigure}[b]{0.45\textwidth}
            \centering
            \includegraphics[width=0.85\textwidth]{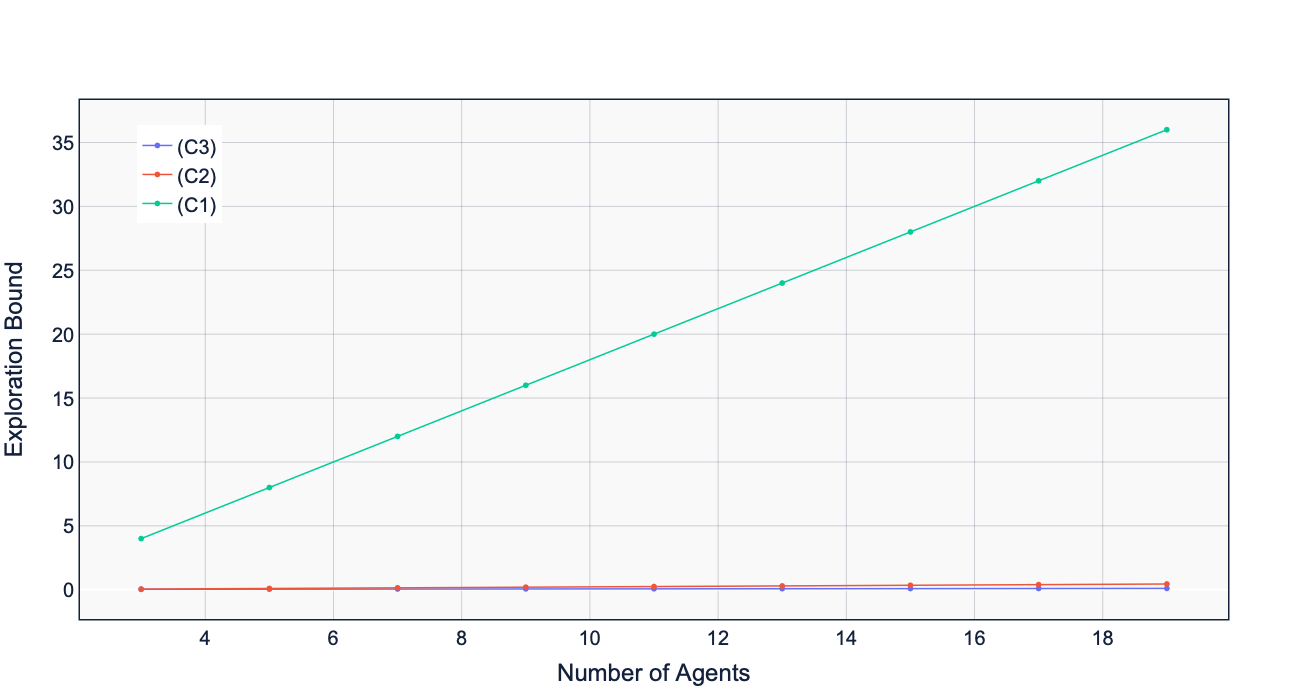}
            \caption*{Sato Game}
        \end{subfigure}
      
     \caption{Lower Bound on sufficient exploration as defined by (\ref{eqn::influence-cond}), (\ref{eqn::infty-cond}) and (\ref{eqn::2-cond}) in a Star Network. For (\ref{eqn::influence-cond}), $\max_{k \in \agentset} \delta_k |\agentset_k|$ is depicted which therefore coincides with the condition defined in \cite{hussain:aamas}.}\label{fig::cond-comparison}
    \end{figure*}

In this section we determine a number of sufficient conditions on the exploration rates $T_k$ under
which Q-Learning dynamics converge to a unique QRE. We find that these conditions are dependent on
the structure of the rewards in the game, parameterised by the interaction coefficient or the
inflence bound, and also on the structure of the network. We then compare our result to that of
\cite{hussain:aamas} and show that, under suitable network structures, stability can be achieved
with comparatively low exploration rates, even in the presence of many players. This also refines
the result of \cite{sanders:chaos} which suggests that learning dynamics are increasingly unstable
as the number of players increases, regardless of exploration rate. All proofs are in Appendix \ref{sec::main-thm-proof}.

\begin{theorem} \label{thm::main-thm} 
  Consider a network game $\game = (\agentset, \edgeset, (u_k, \actionset{k})_{k \in \agentset})$
  which has a network adjacency matrix $G$. Let $\sigma_I$ denote the intensity of identical interests for
  $\game$ and $\delta_k$ denote the influence bound of each agent $k \in \agentset$. Then, the
  Q-Learning Dynamic converges to a unique QRE $\NE \in \Delta$ if any of the following conditions
  hold for all agents $k \in \agentset$, 
  \begin{align}
      T_k &> \delta_k |\agentset_k|, \tag{C1} \label{eqn::influence-cond} \\
      T_k &> \frac{1}{2} \sigma_I \left\lVert G \right \rVert_\infty \tag{C2} \label{eqn::infty-cond},
  \end{align}
  where $\lVert M \rVert_\infty = \max_i \sum_{j} |[M]_{ij}|$ is the operator
  $\infty$-norm.
  If, in addition, each edge defines the same bimatrix game $(A, B)$, then asymptotic
  convergence of Q-Learning Dynamics holds if, for all $k \in \agentset$
  \begin{equation} \tag{C3} \label{eqn::2-cond}
      T_k > \frac{1}{2} \sigma_I \left\lVert G \right\rVert_2.
  \end{equation}
\end{theorem}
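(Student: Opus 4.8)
The plan is to work in the entropy-regularised game $\game^H$ of Proposition~\ref{prop::melo-qre}: since the QRE of $\game$ are exactly the Nash equilibria of $\game^H$, it suffices to prove that (i) $\game^H$ has a \emph{unique} Nash equilibrium $\NEX$, which is moreover interior, and that (ii) the \ref{eqn::QLD} flow converges to it. The first observation is that \ref{eqn::QLD} is \emph{literally} the replicator dynamics of $\game^H$: substituting $\sum_j x_{kj}\ln(x_{kj}/x_{ki}) = \sum_j x_{kj}\ln x_{kj} - \ln x_{ki}$ into \ref{eqn::QLD}, its right-hand side becomes $\hat r_{ki}(\x) - \langle\x_k,\hat r_k(\x)\rangle$ with $\hat r_{ki}(\x) := r_{ki}(\x_{-k}) - T_k\ln x_{ki} = \partial u_k^H(\x)/\partial x_{ki} + T_k$, the additive constant $T_k$ cancelling in the replicator form. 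Existence of $\NEX$ then follows because each $u_k^H$ is continuous and strictly concave in $\x_k$ (Rosen/Glicksberg), and the defining logit equations force $\NEX_{ki}>0$, so $\NEX$ lies in the relative interior of $\Delta$.

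For uniqueness I would show that $\game^H$ is a \emph{strongly monotone} game, i.e.\ the pseudo-gradient $F(\x) := (-\hat r_k(\x))_{k\in\agentset}$ satisfies $\langle F(\x)-F(\x'),\x-\x'\rangle \ge \mu\sum_{k}\lVert\x_k-\x'_k\rVert^2$ for some $\mu>0$ on the relative interior of $\Delta$; this at once yields a unique equilibrium. Writing $\mathbf{z}_k := \x_k-\x'_k$ (coordinates summing to zero), the regulariser contributes $\sum_k T_k\langle\ln\x_k-\ln\x'_k,\mathbf{z}_k\rangle \ge \sum_k T_k\lVert\mathbf{z}_k\rVert_1^2$ by $1$-strong convexity of negative entropy w.r.t.\ $\lVert\cdot\rVert_1$ (Pinsker), while the game term is $-\sum_k\langle\mathbf{z}_k,\sum_{l\in\agentset_k}A^{kl}\mathbf{z}_l\rangle = -\sum_{\{k,l\}\in\edgeset}\mathbf{z}_k^\top(A^{kl}+(A^{lk})^\top)\mathbf{z}_l$ after pairing each edge with its reverse. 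The three conditions are three ways of dominating this coupling. Under \ref{eqn::infty-cond} I would bound each summand by $\sigma_I\lVert\mathbf{z}_k\rVert_2\lVert\mathbf{z}_l\rVert_2$, note $\sum_{\{k,l\}\in\edgeset}\lVert\mathbf{z}_k\rVert_2\lVert\mathbf{z}_l\rVert_2 = \tfrac12 v^\top G v$ with $v_k := \lVert\mathbf{z}_k\rVert_2$, and use $v_kv_l\le\tfrac12(v_k^2+v_l^2)$ together with the symmetry of $G$ to get $v^\top G v \le \lVert G\rVert_\infty\lVert v\rVert_2^2$, which $\sum_k T_k\lVert\mathbf{z}_k\rVert_2^2$ beats once $T_k>\tfrac12\sigma_I\lVert G\rVert_\infty$ for every $k$. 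Under \ref{eqn::2-cond}, when every edge carries the same bimatrix $(A,B)$, the coupling has Kronecker structure — its symmetric part is essentially $G\otimes(A+B^\top)$ — so $\lVert G\otimes(A+B^\top)\rVert_2 = \lVert G\rVert_2\,\sigma_I$ replaces $\lVert G\rVert_\infty$ by the sharper $\lVert G\rVert_2$. Under \ref{eqn::influence-cond} I would instead exploit the influence bound: since the coordinates of $\mathbf{z}_l$ sum to zero, $A^{kl}\mathbf{z}_l$ is unchanged by re-centring the rows of $A^{kl}$, so $\lVert A^{kl}\mathbf{z}_l\rVert_\infty \le \tfrac{\delta_k}{2}\lVert\mathbf{z}_l\rVert_1$ and $|\langle\mathbf{z}_k,\sum_{l\in\agentset_k}A^{kl}\mathbf{z}_l\rangle| \le \tfrac{\delta_k}{2}\lVert\mathbf{z}_k\rVert_1\sum_{l\in\agentset_k}\lVert\mathbf{z}_l\rVert_1$, which $T_k\lVert\mathbf{z}_k\rVert_1^2$ dominates once $T_k$ exceeds (a constant multiple of) $\delta_k|\agentset_k|$ — precisely the network refinement of the condition of \cite{hussain:aamas}, with $|\agentset_k|$ replacing $N-1$.

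With uniqueness in hand, convergence follows from the Lyapunov function $\mathcal L(\x) := \sum_{k\in\agentset}D_{\mathrm{KL}}(\NEX_k\,\Vert\,\x_k)$, which is nonnegative, proper on $\mathrm{int}\,\Delta$ (its sublevel sets are compact subsets of the interior because $\NEX$ is interior), and vanishes only at $\NEX$. Differentiating along \ref{eqn::QLD} in its replicator form gives $\dot{\mathcal L} = \sum_k\langle\x_k-\NEX_k,\hat r_k(\x)\rangle$; since $\NEX$ is an \emph{interior} equilibrium of $\game^H$ the vector $\hat r_k(\NEX)$ has all coordinates equal, so $\langle\x_k-\NEX_k,\hat r_k(\NEX)\rangle=0$, whence $\dot{\mathcal L} = -\langle F(\x)-F(\NEX),\x-\NEX\rangle \le -\mu\sum_k\lVert\x_k-\NEX_k\rVert^2 < 0$ for every $\x\ne\NEX$. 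As \ref{eqn::QLD} keeps trajectories in the interior (automatic from any Boltzmann-distributed initialisation), LaSalle's invariance principle yields $\x(t)\to\NEX$; under \ref{eqn::2-cond} one phrases the conclusion as asymptotic stability of $\NEX$.

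The step I expect to be the main obstacle is the monotonicity estimate — making the single regulariser coefficient $T_k$ dominate the entire network coupling. The delicate points are committing to one norm on strategy differences and matching it on both sides of the inequality ($\lVert\cdot\rVert_1$ for \ref{eqn::influence-cond}, $\lVert\cdot\rVert_2$ for \ref{eqn::infty-cond} and \ref{eqn::2-cond}), and, for \ref{eqn::2-cond}, checking that the block coupling operator really is controlled by $\lVert G\rVert_2\,\sigma_I$ in spite of the orientations of its off-diagonal blocks — which is exactly why the identical-bimatrix hypothesis is imposed there and why that guarantee is stated more cautiously. The replicator reformulation, the existence and interiority of the QRE, and the LaSalle argument are all routine.
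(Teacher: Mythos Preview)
Your proposal is correct and follows essentially the same route as the paper: establish that the entropy-perturbed game $\game^H$ is (strongly) monotone under each of \ref{eqn::influence-cond}--\ref{eqn::2-cond}, and then deduce uniqueness of the QRE together with convergence of \ref{eqn::QLD}. The paper checks monotonicity for \ref{eqn::infty-cond} and \ref{eqn::2-cond} by showing the symmetrised pseudo-Hessian is positive definite (Weyl's inequality combined with a block-norm bound, respectively a Kronecker-product norm identity) rather than by your direct inner-product estimate, and it invokes a black-box convergence lemma from \cite{hussain:aamas} in place of your explicit KL-divergence Lyapunov argument, but these are implementation choices within the same overall strategy.
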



\begin{remark}
    Condition (\ref{eqn::influence-cond}) immediately refines the result of \cite{hussain:aamas} to the case of network games. In the latter work, the authors implicitly assume that the reward for each agent depends on all other agents. In our work, this corresponds exactly to the case of a fully connected network, where $\agentset_k = \agentset \backslash \{k\}$. In addition, \cite{hussain:aamas} define the influence bound to be over all agents, yielding a single condition which must hold for all $k$. Instead (\ref{eqn::influence-cond}) allows for agents who have a lower $\delta_k$ or who are not strongly connected in the network to have lower exploration rates $T_k$ without compromising convergence.
\end{remark}

\begin{remark}
    We can directly compare (\ref{eqn::influence-cond}) and (\ref{eqn::infty-cond}) due to the definition of the infinity norm. In particular $\lVert G \rVert_\infty = \max_k |\agentset_k|$ is the maximum number of neighbours for any agent $k \in \agentset$. Therefore, in a network where all agents are connected identically, the network dependency in (\ref{eqn::influence-cond}) is the same as that in (\ref{eqn::infty-cond}) . Next, the advantage of using the influence bound is that its definition applies in games which are not defined by matrices, and so the result generalises outside of network polymatrix games. By contrast, $\sigma_I$ is often easier to compute than $\delta_k$ as it is based on matrix norms rather than pairwise differences. Furthermore, $\frac{1}{2} \sigma_I$ is less than $\delta_k$ in a number of polymatrix games (c.f. Sec.~\ref{sec::experiments}). In summary, (\ref{eqn::influence-cond}) presents an advantage in terms of generality , whilst (\ref{eqn::infty-cond}) is often easier to compute and can be a tighter bound in network polymatrix games where all agents are identically coupled.
\end{remark}

\begin{remark}
    Theorem \ref{thm::main-thm} applies generally across all network polymatrix games, without making any assumptions, such as the network zero-sum condition. In fact, for networks of pairwise zero sum games, the following holds
\end{remark}

\begin{corollary}
        If the network game $\game$ is a pairwise zero-sum matrix, i.e.~$A^{kl} + (A^{lk})^\top = 0$
        for all $(k, l) \in \edgeset$, then the Q-Learning dynamics converge to a unique QRE so long as
        exploration rates $T_k$ for all agents are strictly positive.
\end{corollary}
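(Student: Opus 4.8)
The plan is to obtain the corollary as a direct specialisation of Theorem~\ref{thm::main-thm} through condition~(\ref{eqn::infty-cond}). The crucial point is that the pairwise zero-sum hypothesis makes the intensity of identical interests vanish, so that (\ref{eqn::infty-cond}) becomes vacuous beyond strict positivity of the exploration rates.

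Concretely, I would first unwind the definition of $\sigma_I$: by hypothesis $A^{kl} + (A^{lk})^\top = \zeros$ for every edge $(k,l) \in \edgeset$, so each term $\lVert A^{kl} + (A^{lk})^\top \rVert_2$ appearing in the maximum (\ref{eqn::intensity}) equals $\lVert \zeros \rVert_2 = 0$, whence $\sigma_I = 0$. I would then substitute this into (\ref{eqn::infty-cond}); since $\lVert G \rVert_\infty = \max_{k \in \agentset} |\agentset_k| \le N - 1$ is finite, the right-hand side $\frac{1}{2}\sigma_I\lVert G\rVert_\infty$ equals $0$. Hence (\ref{eqn::infty-cond}) reduces exactly to the requirement $T_k > 0$ for all $k \in \agentset$, which is the hypothesis of the corollary, and Theorem~\ref{thm::main-thm} then delivers convergence of the Q-Learning dynamics to a unique QRE.

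There is essentially no analytic obstacle here, as the entire technical content is already packaged in Theorem~\ref{thm::main-thm}; the only point worth flagging is the choice of which sufficient condition to invoke. Condition~(\ref{eqn::influence-cond}) is \emph{not} adequate for this purpose: the influence bound $\delta_k$ need not vanish in a pairwise zero-sum game (e.g.\ in matching-pennies-type interactions the reward to a fixed action still swings between its extreme values as a neighbour switches actions), so (\ref{eqn::influence-cond}) would still impose a strictly positive floor on $T_k$. It is precisely the matrix-norm parameterisation underlying $\sigma_I$ that captures the cancellation $A^{kl} + (A^{lk})^\top = \zeros$, and condition~(\ref{eqn::2-cond}) — which likewise carries $\sigma_I$ on its right-hand side — would work equally well but only under the superfluous assumption that every edge hosts the same bimatrix game. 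I would therefore present the argument via (\ref{eqn::infty-cond}), noting in passing that this case is exactly the $c = -1$ extreme discussed after the definition of $\sigma_I$.
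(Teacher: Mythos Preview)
Your proposal is correct and matches the paper's own treatment: the corollary is stated immediately after Theorem~\ref{thm::main-thm} as a direct specialisation, with the pairwise zero-sum hypothesis forcing $\sigma_I = 0$ so that (\ref{eqn::infty-cond}) collapses to $T_k > 0$. The paper gives no separate proof beyond this, merely remarking that the conclusion is consistent with prior results on network zero-sum games.
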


Corollary 1 is supported by the result of \cite{piliouras:zerosum,hussain:aamas} in which it was
shown that Q-Learning converges to a unique QRE in all network zero-sum games, even if they are
not pairwise zero-sum , so long as all exploration rates $T_k$ are positive.

\begin{remark}
    Whilst (\ref{eqn::2-cond}) requires a stronger assumption, namely that
    each edge corresponds to the same bimatrix game, this setting is well
    motivated in the literature \cite{szabo:graphs,griffin:evonetworks}. In addition, it holds that $\lVert G \rVert_2 \leq \lVert G
    \rVert_\infty$ for all symmetric matrices $G$. Therefore,
    (\ref{eqn::2-cond}) provides a stronger bound than (\ref{eqn::infty-cond}). Figure \ref{fig::stability-boundary} depicts (\ref{eqn::infty-cond}) and (\ref{eqn::2-cond}) on various network games, whilst a direct comparison is visualised in Figure \ref{fig::cond-comparison}. 
\end{remark}

\subsection{QRE as approximate Nash Equilibria} \label{sec::e-NE}

In the following section we compare the QRE as an equilibrium solution to the Nash Equilibrium (NE) condition. In particular we show that the QRE of any game, which no longer needs to be a network game, is close to an NE in the following sense

\begin{definition}[$\epsilon$-approximate Nash Equilibrium] \label{def::e-NE}
    A strategy $\NE \in \Delta$ is an \emph{$\epsilon$-approximate Nash
    Equilibrium} for the game $\game$ if, for all agents $k$, and all strategies
    $\y_k \in \Delta_k$
    \begin{equation*}
        u_k(\y_k, \NE_{-k}) - u_k(\NE_k, \NE_{-k}) \leq \epsilon.
    \end{equation*}
\end{definition}

\begin{proposition} \label{prop::e-NE}
    Consider a game $\game$ and let $T_1, \ldots, T_N > 0$ denote positive
    exploration rates. Then any QRE $\NE \in \Delta$ is an
    $\epsilon$-approximate Nash Equilibrium where
    \begin{align}
        \epsilon &= \max_{k \in \agentset}  T_k A_k(\NE_k),  \label{eqn::e-NE} \\
        A_k(\x_k) &= \max_{i \in S_k} \ln x_{ki} - \langle \x_k, \ln \x_k \rangle \label{eqn::A_k}.
    \end{align} 
\end{proposition}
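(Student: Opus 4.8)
The plan is to reduce the $\epsilon$-Nash condition of Definition~\ref{def::e-NE} to a statement about the reward vectors $r_k(\NE_{-k})$ evaluated at the equilibrium, and then substitute the closed form of the quantal response map to compute each agent's best deviation gain \emph{exactly}. First I would fix an agent $k$ and use that, with $\NE_{-k}$ held fixed, $u_k(\cdot,\NE_{-k})$ is the multilinear (hence linear) extension in agent $k$'s own strategy, so $u_k(\y_k,\NE_{-k}) = \langle \y_k, r_k(\NE_{-k})\rangle$ and its maximum over $\Delta_k$ is attained at a pure strategy. Consequently
\[
  \max_{\y_k\in\Delta_k} u_k(\y_k,\NE_{-k}) - u_k(\NE_k,\NE_{-k}) = \max_{i\in\actionset{k}} r_{ki}(\NE_{-k}) - \langle \NE_k, r_k(\NE_{-k})\rangle ,
\]
and it suffices to show the right-hand side equals $T_k A_k(\NE_k)$; maximising over $\agentset$ then yields $\epsilon$ as in \eqref{eqn::e-NE}.

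Next I would exploit the defining QRE identity. Taking logarithms in the definition of the QRE gives $r_{ki}(\NE_{-k}) = T_k \ln \NE_{ki} + C_k$, where $C_k \defeq T_k\ln\!\big(\sum_{j\in\actionset{k}}\exp(r_{kj}(\NE_{-k})/T_k)\big)$ does \emph{not} depend on $i$. Substituting this into both terms of the display above, the additive constant $C_k$ cancels --- from the first term because $T_k>0$, and from the second because $\sum_i \NE_{ki}=1$ --- leaving exactly $T_k\big(\max_{i\in\actionset{k}}\ln\NE_{ki} - \langle\NE_k,\ln\NE_k\rangle\big)$, which is $T_k A_k(\NE_k)$ by \eqref{eqn::A_k}. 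This establishes the claimed inequality with equality, agent by agent, so the proposition follows upon maximising over $k$.

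I do not expect a genuine obstacle here: the only points needing care are the reduction to pure-strategy deviations (immediate from linearity of $u_k$ in the deviating agent's strategy) and the bookkeeping showing that the log-partition constant $C_k$ drops out of the difference between the best-response reward and the equilibrium reward. As an aside, one could instead invoke Proposition~\ref{prop::melo-qre} and evaluate the Nash condition of the perturbed game $\game^H$ at the pure deviations $\pure{i}$, using $\langle\pure{i},\ln\pure{i}\rangle=0$; but that route only yields the cruder bound $-T_k\langle\NE_k,\ln\NE_k\rangle$, i.e.\ the entropy of $\NE_k$, which is in general strictly larger than $A_k(\NE_k)$ since $\max_i\ln\NE_{ki}\le 0$. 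To obtain the tight constant stated in Proposition~\ref{prop::e-NE} I would therefore use the direct computation above rather than this weaker route.
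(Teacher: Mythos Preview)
Your proof is correct and follows essentially the same approach as the paper's: both reduce the $\epsilon$-Nash gap to $\max_i r_{ki}(\NE_{-k}) - \langle \NE_k, r_k(\NE_{-k})\rangle$ and then use the QRE fixed-point identity to rewrite this as $T_k A_k(\NE_k)$. The only cosmetic difference is that you take logarithms of the softmax definition to obtain $r_{ki} = T_k\ln\NE_{ki} + C_k$ and cancel the log-partition constant, whereas the paper invokes the equivalent characterisation of $\NE$ as an interior fixed point of \eqref{eqn::QLD} and rearranges directly; these are the same computation.
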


\begin{remark}
    Comparing (\ref{eqn::e-NE}) with (\ref{eqn::QLD}), it can be seen that $\epsilon$ denotes the maximum amount of entropy regularisation applied to the payoffs at the QRE $\NE$. Of course, this depends on the value of $\NE$ itself. As an example, if the QRE is the uniform distribution, i.e.~$\NE_k = (1/n_k, \ldots, 1/n_k)$ for all agents $k$, then $A_k(\NE_k) = 0$. In this case, $\NE$ is exactly an NE of the game.
\end{remark}

\begin{remark}
    It is also important to note that value of $\epsilon$ given by any QRE $\NE$ holds exactly. This gives the tightest possible approximation of Nash for any given QRE $\NE$. Whilst it is largely known that QRE can be considered as approximations of Nash \cite{turocy:homotopy,mckelvey:qre,gemp:sample}, to our knowledge Proposition \ref{prop::e-NE} is the first which exactly quantifies the `distance' between the two equilibrium concepts.
\end{remark}

We plot $A_k(\x)$ for the case $n_k = 3$ and $n_k = 2$ in the Appendix (Figure \ref{fig::surprisal_plot}). To determine its upper bounds, note that $A_k(\NE_k) \leq \max_{\x_k \in \Delta_k} A_k(\x_k) =: \bar{A}_k$. The form for $\bar{A}_k$ is in general unavailable in closed form and so we give exact values in the Appendix, focusing here on sharp bounds.

\begin{lemma}[Full version in Lemma \ref{lem::A_k-exact}] \label{lem::A_k-bound}
    \begin{align*}
        \Bar{A}_k &:= \max_{\x_k \in \Delta_k} \left(\max_{i \in S_k} \ln x_{ki} - \langle \x_k, \ln \x_k \rangle \right) = \mathcal{O}(\ln n_k).
    \end{align*}
\end{lemma}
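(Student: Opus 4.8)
The plan is to split $\bar A_k$ into its two constituent parts and bound each separately. Writing $h(\x_k) := -\langle \x_k, \ln \x_k\rangle$ for the Shannon entropy of $\x_k$, subadditivity of the maximum gives $\bar A_k = \max_{\x_k\in\Delta_k}\bigl(\max_{i\in S_k}\ln x_{ki} + h(\x_k)\bigr) \le \max_{\x_k\in\Delta_k}\max_{i\in S_k}\ln x_{ki} + \max_{\x_k\in\Delta_k} h(\x_k)$. Since every coordinate of a point in $\Delta_k$ lies in $[0,1]$ we have $\ln x_{ki}\le 0$, so the first term is at most $0$ (and equals $0$ at a pure strategy); the second is the classical entropy bound $h(\x_k)\le \ln n_k$, attained at the uniform strategy. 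Hence $\bar A_k \le \ln n_k$, which already proves $\bar A_k = \mathcal{O}(\ln n_k)$. To see that the order is tight, I would evaluate the objective at the strategy placing mass $1/\ln n_k$ on one fixed action and the remaining mass uniformly on the other $n_k-1$ actions; a short computation gives a value equal to $\ln n_k$ minus terms of order $\ln\ln n_k$, so in fact $\bar A_k = \Theta(\ln n_k)$.

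For the exact value recorded in the full version (Lemma~\ref{lem::A_k-exact}) I would solve the constrained optimisation explicitly. Using the permutation symmetry of the action labels, assume the largest coordinate is $x_{k1} = p$. With $p$ held fixed, strict concavity of the entropy forces the remaining mass $1-p$ to be distributed uniformly over the other $n_k-1$ actions, and a first-order comparison — the gradient of $h$ is balanced across two tied coordinates, whereas $\ln x_{k1}$ strictly increases when mass is shifted onto coordinate $1$ — shows the maximiser cannot occur at a tie, so this reduction loses nothing; it is feasible (i.e.\ leaves $x_{k1}$ maximal) exactly when $p\ge 1/n_k$. This reduces the problem to maximising the scalar function $g(p) = \ln p - p\ln p - (1-p)\ln(1-p) + (1-p)\ln(n_k-1)$ over $[1/n_k,1]$. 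One computes $g'(p) = \tfrac1p + \ln\!\tfrac{1-p}{p(n_k-1)}$ and $g''(p) = -\tfrac1{p^2} - \tfrac1p - \tfrac1{1-p} < 0$, so $g$ is strictly concave with $g'(1/n_k) = n_k > 0$ and $g'(p)\to-\infty$ as $p\to 1$; hence there is a unique interior stationary point $p^\star$, characterised by $\tfrac1{p^\star} = \ln\!\tfrac{p^\star(n_k-1)}{1-p^\star}$, and $\bar A_k = g(p^\star)$.

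The one genuinely non-routine step is the asymptotic analysis of this transcendental stationarity equation. I would show that $p^\star\to 0$ but only logarithmically — substituting the ansatz $p = c/\ln n_k$ into $\tfrac1p = \ln\tfrac{p(n_k-1)}{1-p}$ forces $c\to 1$ — and then plug $p^\star\sim 1/\ln n_k$ back into $g$ to obtain $\bar A_k = \ln n_k - \ln\ln n_k + O(1)$, which both recovers the $\mathcal{O}(\ln n_k)$ bound and identifies the leading correction. Everything else — the sign bound on the surprisal term, the entropy bound, the symmetrisation, and the concavity check — is elementary. It is worth emphasising that Lemma~\ref{lem::A_k-bound} as stated requires none of the implicit-equation work: the two one-line bounds of the first paragraph already give it, and the remaining analysis is needed only to certify tightness and to produce the closed-form expression that the appendix records.
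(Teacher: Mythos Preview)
Your argument is correct. For the headline $\mathcal{O}(\ln n_k)$ statement you take a genuinely shorter route than the paper: you obtain $\bar A_k\le \ln n_k$ in one line by bounding the two summands separately (the surprisal term by $0$, the entropy by $\ln n_k$), and then certify tightness by a direct evaluation. The paper instead first derives the exact closed form via a Lagrangian reduction to one dimension, expresses the maximiser through the Lambert $W$ function, and only then invokes sharp inequalities for $W$ from the literature to deduce $\bar A_k<\ln n_k$. Your approach buys simplicity and self-containedness for the asymptotic bound; the paper's approach buys the exact value $\bar A_k = \bigl(\ln(n_k-1)-\ln W((n_k-1)/e)\bigr)\big/\bigl(1+1/W((n_k-1)/e)\bigr)$ along the way. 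For the full version you end up in the same place as the paper --- the same one-dimensional reduction to $g(p)$ on the segment from the uniform strategy to a vertex --- except that you justify the reduction by concavity of the entropy at fixed $p$ (and a first-order tie-breaking argument) rather than by writing out the full Lagrangian, and you leave the stationary point $p^\star$ implicit rather than solving for it via $W$. Either justification works; yours is slightly more direct, while the paper's yields the closed-form constant.
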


\subsection{Updating Exploration Rates} \label{sec::iterative-scheme}

In this section, we use Theorem \ref{thm::main-thm} and Proposition \ref{prop::e-NE} to devise a scheme to update exploration rates so that which Q-Learning dynamics are driven `close' to a NE. The full algorithm is provided in the Appendix, with the main ideas discussed here. Starting with a choice of $T_k$ which satisfies any of the conditions in Theorem \ref{thm::main-thm}, it is clear that agents will achieve an $\epsilon$-NE where $\epsilon$ is given by (\ref{eqn::e-NE}). First, we notice that the value of $\epsilon$ depends only on the agent who maximises $T_k A_k(\NE_k)$. Therefore, it is natural to decrease the exploration rate for only this agent. We repeat this process until another agent maximises $T_k A_k(\NE_k)$, in which case this becomes the agent whose exploration rate is decreased, or the learning dynamics no longer achieve asymptotic convergence, at which point the learning process stops, and the last found QRE is chosen as the final joint strategy of all agents. To evaluate whether the system achieves asymptotic convergence for any choice of $T_k$, a window of the final $H$-iterations of learning is recorded and, for each $k \in \agentset$, $i \in \actionset{k}$ the relative difference between the maximum and minimum value of $x_{ki}$ across the window is determined. If this value is less some tolerance, the system is said to have converged. More formally the dynamics are said to have converged if
\begin{equation}\label{eqn::conv-criteria}
\left( \frac{\max_{t\in H} x_{ki}(t) - \min_{t \in H} x_{ki}(t)}{\max_{t \in H} x_{ki}(t)} \right) < l.
\end{equation}
By following this process, agents iteratively reach QRE which are closer approximations of an NE. We evaluate this process in our experiments and show that, even in large scale games, the $\epsilon$-approximation of the NE improves leading to optimal, and stable, learned joint strategies.

\section{Experiments} \label{sec::experiments}
\begin{figure*}[t]
	\centering
	\begin{subfigure}[b]{0.225\textwidth}
		\centering
		\includegraphics[width=0.8\textwidth]{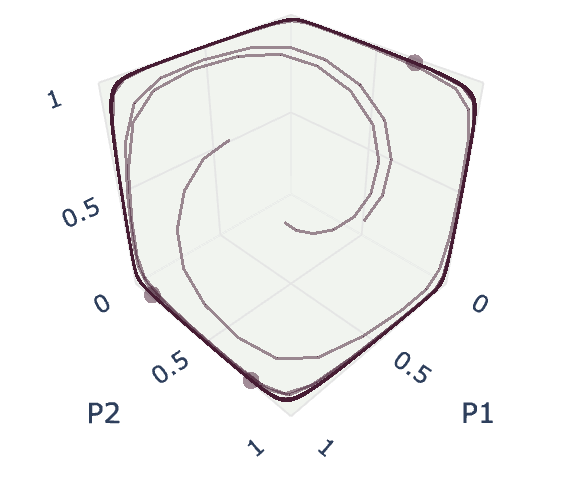}
		\caption*{$T = 0.7$}
	\end{subfigure}
	\begin{subfigure}[b]{0.225\textwidth}
		\centering
		\includegraphics[width=0.8\textwidth]{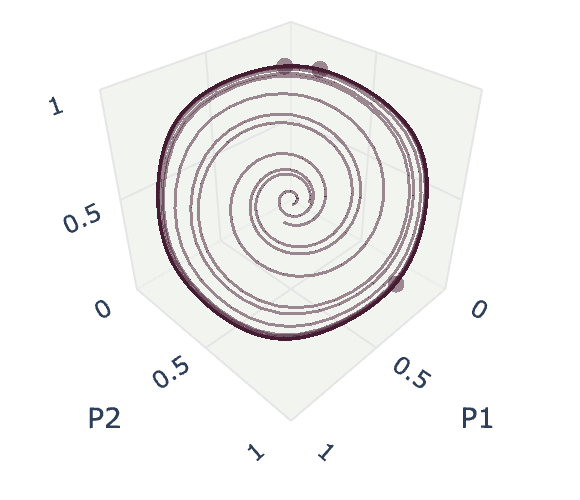}
		\caption*{$T = 1.5$}
	\end{subfigure}
	\begin{subfigure}[b]{0.225\textwidth}
		\centering
		\includegraphics[width=0.8\textwidth]{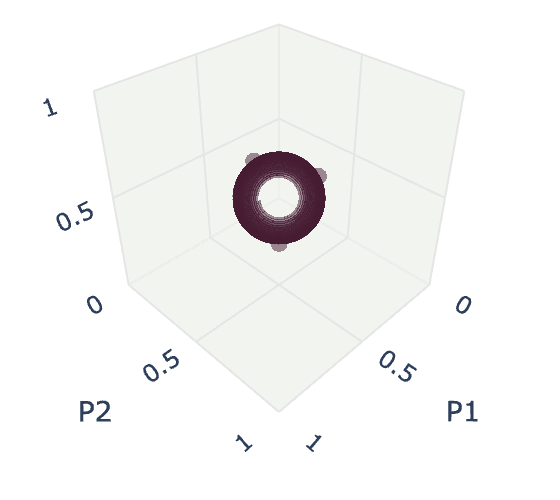}
		\caption*{$T = 2$}
	\end{subfigure}
	\begin{subfigure}[b]{0.225\textwidth}
		\centering
		\includegraphics[width=0.8\textwidth]{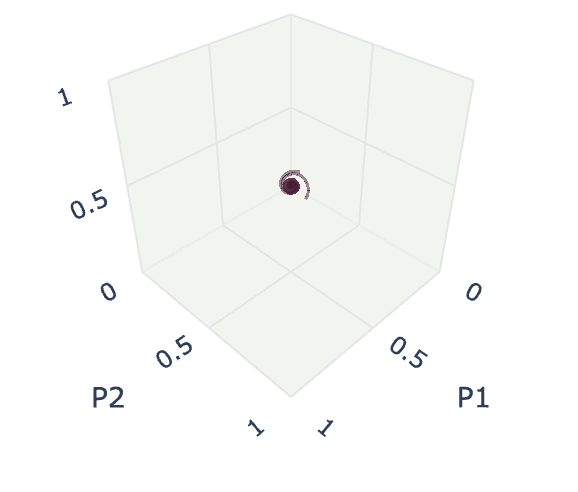}
		\caption*{$T = 2.7$}
	\end{subfigure}

 \caption{Trajectories of Q-Learning in a three agent  Network Chakraborty Game with $\alpha =
 7, \beta = 8.5$. Axes denote the probabilities with which
 each player chooses their first action.}\label{fig::chakraborty-traj}
\end{figure*}

\begin{figure*}[t!]
	\centering
	\begin{subfigure}[b]{0.45\textwidth}
		\centering
		\includegraphics[width=0.85\textwidth]{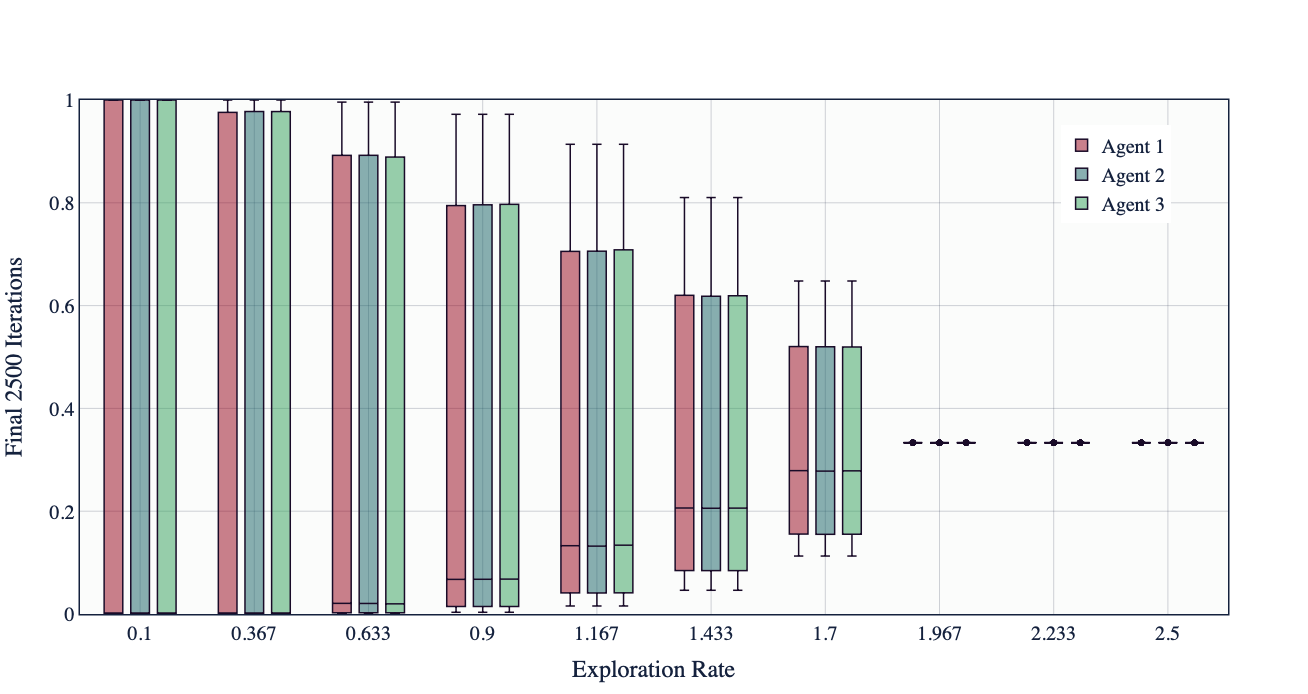}
		\caption*{Fully Connected Network}
	\end{subfigure}
	\begin{subfigure}[b]{0.45\textwidth}
		\centering
		\includegraphics[width=0.85\textwidth]{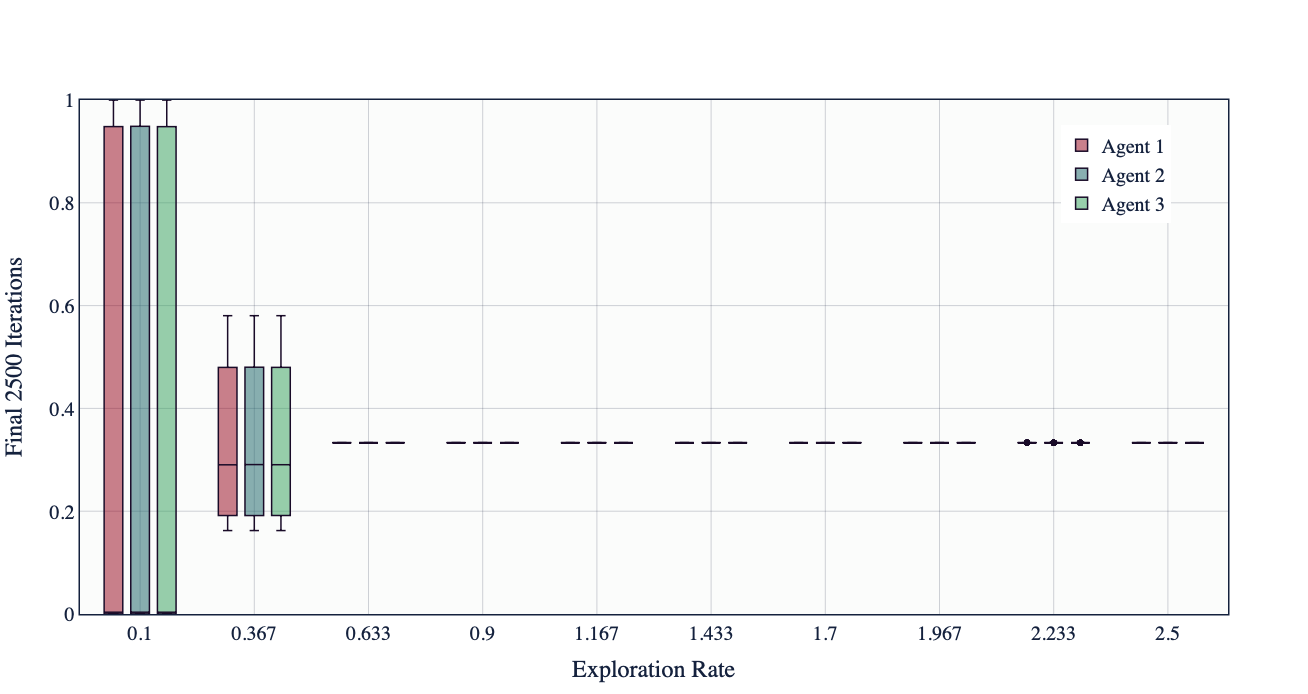}
		\caption*{Ring Network}
	\end{subfigure}
 	\begin{subfigure}[b]{0.45\textwidth}
		\centering
		\includegraphics[width=0.85\textwidth]{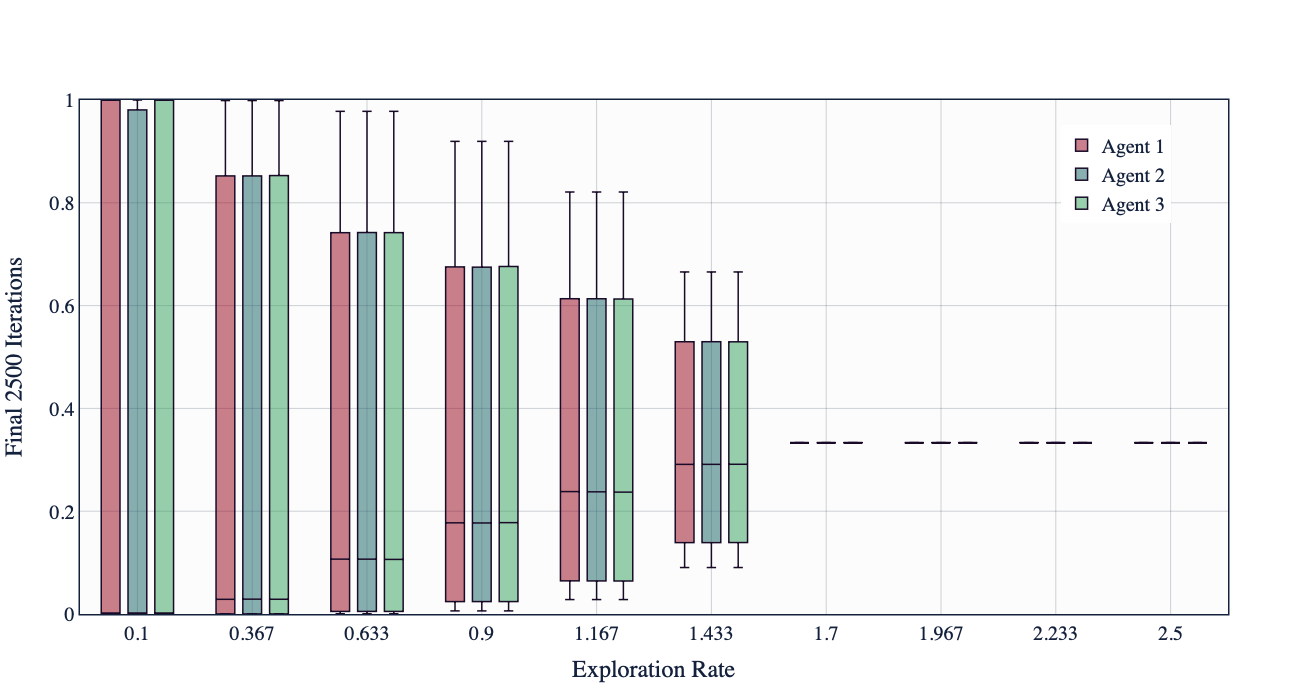}
		\caption*{Fully Connected Network}
	\end{subfigure}
	\begin{subfigure}[b]{0.45\textwidth}
		\centering
		\includegraphics[width=0.85\textwidth]{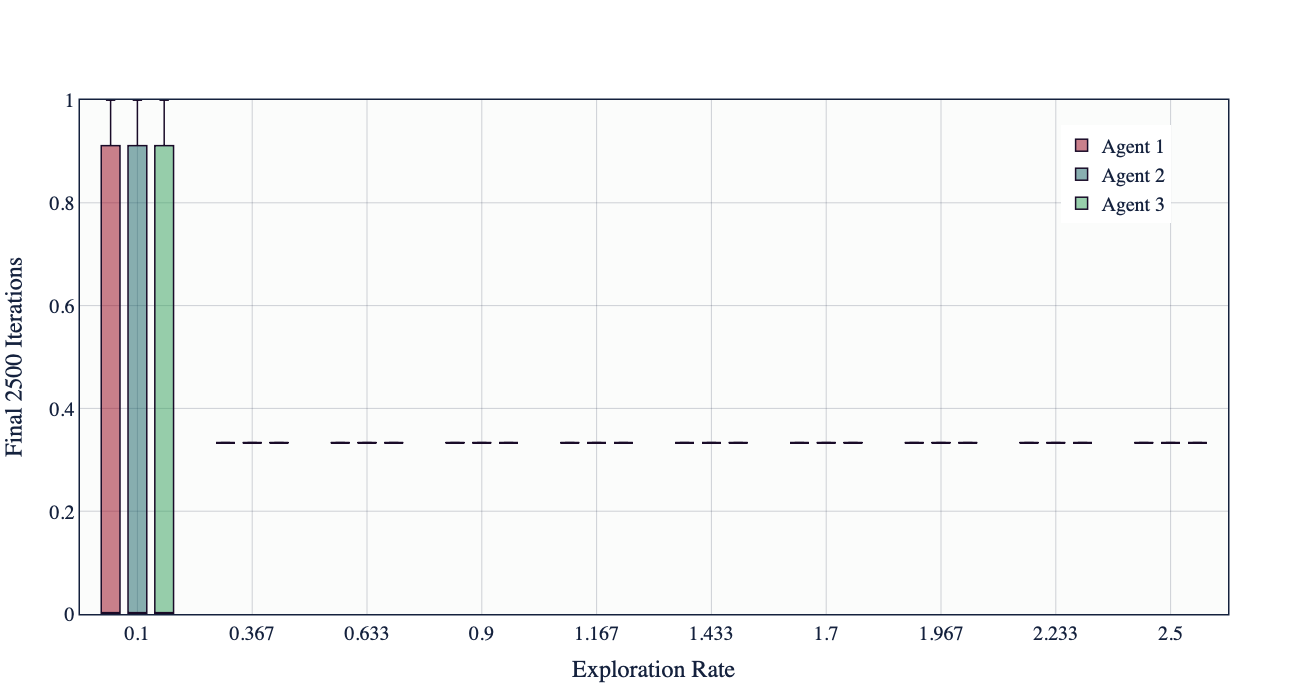}
		\caption*{Ring Network}
	\end{subfigure}
 \caption{Q-Learning in the (Top) Network Shapley Game (Bottom) Network Sato Game with 15 agents.
 The boxplot depicts the probabilities with which three of the agents play their first action in the
 final 2500 iterations of learning. This is depicted for varying choices of exploration rate $T$.}
 \label{fig::shapley-box}
\end{figure*}
\begin{figure*}[t!]
	\centering
	\begin{subfigure}[b]{0.45\textwidth}
		\centering
		\includegraphics[width=0.85\textwidth]{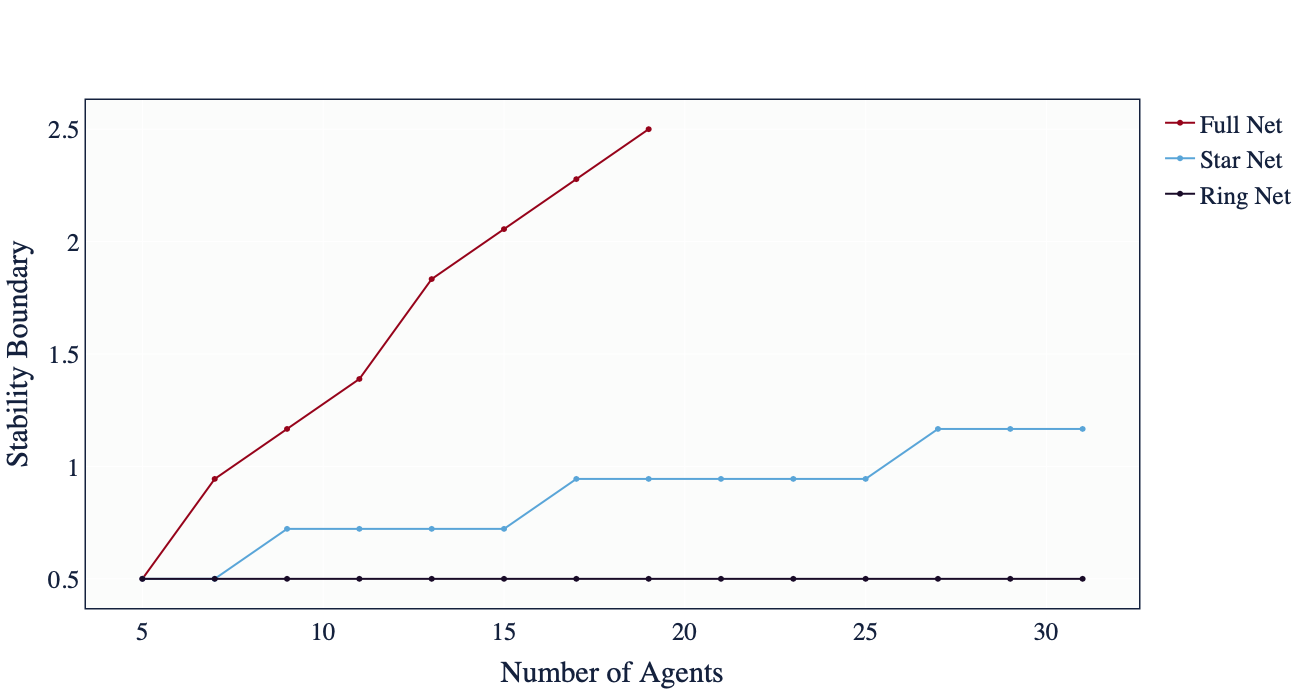}
		\caption*{Shapley Game}
	\end{subfigure}
	\begin{subfigure}[b]{0.45\textwidth}
		\centering
		\includegraphics[width=0.85\textwidth]{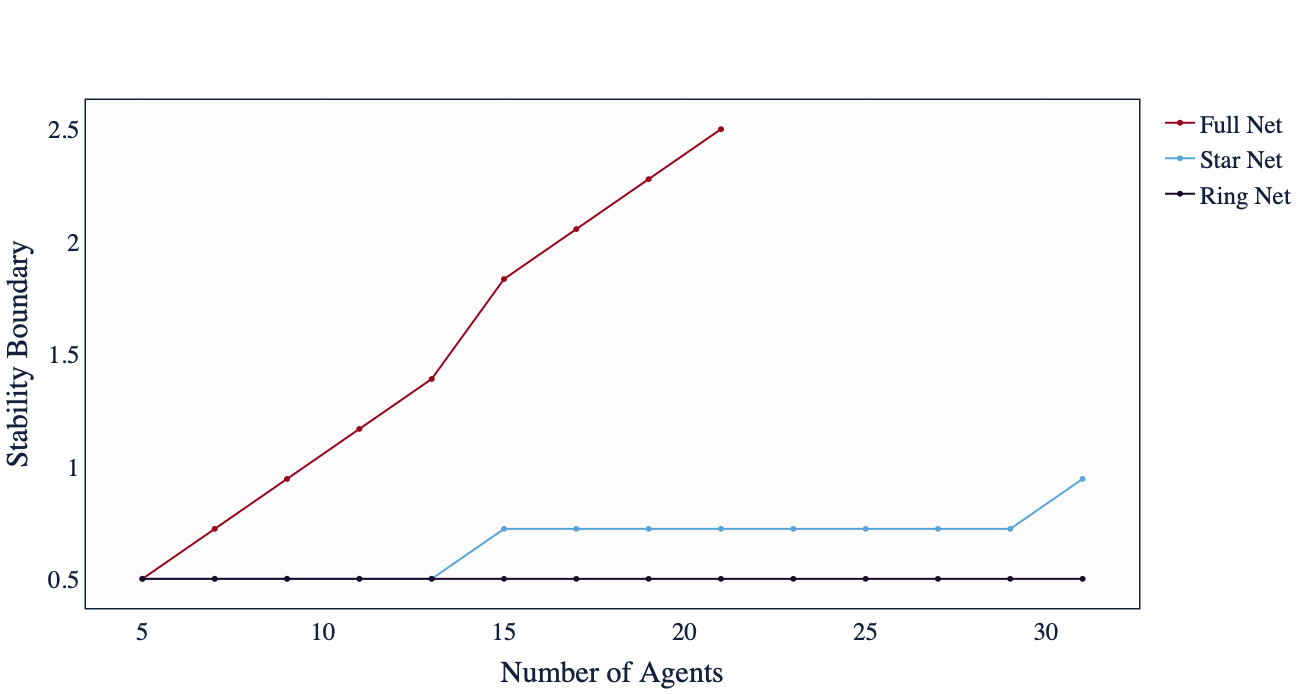}
		\caption*{Sato Game}
	\end{subfigure}
 \caption{Empirically determined stability boundary of Q-Learning measured against the number of
 agents. Q-Learning is iterated with 10 initial conditions and the game is considered to have
 converged if, for all agents and initial conditions (\ref{eqn::conv-criteria}) holds with $l = 1
 \times 10^{-5}$. The Fully Connected Network, Star Network and Ring Networks are considered.}
 \label{fig::empirical-boundary}
\end{figure*}

\begin{figure*}[t!]
	\centering


 	\begin{subfigure}[b]{0.45\textwidth}
		\centering
		\includegraphics[width=0.85\textwidth]{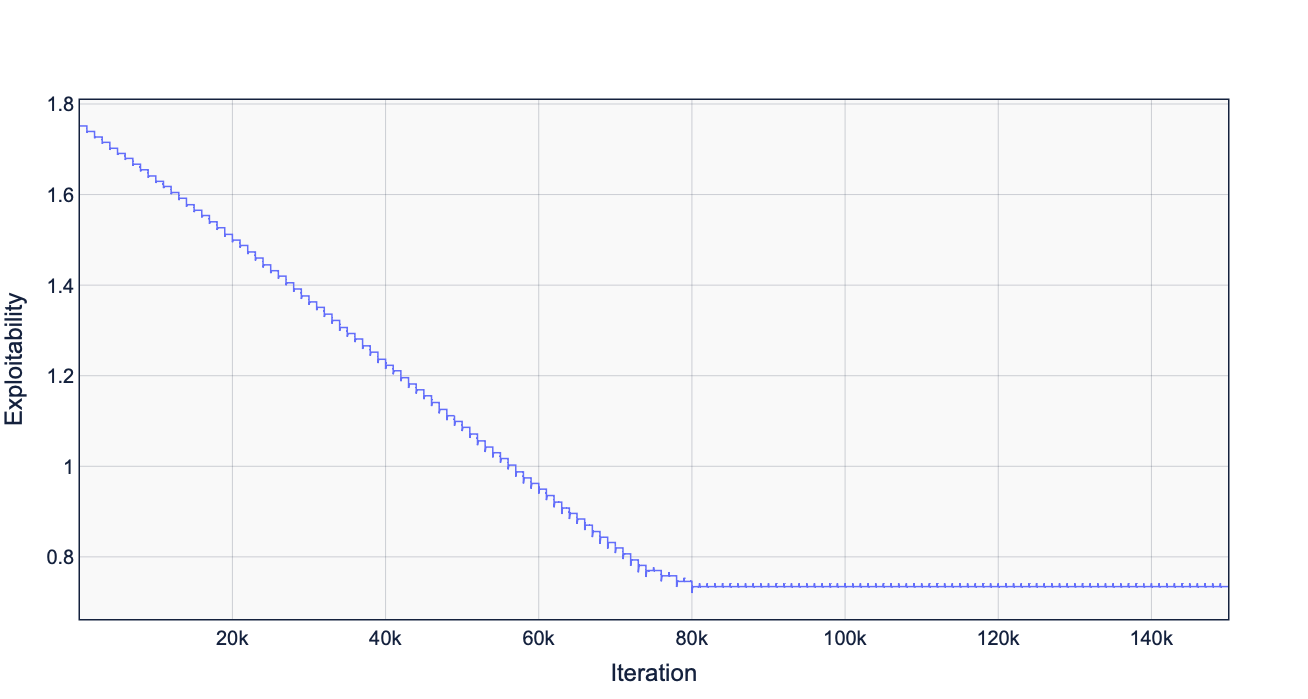}
	\end{subfigure}
	\begin{subfigure}[b]{0.45\textwidth}
		\centering
		\includegraphics[width=0.85\textwidth]{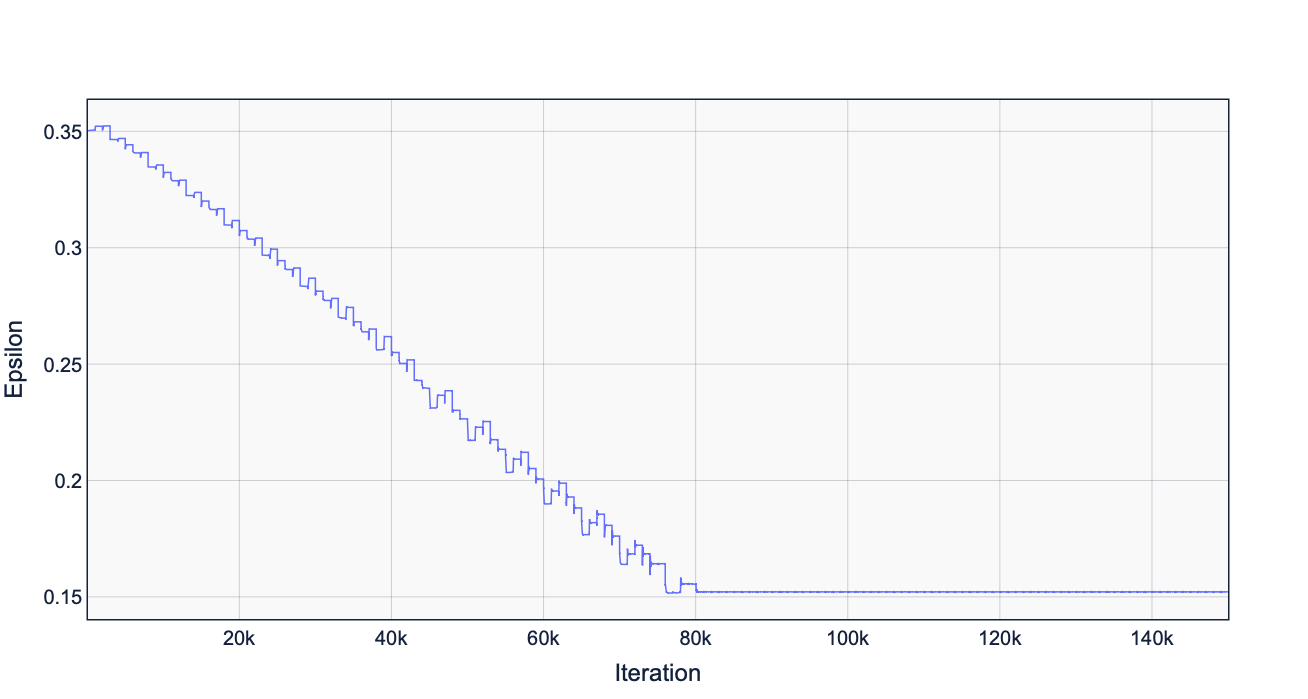}
	\end{subfigure}
 
 \caption{Measures of `closeness' to Nash Equilibrium as the exploration update scheme is applied to the Network Chakraborty Game with five agents and $\alpha=2.5, \beta=1.5$. (Left) Distance to NE measured by exploitability (\ref{eqn::exploitability}) of the joint strategy $\x(t)$. (Right) $\epsilon$ as defined by (\ref{eqn::e-NE}). Both metrics decreases as exploration rates are updated until condition (\ref{eqn::conv-criteria}) fails at approx. $8 \times 10^4$ iterations, after which learning is halted.}
 \label{fig::centralised-scheme}
\end{figure*}

\begin{figure}[t!]
	\centering
	\begin{subfigure}[b]{\columnwidth}
		\centering
		\includegraphics[width=0.85\columnwidth]{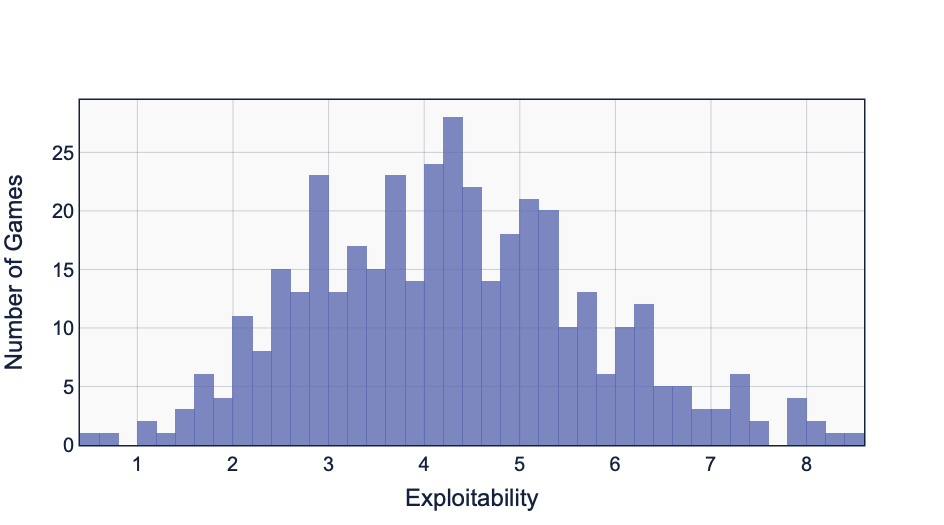}
	\end{subfigure}
	\begin{subfigure}[b]{\columnwidth}
		\centering
		\includegraphics[width=0.85\columnwidth]{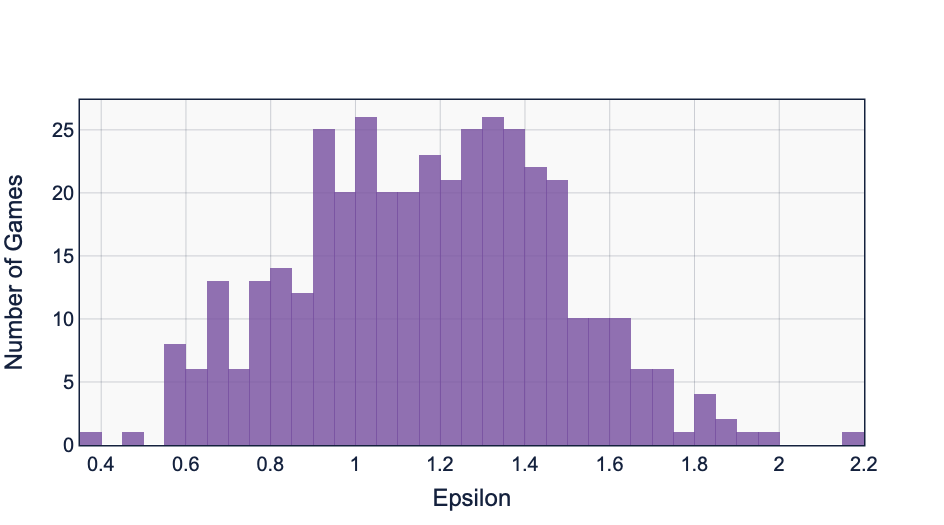}
	\end{subfigure}
 
 \caption{Histograms depicting the decrease of (Top) Exploitability and (Bottom) $\epsilon$ over $150,000$ iterations of learning across $500$ randomly generated network games with payoffs bounded in $[0, 5]$.}
 \label{fig::centralised-scheme-hist}
\end{figure}

We first visualise and exemplify the implications of our main result, Theorem \ref{thm::main-thm}, on a
number of games. In particular, we simulate the Q-Learning algorithm described in Section
\ref{sec::LearningModel} and show that Q-Learning asymptotically approaches a unique QRE so long as
the exploration rates are sufficiently large. We show, in particular, that the amount of exploration
required depends on the structure of the network rather than the total number of agents.

\begin{remark}
    In our experiments, we take all agents $k$ to have the same exploration rate $T$ and so drop the
    $k$ notation. As all bounds in Theorem \ref{thm::main-thm} must hold for all agents $k$, this assumption does not affect the generality of the results.
\end{remark}

\subsection{Convergence of Q-Learning} 

We first illustrate the convergence of Q-Learning using the \emph{Network Chakraborty Game}, which was analysed in \cite{chakraborty:deviations} to characterise chaos in
learning dynamics. Formally, the payoff to each agent $k$ is defined as
\begin{align*}
    &u_k(\x_k, \x_{-k}) = \x_k^\top \A \x_l, \; l = k-1 \mod N, \\
    &A = \begin{pmatrix}
        1 & \alpha \\ \beta & 0
    \end{pmatrix}, \; \alpha, \beta \in \R.
\end{align*}

We visualise the trajectories generated by running Q-Learning in Figure \ref{fig::chakraborty-traj}
for a three agent network and choosing $\alpha = 7, \beta = 8.5$. It can be
seen that, for low exploration rates, the dynamics reach a limit cycle around the boundary of the
simplex. However, as exploration increases, the dynamics are eventually driven towards a fixed point
for all initial conditions. 

\paragraph{Network Shapley Game} In the following example, each edge of the network game has
associated the same pair of matrices $A, B$ where
\begin{eqnarray*} 
A=\begin{pmatrix} 1 & 0 & \beta \\
            \beta & 1 & 0 \\ 0 & \beta & 1 \end{pmatrix}, \, B=\begin{pmatrix} -\beta & 1 & 0 \\ 0 &
            -\beta & 1 \\ 1 & 0 & -\beta \end{pmatrix},
\end{eqnarray*}    
where $\beta \in (0, 1)$.

This has been analysed in the two-agent case in \cite{shapley:twoperson}, where it was shown that
the \emph{Fictitious Play} learning dynamic do not converge to an equilibrium. \cite{hussain:aamas}
analysed the network variant of this game for the case of a ring network and numerically showed that
convergence can be achieved by Q-Learning through sufficient exploration. 

In Figure \ref{fig::shapley-box} we examine both a fully connected network and a ring network with 15 agents. In this case, the dynamics evolve in $\R^{45}$ which prohibits
a visualisation of the complete dynamics. To resolve this, we instead take three representative agents and depict the \emph{spread} of their strategies in the final 2500 iterations of learning. A bar which stretches from $0$ to $1$ indicates that the dynamics are spread across the simplex which may occur in a limit cycle or chaotic orbit that approaches the boundary of the simplex (c.f. Figure \ref{fig::chakraborty-traj}). 
These are seen to occur for low exploration rates. By contrast, when exploration rates are increased beyond a certain threshold, a flat line is seen which indicates that the dynamics are stationary, i.e. a fixed point has been reached. 
Importantly, the boundary at which equilibrium behaviour
occurs is higher in the fully connected network, where $\left\lVert G \right\rVert_\infty = 14$ than
in the ring network, where $\left\lVert G \right\rVert_\infty = 2$. This indicates that larger numbers of agents may be introduced in the environment without impacting stability, so long as a weakly connected network is chosen. 

\paragraph{Network Sato Game} We also analyse the behaviour of Q-Learning in a variant of the game
introduced in \cite{sato:rps}, where it was shown that chaotic behaviour is exhibited by learning
dynamics in the two-agent case. We extend this towards a network game by associating each edge with
the payoff matrices $A, B$ given by
\begin{eqnarray*} 
A=\begin{pmatrix} \epsX & -1 & 1 \\
1 & \epsX & -1 \\
-1 & 1 & \epsX \end{pmatrix}, \, B=\begin{pmatrix} \epsY & -1 & 1 \\
1 & \epsY & -1 \\
-1 & 1 & \epsY \end{pmatrix},
\end{eqnarray*}   
where $\epsX, \epsY \in \R$. Notice that for $\epsX = \epsY = 0$, this corresponds to the classic
Rock-Paper-Scissors game which is zero-sum so that, by Corollary 1, Q-Learning will converge to an
equilibrium with any positive exploration rates. We choose $\epsX=0.01, \epsY=-0.05$ in order to
stay consistent with \cite{sato:rps} which showed chaotic dynamics for this choice. The boxplot once
again shows that sufficient exploration leads to convergence of all initial conditions. However, the
amount of exploration required is significantly smaller than that of the Network Shapley Game. This
can be seen as being due to the significantly lower interaction coefficient of the Sato game
$\sigma_I = 0.05$ as compared to the Shapley game $\sigma_I = 2$.

\subsection{Stability Boundary} 

In these experiments we empirically determine the dependence of the
stability boundary w.r.t.~the number of agents. For accurate comparison with Figure
\ref{fig::stability-boundary}, we consider the Network Sato and Shapley Games in a fully-connected
network, star network and ring network. We iterate Q-Learning for various values of $T$ and
determine whether the dynamics have converged. To evaluate convergence, we apply (\ref{eqn::conv-criteria}) with $|H|=2500$ iterations and $l = 1\times10^{-5}$. In Figure \ref{fig::empirical-boundary}, we plot the smallest exploration rate $T$ for which (\ref{eqn::conv-criteria}) holds for varying choices of $N$. It can be
seen that the prediction of Theorem \ref{thm::main-thm} holds, in that the number of agents plays no
impact for the ring network whereas the increase in the fully-connected network is linear in $N$. In
addition, it is clear that the stability boundary increases slower in the Sato game than in the
Shapley game, owing to the smaller interaction coefficient. 

An additional point to note is that the stability boundary for the star network increases slower
than the fully-connected network in all games. We anticipate that this is due to the fact that the
$2$-norm $\lVert G \rVert_2$ in the star network is smaller than that of the fully-connected network
(c.f. Figure \ref{fig::example-networks}).

\subsection{Effectiveness of Exploration Update Scheme}

In these experiments, we evaluate the exploration update scheme outlined in Section \ref{sec::iterative-scheme}. using $|H|=500$ and $l = 1 \times 10^{-5}$. In Figure \ref{fig::centralised-scheme} we consider the Network Chakraborty Game with $\alpha=2.5, \beta=1.5$ We measure the `distance' between the strategy $\x(t)$ and the NE using two metrics: first by $\epsilon$ as given in (\ref{eqn::e-NE}) and second through \emph{exploitability} $\phi(\x) $ given as
\begin{equation} \label{eqn::exploitability}
    \phi(\x) = \sum_k \max_{\y_k \in \Delta_k} u_k(\y_k, \x_{-k}) - u_k(\x_k, \x_{-k}).
\end{equation}
The exploitability is used, sometimes under different names, as a measure of distance to the NE \cite{gemp:sample,perrin:fp} and, from (\ref{eqn::e-NE}) it can be seen that $\phi(\NE) = \sum_k T_k A_k(\NE_k)$ for any QRE $\NE$. The reason for examining $\phi$ is that its definition holds for any strategy $\x \in \Delta$, whilst (\ref{eqn::e-NE}) only holds at a QRE $\NE \in \Delta$. It can be seen in all cases that both metrics decrease as agents learn, until condition (\ref{eqn::conv-criteria}) is no longer satisfied. To examine the generality of this performance, we evaluate the exploration update scheme in 500 randomly generated network games with 15 agents, two actions and a ring structure. Exploitability and $\epsilon$ are evaluated at the first iteration and final iteration and the difference is recorded. Figure \ref{fig::centralised-scheme-hist} plots the decrease of both metrics as a histogram across all 500 games. These experiments (as well as additional presented in Appendix \ref{sec::add-expt}) suggest that, if exploration rates are updated according the scheme in Section \ref{sec::iterative-scheme}, independent learning agents may learn stable equilibrium strategies which closely approximate Nash Equilibria. 


\section{Conclusion}

In this paper we show that the Q-Learning dynamics is guaranteed to converge in arbitrary network
games, independent of any restrictive assumptions such as network zero-sum or potential. This allows
us to make a branching statement which applies across all network games.

In particular, our analysis shows that convergence of the Q-Learning dynamics can be achieved
through sufficient exploration, where the bound depends on the pairwise interaction between agents
and the structure of the network. Overall, compared to the literature, we are able to tighten the
bound on sufficient exploration and show that, under certain network interactions, the bound does
not increase with the total number of agents. This allows for stability to be guaranteed in network
games with many players.

A fruitful direction for future research would be to capture the effect of the payoffs through a
tighter bound than the interaction coefficient and to explore further how properties of the network
affect the bound. In addition, whilst there is still much to learn in the behaviour of Q-Learning in
stateless games, the introduction of the state variable in the Q-update is a valuable next step.



\begin{acks}
Aamal Hussain and Francesco Belardinelli are partly funded by the UKRI Centre for Doctoral Training in Safe and Trusted Artificial Intelligence (grant number EP/S023356/1). Dan Leonte acknowledges support from the EPSRC Centre for Doctoral Training in Mathematics of Random Systems: Analysis, Modelling and Simulation (EP/S023925/1).
This research was supported in part by the National Research Foundation, Singapore and DSO National Laboratories under its AI Singapore Program (AISG Award No: AISG2-RP-2020-016), grant PIESGP-AI-2020-01, AME Programmatic Fund (Grant No.A20H6b0151) from A*STAR.
\end{acks}



\bibliographystyle{ACM-Reference-Format} 
\balance
\bibliography{references}


\newpage
\onecolumn
\appendix
\section{Preliminaries}

In this section we outline the various tools and properties that we will use in
our proofs. 

\subsection{Variational Inequalities and Monotone Games}
Our aim in this work is to analyse the Q-Learning dynamics in network games without invoking any
particular structure on the payoffs (e.g.~zero-sum). To do this, we employ the \emph{Variational
Inequality} approach, which has been successfully applied towards the analysis of network games
\cite{melo:network,parise:network} as well as learning in games
\cite{mertikopoulos:finite,sorin:composite,hussain:aamas}. 

\begin{definition}[Variational Inequality] 
     Consider a set $\X \subset \R^d$ and a map $F \, : \X \rightarrow \R^d$. The Variational
     Inequality (VI) problem $VI(\X, F)$ is given as
        \begin{equation}\label{eqn::VIdef}
            \langle \x - \NE, F(\NE) \rangle \geq 0, \hspace{0.5cm} \text{ for all } \x \in \X.
        \end{equation}
    We say that $\NE \in \X$ belongs to the set of solutions to a variational inequality problem
    $VI(\X, F)$ if it satisfies (\ref{eqn::VIdef}).
\end{definition}

The premise of the variational approach to game theory \cite{facchinei:VI, Rosen1965ExistenceGames}
is that the problem of finding equilibria of games can be reformulated as determining the set of
solutions to a VI problem. This is done by choosing associating the set $\X$ with $\Delta$ and the
map $F$ with the \emph{pseudo-gradient} of the game.

\begin{definition}[Pseudo-Gradient Map]
    The pseudo-gradient map of a game $\game = (\agentset, \edgeset, (u_k, \actionset{k})_{k \in
\agentset})$ is given by $F(\x) = (F_k(\x))_{k \in \agentset} = (-D_{\x_k} u_k(\x_k, \x_{-k}))_{k
\in \agentset}$.
\end{definition}
The advantage of this formulation is that we can apply results from the study of Variational
Inequalities to determine properties of the game. These results rely solely on the form of the
pseudo-gradient map and so can generalise results which assume a potential or zero-sum structure of
the game \cite{hussain:aamas,kadan:exponential}. 
\begin{lemma}[\cite{melo:qre}]
    Consider a game $\game = (\agentset, \edgeset, (u_k, \actionset{k})_{k \in \agentset})$ and for
    any $T_1, \ldots, T_N > 0$, let $F$ be the pseudo-gradient map of $\game^H$. Then $\NE \in
    \Delta$ is a QRE of $\game$ if and only if $\NE$ is a solution to $VI(\Delta, F)$.
\end{lemma}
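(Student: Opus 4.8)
The plan is to reduce the equivalence to the classical correspondence between Nash equilibria of concave games and solutions of the associated variational inequality \cite{Rosen1965ExistenceGames,facchinei:VI}, applied to the entropy-perturbed game $\game^H$. By Proposition~\ref{prop::melo-qre}, $\NE$ is a QRE of $\game$ if and only if $\NE$ is a Nash equilibrium of $\game^H$, so it suffices to prove that $\NE$ is a Nash equilibrium of $\game^H$ if and only if $\NE$ solves $VI(\Delta, F)$, where $F(\x) = (-D_{\x_k} u_k^H(\x_k,\x_{-k}))_{k \in \agentset}$ is the pseudo-gradient of $\game^H$. Before splitting into the two implications I would record the structural facts that make the argument work: for each fixed $\x_{-k}$ the map $\x_k \mapsto u_k^H(\x_k,\x_{-k}) = u_k(\x_k,\x_{-k}) - T_k\langle\x_k,\ln\x_k\rangle$ is strictly concave on $\Delta_k$ (it is affine in $\x_k$ in a polymatrix game minus a strictly convex entropy term, $T_k > 0$) and continuously differentiable on $\mathrm{ri}(\Delta_k)$ with $\partial u_k^H/\partial x_{ki} = r_{ki}(\x_{-k}) - T_k(\ln x_{ki}+1)$; moreover, since this partial derivative tends to $+\infty$ as $x_{ki}\to 0^+$, every maximiser of $u_k^H(\cdot,\x_{-k})$ over $\Delta_k$ lies in $\mathrm{ri}(\Delta_k)$, and the closed-form QRE expression $\NE_{ki}\propto\exp(r_{ki}(\NE_{-k})/T_k) > 0$ shows every QRE lies in $\mathrm{ri}(\Delta)$. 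Hence throughout we may restrict attention to the relative interior, where $F$ is defined.

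For the ``only if'' direction, suppose $\NE$ is a Nash equilibrium of $\game^H$. Then for each $k$, $\NE_k\in\mathrm{ri}(\Delta_k)$ maximises the concave, differentiable function $u_k^H(\cdot,\NE_{-k})$ over $\Delta_k$, so the first-order optimality condition gives $\langle\y_k-\NE_k,\ D_{\x_k}u_k^H(\NE_k,\NE_{-k})\rangle\le 0$ for all $\y_k\in\Delta_k$. Summing over $k\in\agentset$ and using $\Delta = \prod_k\Delta_k$ yields $\langle\y-\NE,\ -F(\NE)\rangle\le 0$ for all $\y\in\Delta$, i.e.\ $\langle\y-\NE, F(\NE)\rangle\ge 0$, which is precisely (\ref{eqn::VIdef}). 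Conversely, suppose $\NE$ solves $VI(\Delta,F)$; since $F$ is defined only on $\mathrm{ri}(\Delta)$ we have $\NE\in\mathrm{ri}(\Delta)$, and testing (\ref{eqn::VIdef}) against joint strategies of the form $\y = (\y_k,\NE_{-k})$ isolates, for each $k$, the inequality $\langle\y_k-\NE_k,\ D_{\x_k}u_k^H(\NE_k,\NE_{-k})\rangle\le 0$ for all $\y_k\in\Delta_k$. By concavity of $u_k^H(\cdot,\NE_{-k})$, the gradient inequality at the interior point $\NE_k$ gives $u_k^H(\y_k,\NE_{-k})\le u_k^H(\NE_k,\NE_{-k}) + \langle\y_k-\NE_k,\ D_{\x_k}u_k^H(\NE_k,\NE_{-k})\rangle \le u_k^H(\NE_k,\NE_{-k})$ for every $\y_k\in\Delta_k$, so $\NE_k\in\arg\max_{\y_k\in\Delta_k}u_k^H(\y_k,\NE_{-k})$. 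Thus $\NE$ is a Nash equilibrium of $\game^H$ and, by Proposition~\ref{prop::melo-qre}, a QRE of $\game$.

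I expect the only delicate point to be the boundary behaviour of the entropy regulariser. Specifically, one must argue carefully that $F$, although singular on $\partial\Delta$, poses no difficulty because both QRE and maximisers of each $u_k^H(\cdot,\x_{-k})$ are forced into $\mathrm{ri}(\Delta)$, and that the concavity gradient inequality used in the ``if'' direction remains valid when the comparison point $\y_k$ lies on the boundary of $\Delta_k$ (where $\nabla u_k^H$ is undefined but $u_k^H$ is still finite and continuous under the convention $0\ln 0 = 0$). The remaining ingredients --- the block decomposition of the product variational inequality and the first-order characterisation of maximisers of concave functions over convex sets --- are routine.
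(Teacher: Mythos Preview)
The paper does not give its own proof of this lemma: it is simply quoted from \cite{melo:qre} and used as a black box in the appendix. So there is no ``paper's proof'' to compare against. Your argument is correct and is exactly the standard route --- reduce to the Nash equilibrium of $\game^H$ via Proposition~\ref{prop::melo-qre}, then invoke the first-order/variational characterisation of Nash equilibria in concave games \cite{Rosen1965ExistenceGames,facchinei:VI} --- and your handling of the entropy singularity on $\partial\Delta$ (interiority of QRE and of maximisers of $u_k^H(\cdot,\x_{-k})$, continuity of $u_k^H$ under $0\ln 0 = 0$) is the right way to make the argument rigorous.
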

With this correspondence in place, we can analyse properties of the pseudo-gradient map and its
relation to properties of the game and the learning dynamic. One important property is
\emph{monotonicity}.
\begin{definition}
    A map $F \, : \X \rightarrow \R^d$ is
    \begin{enumerate}
        \item \emph{Monotone} if, for all $\x, \y \in \X$,
        \begin{equation*}
            \langle F(\x) - F(\y), \x - \y \rangle \geq 0.
        \end{equation*}
        \item \emph{Strongly Monotone} with constant $\alpha > 0$ if, for all $\x, \y \in \X$,
        \begin{equation*}
            \langle F(\x) - F(\y), \x - \y \rangle \geq \alpha ||\x - \y||^2_2.
        \end{equation*}
    \end{enumerate}
\end{definition}
\begin{definition}[Monotone Game]
    A game $\game$ is \emph{monotone} if its pseudo-gradient map is monotone.
\end{definition}
A large part of our analysis will be in determining conditions under which the pseudo-gradient map
is monotone. Upon doing so, we are able to employ the following results.
\begin{lemma}[\cite{melo:qre}] \label{lem::unique-qre} Consider a game $\game = (\agentset,
    \edgeset, (u_k, \actionset{k})_{k \in \agentset})$ and for any $T_1, \ldots, T_N > 0$, let $F$
    be the pseudo-gradient map of $\game^H$. $\game$ has a unique QRE $\NE \in \Delta$ if $F$ is
    strongly monotone with any $\alpha>0$.
\end{lemma}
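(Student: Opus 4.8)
The plan is to route everything through the variational-inequality correspondence established just above. By the lemma of \cite{melo:qre} quoted earlier, $\NE \in \Delta$ is a QRE of $\game$ precisely when it solves $VI(\Delta, F)$, where $F$ is the pseudo-gradient map of $\game^H$. Hence it suffices to show that this VI problem has exactly one solution whenever $F$ is strongly monotone, and I would split this into an existence claim and a uniqueness claim.

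For uniqueness, suppose $\NE, \NE' \in \Delta$ both solve $VI(\Delta, F)$. Taking $\x = \NE'$ in the defining inequality for $\NE$ and $\x = \NE$ in the defining inequality for $\NE'$ gives
\begin{align*}
    \langle \NE' - \NE, F(\NE) \rangle &\geq 0, \\
    \langle \NE - \NE', F(\NE') \rangle &\geq 0.
\end{align*}
Adding these two inequalities and rearranging yields $\langle F(\NE) - F(\NE'), \NE - \NE' \rangle \leq 0$. On the other hand, strong monotonicity with constant $\alpha > 0$ gives $\langle F(\NE) - F(\NE'), \NE - \NE' \rangle \geq \alpha \lVert \NE - \NE' \rVert_2^2 \geq 0$, so $\lVert \NE - \NE' \rVert_2 = 0$, i.e.\ $\NE = \NE'$. (Strict monotonicity alone already suffices for this step; strong monotonicity with a positive constant is what matters for the quantitative convergence statements used elsewhere.)

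For existence, the cleanest route is to avoid $F$ itself — it is continuous only on the relative interior of $\Delta$, since the entropy term of $\game^H$ contributes a logarithmic gradient that blows up at the boundary — and argue directly with the Boltzmann map $\Phi : \Delta \to \Delta$ defined by $\Phi(\x)_{ki} = \exp(r_{ki}(\x_{-k})/T_k) / \sum_{j \in \actionset{k}} \exp(r_{kj}(\x_{-k})/T_k)$. Each $r_{ki}$ is continuous, so $\Phi$ is a continuous self-map of the nonempty compact convex set $\Delta$, and Brouwer's fixed-point theorem supplies a fixed point, which by definition of QRE is a QRE; since $\exp$ is strictly positive, $\Phi$ maps into the relative interior of $\Delta$, so the fixed point lies where $F$ is well defined and is therefore also a solution of $VI(\Delta, F)$. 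The main obstacle is exactly this boundary subtlety: one must either take the Boltzmann-map detour just described, or appeal to a version of the VI existence theorem that tolerates a pseudo-gradient which diverges at, but points away from, the boundary of the domain. In either case the genuine content of the lemma is the uniqueness argument, existence being standard.
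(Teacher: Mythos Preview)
The paper does not supply its own proof of this lemma; it is quoted as a cited result from \cite{melo:qre}. Your argument is correct and is the standard one: uniqueness is the usual two-point strong-monotonicity estimate, and your Brouwer argument on the continuous, interior-valued Boltzmann map cleanly delivers existence while sidestepping the logarithmic boundary singularity of $F$.
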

\begin{lemma}[\cite{hussain:aamas}] \label{lem::ql-conv} If the game $G$ is \emph{monotone}, then
    the Q-Learning Dynamics (\ref{eqn::QLD}) converge to the unique QRE with any positive
    exploration rates $T_1, \ldots, T_N > 0$.
\end{lemma}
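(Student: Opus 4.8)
The plan is to reduce this dynamical claim to a variational one via the entropy perturbation $\game \mapsto \game^H$, and then close the argument with a Lyapunov function given by the relative entropy to the (unique) QRE. Throughout, write $F$ for the pseudo-gradient of $\game$ and $F^H$ for that of $\game^H$; since $u_k^H(\x) = u_k(\x) - T_k\langle\x_k,\ln\x_k\rangle$, a one-line computation gives $F^H_k(\x) = F_k(\x) + T_k(\ln\x_k + \mathbf{1})$, with $\ln\x_k$ taken componentwise.

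The first step is to upgrade monotonicity of $\game$ to \emph{strong} monotonicity of $\game^H$. The negative entropy $\x_k \mapsto \langle\x_k,\ln\x_k\rangle$ is $1$-strongly convex on $\Delta_k$ with respect to $\lVert\cdot\rVert_1$ (Pinsker), so $\langle\ln\x_k - \ln\y_k,\, \x_k-\y_k\rangle \ge \lVert\x_k-\y_k\rVert_1^2 \ge \lVert\x_k-\y_k\rVert_2^2$. Summing these over $k$, each scaled by $T_k$, against the nonnegative term $\langle F(\x)-F(\y),\x-\y\rangle$ yields $\langle F^H(\x)-F^H(\y),\x-\y\rangle \ge (\min_{k}T_k)\,\lVert\x-\y\rVert_2^2$, i.e.\ $F^H$ is strongly monotone with constant $\alpha := \min_k T_k > 0$. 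By Lemma~\ref{lem::unique-qre}, $\game$ then has a unique QRE $\NE$, and by the QRE--VI correspondence of \cite{melo:qre} it solves $VI(\Delta, F^H)$. Because $F^H_{ki}(\x)\to-\infty$ as $x_{ki}\to0$, testing the variational inequality against $\NE$ perturbed by a small mass toward any action $i$ with $\bar x_{ki}=0$ yields a contradiction; hence $\bar x_{ki}>0$ for all $k,i$, i.e.\ $\NE$ lies in the relative interior of $\Delta$.

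The second step is dynamical: I would first verify that \eqref{eqn::QLD} is precisely the replicator dynamics of $\game^H$, since with $r^H_{ki} := \partial u^H_k/\partial x_{ki} = r_{ki} - T_k(\ln x_{ki}+1)$ the replicator field $\dot x_{ki}/x_{ki} = r^H_{ki}(\x) - \langle\x_k, r^H_k(\x)\rangle$ collapses, using $\sum_j x_{kj}=1$, to the right-hand side of \eqref{eqn::QLD}. I then take
\begin{equation*}
  \Psi(\x) \;=\; \sum_{k\in\agentset}\sum_{i\in\actionset{k}} \bar x_{ki}\,\ln\frac{\bar x_{ki}}{x_{ki}},
\end{equation*}
which is smooth and finite on the interior of $\Delta$, nonnegative, zero only at $\NE$, and tends to $+\infty$ at $\partial\Delta$ (because $\NE$ is interior). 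Differentiating along \eqref{eqn::QLD} in replicator form gives $\dot\Psi = -\sum_k\langle\NE_k - \x_k,\, r^H_k(\x)\rangle = -\langle\x-\NE,\, F^H(\x)\rangle$; strong monotonicity together with $\langle\x-\NE,\, F^H(\NE)\rangle\ge0$ gives $\langle\x-\NE,\, F^H(\x)\rangle \ge \alpha\lVert\x-\NE\rVert_2^2$, so $\dot\Psi \le -\alpha\lVert\x-\NE\rVert_2^2 < 0$ for $\x\neq\NE$. For any interior initial condition the orbit then stays in a compact sublevel set of $\Psi$ contained in the interior, so the solution is global, and $\Psi$ bounded below and strictly decreasing off $\NE$ forces $\x(t)\to\NE$ by the LaSalle invariance principle. (The boundary faces of $\Delta$ are forward-invariant under \eqref{eqn::QLD}, so attention is restricted to the interior, which contains $\NE$.)

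I expect the main obstacle to be this second step: correctly identifying the Q-Learning vector field with the monotone operator $F^H$, extracting the \emph{strict} Lyapunov decrease from strong monotonicity together with the variational characterisation of $\NE$, and then the analytic bookkeeping — global existence, forward-invariance of the interior, and the $\omega$-limit/LaSalle argument — needed to pass from ``$\Psi$ decreases'' to convergence of the trajectory itself. By comparison, the strong-convexity estimate and the interiority of $\NE$ in the first step are routine.
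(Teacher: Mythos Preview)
The paper does not supply a proof of this lemma: it is stated in the preliminaries with a citation to \cite{hussain:aamas} and then invoked as a black box in the proof of Theorem~\ref{thm::main-thm}. So there is no in-paper argument to compare your proposal against.

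That said, your proof is correct and is the natural self-contained argument. The two steps---(i) using Proposition~\ref{prop::strong-conv} to upgrade monotonicity of $F$ to strong monotonicity of $F^H$ with constant $\min_k T_k$, hence a unique interior QRE via Lemma~\ref{lem::unique-qre}, and (ii) recognising \eqref{eqn::QLD} as the replicator dynamics of $\game^H$ and using the KL divergence $\Psi(\x)=\sum_k D_{\mathrm{KL}}(\NE_k\Vert\x_k)$ as a strict Lyapunov function---are exactly the ingredients one expects here, and your sign bookkeeping ($F^H_k=-r^H_k$, hence $\dot\Psi=-\langle\x-\NE,F^H(\x)\rangle$) together with the VI characterisation of $\NE$ to get $\dot\Psi\le-\alpha\lVert\x-\NE\rVert_2^2$ is correct. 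Two minor simplifications: interiority of $\NE$ is immediate from the softmax form in the QRE definition, so the VI-based contradiction argument is unnecessary; and since $\Psi$ is a \emph{strict} Lyapunov function with compact interior sublevel sets, you can conclude $\x(t)\to\NE$ directly without invoking LaSalle.
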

Finally, recall that an operator $f : \X \subset \R^n \rightarrow \R^n$ is \emph{strongly convex}
with constant $\alpha$ if, for all $\x, \y \in \X$
\begin{equation*}
    f(\y) \geq f(\x) + Df(\x)^\top (\y - \x) + \frac{\alpha}{2}\lVert\x - \y \rVert^2_2.
\end{equation*}
It is known that, if $f(\x)$ is strongly convex, then its Hessian $D^2_{\x} f(\x)$ is strongly
positive definite with constant $\alpha$. Thus, all eigenvalues of $D^2_{\x} f(\x)$ are larger than
$\alpha$. To apply this in our setting, we use the following result.
\begin{proposition}[\cite{melo:qre}] \label{prop::strong-conv} The function $f(\x_k) = T_k \langle
        \x_k, \ln \x_k \rangle$ is strongly convex with constant $T_k$.
\end{proposition}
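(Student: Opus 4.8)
The plan is to establish the inequality in the definition of strong convexity through the standard second-order route: I will show that the Hessian of $f$ dominates $T_k$ times the identity at every point of the simplex, and then recover the first-order inequality via Taylor's theorem with integral remainder. This is the converse of the implication recalled just before the statement (strong convexity $\Rightarrow$ Hessian lower bound), and it is what Lemma~\ref{lem::unique-qre} and the surrounding analysis ultimately need.

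First I would differentiate $f(\x_k) = T_k \sum_i x_{ki} \ln x_{ki}$ coordinate-wise. The partials are $\partial f / \partial x_{ki} = T_k(\ln x_{ki} + 1)$, and since the mixed partials vanish, the Hessian $D^2_{\x_k} f(\x_k)$ is the diagonal matrix with entries $[D^2_{\x_k} f]_{ii} = T_k / x_{ki}$. The only place the simplex structure enters is the observation that every coordinate of a point $\x_k \in \Delta_k$ obeys $0 \le x_{ki} \le 1$, because the coordinates are nonnegative and sum to one; hence $1/x_{ki} \ge 1$ for each $i$. Consequently, for an arbitrary direction $\mathbf{d} \in \R^n$,
\begin{equation*}
    \mathbf{d}^\top D^2_{\x_k} f(\x_k)\, \mathbf{d} = T_k \sum_i \frac{d_i^2}{x_{ki}} \ge T_k \sum_i d_i^2 = T_k \lVert \mathbf{d} \rVert_2^2 ,
\end{equation*}
so $D^2_{\x_k} f(\x_k) \succeq T_k I$ at every interior point. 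I would emphasise that this bound holds for \emph{all} directions, not only those tangent to the affine hull $\{\sum_i d_i = 0\}$ of the simplex, so no projection is needed. To pass from the Hessian bound to the definition, I would write $f(\y) - f(\x) - Df(\x)^\top(\y - \x) = \int_0^1 (1 - t)\,(\y - \x)^\top D^2_{\x_k} f\big(\x + t(\y-\x)\big)(\y - \x)\, dt$, lower-bound the integrand by $T_k \lVert \y - \x \rVert_2^2$, and use $\int_0^1 (1-t)\,dt = \tfrac12$ to obtain exactly the $\tfrac{T_k}{2}\lVert \x - \y \rVert_2^2$ term required.

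The main obstacle is purely technical rather than conceptual: the Hessian entries $T_k/x_{ki}$ blow up as $x_{ki} \to 0$, so $f$ is only twice differentiable on the relative interior of $\Delta_k$, and a segment between two points of $\Delta_k$ may meet the boundary. This actually works in our favour, since the eigenvalues only grow near the boundary, never falling below the uniform bound $T_k$. I would therefore prove the strong-convexity inequality first for $\x, \y$ in the open simplex (where the whole segment stays interior and the Taylor expansion is valid), and then extend to the closed simplex by continuity of $f$ and of both sides of the inequality on $\Delta_k$, noting that the constant $T_k$ is uniform and independent of the evaluation point.
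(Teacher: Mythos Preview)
Your argument is correct: the Hessian of $T_k\sum_i x_{ki}\ln x_{ki}$ is diagonal with entries $T_k/x_{ki}\ge T_k$ on the interior of $\Delta_k$, and the Taylor expansion with integral remainder then yields the first-order strong-convexity inequality with constant $T_k$; the boundary extension by continuity is handled appropriately. Note that the paper does not supply its own proof of this proposition---it is quoted from \cite{melo:qre}---and in fact only the Hessian lower bound $D^2_{\x_k}f(\x_k)\succeq T_k I$ is used downstream (in the proof of Theorem~\ref{thm::main-thm}), so your Hessian computation alone already delivers what the paper needs.
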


\subsection{Matrix Norms}

In addition, the following definitions and properties hold for any matrix $A$.
\begin{enumerate}
    \item $\lVert A \rVert_2 = \sqrt{\lambda_{\max}(A^\top A)}$ where $\lambda_{\max}$ is the
    largest eigenvalue of $A$,
    \item $\lVert A \rVert_\infty = \max_i \sum_{j} |[A]_{ij}|$,
    \item $\rho(A) = \max_i |\lambda_i(A)|$ where $\lambda_i(A)$ denotes an eigenvalue of $A$.
\end{enumerate}
\begin{proposition}[Weyl's Inequality] \label{prop::weyl} Let $J = D + N$ where $D$ and $N$ are
        symmetric matrices. Then it holds that
        \begin{equation*}
            \lambda_{\min}(J) \geq \lambda_{\min} (D) + \lambda_{\min} (N).
        \end{equation*}
        where $\lambda_{\min}(A)$ denotes the smallest eigenvalue of a matrix. 
\end{proposition}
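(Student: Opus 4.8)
The plan is to prove the inequality directly from the variational (Rayleigh quotient) characterization of the smallest eigenvalue of a symmetric matrix. First I would note that, since $D$ and $N$ are symmetric, their sum $J = D + N$ is also symmetric, so all three matrices have real eigenvalues and, by the spectral theorem, an orthonormal eigenbasis. This is exactly what justifies the Courant--Fischer identity for the smallest eigenvalue: for any symmetric $A \in \R^{d \times d}$,
\[
\lambda_{\min}(A) = \min_{\x \in \R^d,\, \lVert \x \rVert_2 = 1} \x^\top A \x .
\]

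The core step is then an elementary pointwise bound on the Rayleigh quotient. I would fix an arbitrary unit vector $\x$ and use linearity of the quadratic form to write $\x^\top J \x = \x^\top D \x + \x^\top N \x$. Applying the same variational identity to the symmetric matrices $D$ and $N$ separately gives $\x^\top D \x \geq \lambda_{\min}(D)$ and $\x^\top N \x \geq \lambda_{\min}(N)$, valid for \emph{every} unit vector $\x$. Adding these two bounds yields
\[
\x^\top J \x \geq \lambda_{\min}(D) + \lambda_{\min}(N),
\]
a lower bound that is uniform in $\x$.

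Finally I would take the minimum over all unit vectors on the left-hand side. Since the right-hand side is a constant independent of $\x$, the minimum of $\x^\top J \x$ remains at least $\lambda_{\min}(D) + \lambda_{\min}(N)$; by the Courant--Fischer identity this minimum equals $\lambda_{\min}(J)$, which establishes $\lambda_{\min}(J) \geq \lambda_{\min}(D) + \lambda_{\min}(N)$ as claimed.

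There is no genuine obstacle in this argument; the only ingredient requiring care is the Rayleigh quotient characterization itself, which rests on the spectral theorem for real symmetric matrices. The symmetry hypothesis on $D$ and $N$ (and hence on $J$) is precisely what guarantees real eigenvalues and makes the min--min principle available, so it cannot be dropped. If one wished to avoid quoting Courant--Fischer, an alternative is to diagonalize $J$ and test the quadratic form $\x^\top J \x$ against the bottom eigenvector of $J$, but the Rayleigh quotient route is cleaner and extends verbatim to the full family of Weyl interlacing inequalities should a tighter spectral bound ever be needed.
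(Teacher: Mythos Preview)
Your argument is correct and is the standard proof of this special case of Weyl's inequality via the Rayleigh quotient characterization $\lambda_{\min}(A) = \min_{\lVert \x \rVert_2 = 1} \x^\top A \x$ for symmetric $A$. Note, however, that the paper does not actually supply its own proof of this proposition: it is listed in the appendix preliminaries as a known fact and then invoked in the proof of Theorem~\ref{thm::main-thm}, so there is nothing to compare against beyond confirming that your derivation is sound, which it is.
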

\begin{proposition} \label{prop::kron} Let $G, A$ be matrices and $\otimes$ denote the Kronecker
    product. Then \begin{equation*} \lVert G \otimes A \rVert_2 = \lVert G \rVert_2 \lVert A
    \rVert_2. \end{equation*} \end{proposition}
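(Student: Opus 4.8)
The plan is to reduce the identity to a statement about the largest eigenvalue of a Kronecker product of symmetric positive semidefinite matrices, using the spectral characterisation $\lVert M \rVert_2 = \sqrt{\lambda_{\max}(M^\top M)}$ listed as the first property in the Matrix Norms subsection. Writing $M = G \otimes A$, the entire task becomes computing $\lambda_{\max}\big((G\otimes A)^\top (G\otimes A)\big)$ and showing that it factors as $\lambda_{\max}(G^\top G)\,\lambda_{\max}(A^\top A)$.

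First I would invoke two standard algebraic properties of the Kronecker product: the transpose rule $(G\otimes A)^\top = G^\top \otimes A^\top$, and the mixed-product property $(M\otimes P)(Q\otimes R) = (MQ)\otimes(PR)$. Combining these gives
\begin{equation*}
    (G\otimes A)^\top (G\otimes A) = (G^\top \otimes A^\top)(G \otimes A) = (G^\top G)\otimes(A^\top A).
\end{equation*}
Both $G^\top G$ and $A^\top A$ are symmetric and positive semidefinite, so each is orthogonally diagonalisable with nonnegative eigenvalues; this is exactly the structure I will exploit.

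Next I would establish the spectrum of a Kronecker product of symmetric matrices. If $G^\top G\, v = \lambda v$ and $A^\top A\, w = \mu w$, then the mixed-product property gives $\big((G^\top G)\otimes(A^\top A)\big)(v\otimes w) = (\lambda v)\otimes(\mu w) = \lambda\mu\,(v\otimes w)$, so $v\otimes w$ is an eigenvector with eigenvalue $\lambda\mu$. Because $G^\top G$ and $A^\top A$ are symmetric, their eigenvectors form orthonormal bases of their respective spaces, and the tensor products of these bases form an orthonormal basis of the full space. Hence the eigenvalues of $(G^\top G)\otimes(A^\top A)$ are exactly the products $\lambda_i(G^\top G)\,\lambda_j(A^\top A)$, with no eigenvalue left unaccounted for. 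Since all these factors are nonnegative, the maximum is attained by maximising each factor independently, giving $\lambda_{\max}\big((G^\top G)\otimes(A^\top A)\big) = \lambda_{\max}(G^\top G)\,\lambda_{\max}(A^\top A)$. Taking square roots and using $\sqrt{ab} = \sqrt{a}\sqrt{b}$ for nonnegative $a,b$ then yields $\lVert G\otimes A\rVert_2 = \lVert G\rVert_2\lVert A\rVert_2$.

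The identity is classical and carries no deep difficulty; the one point requiring care is the spectrum claim, where I must verify that the tensor-product eigenvectors span the entire space rather than merely producing a subset of eigenvalues. Symmetry (hence orthogonal diagonalisability) of $G^\top G$ and $A^\top A$ is precisely what guarantees this, and positive semidefiniteness is what lets the two largest eigenvalues multiply cleanly under the square root without sign issues.
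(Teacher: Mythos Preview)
Your proof is correct. The paper does not actually supply a proof of this proposition: it is stated in the Matrix Norms subsection of the appendix as a standard fact, alongside Weyl's inequality, and is simply invoked later in the proof of Theorem~\ref{thm::main-thm} for condition~(\ref{eqn::2-cond}). Your argument via the mixed-product property and the spectrum of a Kronecker product of symmetric positive semidefinite matrices is exactly the classical justification one would expect, and nothing more is needed.
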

\begin{proposition} \label{prop::spectral-radius} Let $A$ be a symmetric matrix. Then
    \begin{equation*}
       |\lambda_{\min}(A)| \leq \rho(A) = \lVert A \rVert_2.
       \end{equation*}
\end{proposition}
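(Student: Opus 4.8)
The plan is to split the statement into its two parts, the inequality $|\lambda_{\min}(A)| \leq \rho(A)$ and the identity $\rho(A) = \lVert A \rVert_2$, and establish each directly from the definitions of spectral radius and operator $2$-norm recalled in the enumerated list above. Both are elementary consequences of the structure of real symmetric matrices, so the argument is short; the only care needed is in invoking the spectral theorem correctly and tracking what squaring does to the eigenvalues.

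For the inequality, I would simply observe that $\lambda_{\min}(A)$ is, by definition, one of the eigenvalues $\lambda_i(A)$ of $A$. Since the spectral radius is $\rho(A) = \max_i |\lambda_i(A)|$, it follows immediately that $|\lambda_{\min}(A)| \leq \max_i |\lambda_i(A)| = \rho(A)$. Note that this half uses no symmetry and holds for an arbitrary square matrix.

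For the identity, I would exploit that $A$ is real and symmetric, so $A^\top = A$ and hence $A^\top A = A^2$. By the spectral theorem, $A$ is orthogonally diagonalisable, $A = Q \Lambda Q^\top$ with $\Lambda = \mathrm{diag}(\lambda_1, \ldots, \lambda_n)$ and all $\lambda_i$ real. Then $A^2 = Q \Lambda^2 Q^\top$, so the eigenvalues of $A^\top A = A^2$ are exactly the nonnegative numbers $\lambda_i^2$. Consequently $\lambda_{\max}(A^\top A) = \max_i \lambda_i^2 = \left(\max_i |\lambda_i|\right)^2 = \rho(A)^2$. Substituting this into the definition $\lVert A \rVert_2 = \sqrt{\lambda_{\max}(A^\top A)}$ and using $\rho(A) \geq 0$ yields $\lVert A \rVert_2 = \sqrt{\rho(A)^2} = \rho(A)$, which completes the proof.

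I expect no genuine obstacle here beyond correctly observing that squaring the eigenvalues does not reorder their magnitudes, so that the largest eigenvalue of $A^2$ corresponds to the eigenvalue of $A$ of largest absolute value, namely $\rho(A)$. The one subtlety worth flagging is that the equality $\rho(A) = \lVert A \rVert_2$ is special to symmetric (more generally, normal) matrices: for a general matrix only $\rho(A) \leq \lVert A \rVert_2$ holds, and it is precisely the symmetry assumption, which collapses $A^\top A$ to $A^2$, that upgrades this to an identity.
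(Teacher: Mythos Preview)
Your proof is correct and is the standard argument. Note that the paper does not actually supply a proof of this proposition: it is listed among the preliminary matrix-norm facts in the appendix and stated without justification, so there is no paper-proof to compare against. Your two-part argument (the inequality from the definition of $\rho$, the identity from $A^\top A = A^2$ via the spectral theorem) is exactly what one would expect and fills the gap cleanly.
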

The following result is used in our proof to be able to parameterise the effect of pairwise
interactions by $\sigma_I$.
\begin{lemma} \label{lem::two-norm} Let $G \in \mathcal{M}_{N}(\R)$ be matrix for which each entry
    $g_{ij} \defeq \left[G\right]_{ij}$ is either $0$ or $1$. Let $N \in \mathcal{M}_{Nn}(\R)$ be a
    block matrix such that
    \begin{equation*}
        \left[N\right]_{ij} = \begin{cases}
            A^{ij} & \text{ if } g_{ij} = 1 \\ 
            \zeros & \text{ otherwise}
        \end{cases},
    \end{equation*}
    where $A^{ij} \in \mathcal{M}_n(\mathbb{R})$ are matrices of the same dimension. 
    let $A \in M_n(\R)$ be a matrix which satisfies $\lVert A \rVert_2 \geq \lVert B^{ij} \rVert_2$
    for all $(i, j) \in \edgeset$. Finally let $\Tilde{N} \in M_{Nn}(\R)$ be a block matrix given by
    Then 
    \begin{equation*}
 \left\lVert N \right\rVert_2 \leq \sqrt{\left\lVert G \right\rVert_1 \left\lVert G \right\rVert_\infty}\max_{1 \le i,j \le n} \left\lVert A_{ij} \right\rVert_2.
    \end{equation*}
\end{lemma}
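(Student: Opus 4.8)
The plan is to bound $\lVert N\rVert_2$ by passing to a scalar ``majorant'' matrix that records only the operator norms of the blocks of $N$, and then dominating that majorant entrywise by a multiple of $G$. Write $N$ in $n\times n$ blocks, so that $[N]_{ij}=A^{ij}$ when $g_{ij}=1$ and $[N]_{ij}=\zeros$ otherwise, and set $M:=\max_{(i,j):\,g_{ij}=1}\lVert A^{ij}\rVert_2$, which equals $\lVert A\rVert_2$ by the hypothesis on $A$. Define the nonnegative matrix $\hat N\in\mathcal{M}_{N}(\R)$ by $[\hat N]_{ij}:=\lVert [N]_{ij}\rVert_2$. I would then establish the bound in two reductions: (i) $\lVert N\rVert_2\le\lVert\hat N\rVert_2$, and (ii) $\lVert\hat N\rVert_2\le M\sqrt{\lVert G\rVert_1\lVert G\rVert_\infty}$.

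For (i), I would take an arbitrary $\x\in\R^{Nn}$, partition it into blocks $\x=(\x_1,\dots,\x_N)$ with $\x_i\in\R^n$, and note that $(N\x)_i=\sum_j[N]_{ij}\x_j$, so by the triangle inequality and the definition of the operator $2$-norm $\lVert (N\x)_i\rVert_2\le\sum_j\lVert [N]_{ij}\rVert_2\lVert\x_j\rVert_2=(\hat N\q)_i$, where $\q\in\R^{N}$ has entries $q_i:=\lVert\x_i\rVert_2$. Since $\hat N$ and $\q$ are nonnegative, both sides are nonnegative; squaring, summing over $i$, and using $\lVert\q\rVert_2=\lVert\x\rVert_2$ gives $\lVert N\x\rVert_2\le\lVert\hat N\rVert_2\lVert\x\rVert_2$, hence (i). For (ii), observe that $0\le[\hat N]_{ij}\le Mg_{ij}$, i.e.\ $\hat N\le MG$ entrywise with both matrices nonnegative, so (since $\lVert\cdot\rVert_1$ and $\lVert\cdot\rVert_\infty$ of a nonnegative matrix are maxima of its column, resp.\ row, sums) $\lVert\hat N\rVert_1\le M\lVert G\rVert_1$ and $\lVert\hat N\rVert_\infty\le M\lVert G\rVert_\infty$; combining this with the standard inequality $\lVert B\rVert_2\le\sqrt{\lVert B\rVert_1\lVert B\rVert_\infty}$ (which follows from $\lVert B\rVert_2^2=\lambda_{\max}(B^{\top}B)\le\lVert B^{\top}B\rVert_\infty\le\lVert B^{\top}\rVert_\infty\lVert B\rVert_\infty=\lVert B\rVert_1\lVert B\rVert_\infty$) applied to $B=\hat N$ yields (ii), and chaining (i)--(ii) with $M=\lVert A\rVert_2$ gives the claim.

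The only step that is not pure bookkeeping is reduction (i), the block majorization $\lVert N\rVert_2\le\lVert\hat N\rVert_2$; everything else is entrywise monotonicity of the induced $1$- and $\infty$-norms on nonnegative matrices together with the textbook norm interpolation inequality. A slightly sharper alternative to (ii), if one wishes to use Proposition~\ref{prop::kron}, is to note that entrywise domination of nonnegative matrices also gives $\lVert\hat N\rVert_2\le M\lVert G\rVert_2$ (monotonicity of the spectral radius on nonnegative matrices), whence $\lVert N\rVert_2\le\lVert A\rVert_2\lVert G\rVert_2=\lVert G\otimes A\rVert_2=\lVert\tilde N\rVert_2$, and the bound $\lVert G\rVert_2\le\sqrt{\lVert G\rVert_1\lVert G\rVert_\infty}$ then recovers the stated form; I would nonetheless present the $1/\infty$ route as the main argument since it is self-contained and avoids invoking Perron--Frobenius.
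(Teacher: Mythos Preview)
Your argument is correct and follows essentially the same route as the paper's: both reduce $\lVert N\x\rVert_2$ to $\lVert Gq\rVert_2$ with $q_i=\lVert \x_i\rVert_2$ via the blockwise triangle inequality, then control this by $\sqrt{\lVert G\rVert_1\lVert G\rVert_\infty}\,\lVert q\rVert_2$. Your packaging via the scalar majorant $\hat N$ and the textbook interpolation inequality $\lVert B\rVert_2\le\sqrt{\lVert B\rVert_1\lVert B\rVert_\infty}$ is a bit cleaner than the paper's explicit expansion of the square followed by AM--GM on the cross terms, but the underlying idea is identical.
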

\begin{proof} Let $v = (v^1,\ldots,v^n)\in \mathbb{R}^{Nn}$ where $v^{i} \in \mathbb{R}^N$ for $1
\le i \le n$. Then
\begin{equation}\lVert Nv \rVert_2^2 =   \left\lVert\begin{pmatrix}
            g_{11}A^{11} & \ldots  & g_{1n}A^{1n} \\
            \vdots &  &   \vdots \\
             g_{n1}A^{n1} & \ldots & g_{nn}A^{nn} 
        \end{pmatrix} \begin{bmatrix}
            v^1\\ \vdots  \\ v^n
        \end{bmatrix}\right\rVert_2^2 =\left\lVert \begin{bmatrix}
          \sum_{1j} g_{1j} A^{1j} v^j \\ \vdots \\ \sum_{ni} g_{nj}A^{nj} v^j
    \end{bmatrix}\right\rVert_2^2 \leq \sum_{i=1}^n \left\lVert \sum_{j=1}^n g_{ij} A^{ij}v^j  \right\rVert_2^2.\label{eq:squared_norm_of_nv}
        \end{equation}
For each fixed $i \in \{1,\ldots,n\}$, we have the upper bound
\begin{equation}
\left\lVert \sum_{j=1}^n g_{ij} A^{ij}v^j  \right\rVert_2 \le \sum_{j=1}^n g_{ij} \left\lVert A^{ij}v^j  \right\rVert_2 \le \sum_{j=1}^n g_{ij} \left\lVert A^{ij}  \right\rVert_2 \left\lVert v^j \right\rVert_2 \le \max_{1 \le i,j \le n}\left\lVert A^{ij}  \right\rVert_2 \sum_{j=1}^n g_{ij}  \left\lVert v^j \right\rVert_2.\label{eq:bound_for_each_i}
\end{equation}
By plugging \eqref{eq:bound_for_each_i} in \eqref{eq:squared_norm_of_nv} and expanding the squared
bracket, we obtain that
\begin{align*}
    \lVert Nv \rVert_2^2 \le \sum_{i=1}^n \left(\max_{1 \le i,j \le n}\left\lVert A^{ij}  \right\rVert_2 \sum_{j=1}^n g_{ij}  \left\lVert v^j \right\rVert_2 \right)^2 =& \max_{1 \le i,j \le n}\left\lVert A^{ij}  \right\rVert_2^2 \sum_{i=1}^n \sum_{k,l=1}^n g_{ik} g_{il} \left\lVert v^k \right\rVert_2 \left\lVert v^l \right\rVert_2\\ 
    \le & \max_{1 \le i,j \le n}\left\lVert A^{ij}  \right\rVert_2^2 \sum_{i=1}^n \sum_{k,l=1}^n g_{ik} g_{il} \left(\frac{1}{2}\left\lVert v^k \right\rVert_2^2 + \frac{1}{2}\left\lVert v^l \right\rVert_2^2 \right),
\end{align*}
where the last inequality follows by completing the square. Notice that the two sums above are
identical, hence
\begin{equation*}
    \lVert Nv \rVert_2^2 \le \max_{1 \le i,j \le n}\left\lVert A^{ij}  \right\rVert_2^2 \sum_{i=1}^n \sum_{k,l=1}^n g_{ik} g_{il} \left\lVert v^k \right\rVert_2^2. 
\end{equation*}
It remains the upper bound the RHS in the above inequality. Indeed, we have that
\begin{align*}
    \sum_{i=1}^n \sum_{k,l=1}^n g_{ik} g_{il} \left\lVert v^k \right\rVert_2^2 &= \sum_{i=1}^n \sum_{k=1}^n g_{ik} \left\lVert v^k \right\rVert_2^2 \left(\sum_{l=1}^n  g_{il}\right) \le  \left\lVert G_\infty\right\rVert \sum_{i=1}^n \sum_{k=1}^n g_{ik} \left\lVert v^k \right\rVert_2^2 \\
    &\le \left\lVert G_\infty \right\rVert \sum_{k=1}^n \left(\sum_{i=1}^n g_{ik}\right) \left\lVert v^k \right\rVert_2^2 \le \left\lVert G_\infty \right\rVert \left\lVert G_1 \right\rVert \sum_{k=1}^n \left\lVert v^k \right\rVert_2^2 = \left\lVert G_\infty \right\rVert \left\lVert G_1 \right\rVert.
\end{align*}
Thus \begin{equation*} \sup_{v \colon \left\lVert v\right\rVert =1} \left\lVert
Nv\right\rVert_2^2\le \left\lVert G_\infty \right\rVert \left\lVert G_1 \right\rVert     \max_{i,j}
\left\lVert A^{ij} \right\lVert_2^2,
\end{equation*}
and the conclusion follows.
\end{proof}
\section{Proof of Theorem \ref{thm::main-thm}} \label{sec::main-thm-proof}

In this section we provide the full proof of Theorem \ref{thm::main-thm}. First, we prove the following
result, which will be used to parameterise interactions by the influence bound $\delta_k$. 

\begin{lemma} \label{lem::meloslem}
    In a network game $\game = (\agentset, \edgeset, (u_k, \actionset{k})_{k \in
    \agentset})$, the following holds for any agent $k \in
    \agentset$ action $i \in \actionset{k}$ and strategies $\x_{-k}, \y_{-k} \in \Delta_{-k}$
    \begin{equation*}
        |r_{ki}(\x_{-k}) - r_{ki}(\y_{-k})| \leq \delta_k \sum_{l \in \agentset_k} \lVert \x_l - \y_l \rVert_1.
    \end{equation*}
\end{lemma}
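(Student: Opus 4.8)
The plan is to exploit the polymatrix structure of the reward. By \eqref{eqn::GPGPayoffs} we have $r_{ki}(\x_{-k}) = \sum_{l \in \agentset_k}(A^{kl}\x_l)_i = \sum_{l \in \agentset_k}\sum_{p}[A^{kl}]_{ip}\,x_{lp}$, so the reward is a sum of terms, each of which is linear in a single neighbour's strategy. Consequently the difference splits additively,
\begin{equation*}
    r_{ki}(\x_{-k}) - r_{ki}(\y_{-k}) = \sum_{l \in \agentset_k}\bigl(A^{kl}(\x_l - \y_l)\bigr)_i,
\end{equation*}
and by the triangle inequality it suffices to prove $\bigl|(A^{kl}(\x_l - \y_l))_i\bigr| \le \delta_k \lVert \x_l - \y_l\rVert_1$ for each $l \in \agentset_k$ (we will in fact obtain the sharper constant $\tfrac12$).

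First I would translate the influence bound into a statement about matrix entries. Fix $l \in \agentset_k$ and two actions $p, q \in \actionset{l}$, and let $a_{-k}, \Tilde{a}_{-k}$ be pure profiles in which $l$ plays $\pure{p}$ and $\pure{q}$ respectively while every other player keeps one fixed pure action. These profiles differ only in $l$'s action, and since $r_{ki}$ depends only on the neighbours of $k$, all other contributions cancel in \eqref{eqn::GPGPayoffs}, giving $r_{ki}(a_{-k}) - r_{ki}(\Tilde{a}_{-k}) = [A^{kl}]_{ip} - [A^{kl}]_{iq}$. Hence, for every action $i$ and every $l \in \agentset_k$, the row ``width'' is controlled: $\max_{p}[A^{kl}]_{ip} - \min_{p}[A^{kl}]_{ip} \le \delta_k$.

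It then remains to bound $(A^{kl}(\x_l - \y_l))_i = \sum_p [A^{kl}]_{ip}(x_{lp} - y_{lp})$. Put $d = \x_l - \y_l$; since $\sum_p d_p = 0$, the positive and negative parts of $d$ carry equal mass $\mu := \sum_{p : d_p > 0} d_p = \tfrac12\lVert \x_l - \y_l\rVert_1$. Bounding the entries multiplying the positive coordinates of $d$ by $\max_p[A^{kl}]_{ip}$ and those multiplying the negative coordinates by $\min_p[A^{kl}]_{ip}$ (and symmetrically for the reverse inequality) yields
\begin{equation*}
    \left|\sum_p [A^{kl}]_{ip}(x_{lp}-y_{lp})\right| \le \mu\left(\max_p [A^{kl}]_{ip} - \min_p [A^{kl}]_{ip}\right) \le \tfrac12 \delta_k \lVert \x_l - \y_l\rVert_1,
\end{equation*}
and summing over $l \in \agentset_k$ gives the lemma. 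I do not expect a genuine obstacle: the only delicate points are the sign bookkeeping in the last display (multiplying the inequality $[A^{kl}]_{ip} \ge \min_q[A^{kl}]_{iq}$ by a negative $d_p$ reverses it, which is precisely what makes the estimate go through), and arranging in the previous step that exactly one neighbour's action differs between the two pure profiles so that everything else cancels.
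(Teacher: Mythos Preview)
Your proof is correct and in fact yields the sharper inequality $|r_{ki}(\x_{-k}) - r_{ki}(\y_{-k})| \le \tfrac{1}{2}\delta_k \sum_{l \in \agentset_k}\lVert \x_l - \y_l\rVert_1$. It is, however, a different argument from the paper's. The paper does not unpack the polymatrix structure at all: it builds an auxiliary game $\Tilde{\game}_k$ containing only agent $k$ and its neighbours, with payoffs rigged so that $k$'s reward is unchanged while every neighbour's influence bound is trivially $0$; then $\delta_k$ is the maximum influence bound in $\Tilde{\game}_k$, and the inequality follows as a black-box application of Proposition~5 of \cite{melo:traffic}. Your route is more elementary and self-contained, exploits the bilinear form of $r_{ki}$ directly, and extracts the factor $\tfrac{1}{2}$ via the zero-sum property of $\x_l - \y_l$; the paper's route, by contrast, goes through a reduction that does not rely on payoffs being given by matrices, so it would transfer verbatim to more general reward functions at the cost of invoking an external result and losing the constant.
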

\begin{proof}
    Fix an agent $k$ and define the dummy game $\Tilde{\game}_k = (\agentset_k \cup \{k\},
    (\Tilde{u}_k, \actionset{k})_{k \in \agentset_k \cup \{k\}})$ so that
    $\Tilde{\game}_k$ is composed of only agent $k$ and its neighbours. In addition,
    the rewards are chosen so that $\Tilde{r}_{ki}(\x_{-k}) = r_{ki}(\x_{-k})$
    for all $\x_{-k}$ and  $\Tilde{r}_{li}(\x_{-l}) = \Tilde{r}_{lj}(\x_{-l})$
    for all $l \in \agentset_k$ and all $i, j \in \actionset{k}$. In $\Tilde{\game}_k$,
    the maximum influence bound $\delta := \max_{l \in \agentset_k \cup \{k\}} \delta_l$ is exactly
    $\delta_k$. Then, from \cite{melo:traffic} Proposition 5, the following
    holds in $\Tilde{\game}_k$
    \begin{equation*}
        |r_{ki}(\x_{-k}) - r_{ki}(\y_{-k})| \leq  \sum_{l \neq k} \delta \lVert \x_l - \y_l \rVert_1.
    \end{equation*}
    This translates in the original network game $\game$ to the statement of the Lemma.
\end{proof}
With these results in place, we can prove Theorem \ref{thm::main-thm} in the main paper.
\begin{proof}[Proof of Theorem \ref{thm::main-thm}]
    In order to apply Lemma \ref{lem::ql-conv} we show that, under any of the conditions, the perturbed game $\game^H$ is strongly monotone. To this end, we take
    the derivative of the pseudo-gradient of $\game^H$ which we call the \emph{pseudo-Hessian} given
    by
    \begin{equation*}
        [J(\x)]_{k, l} = D_{\x_l} F_k(\x).
    \end{equation*}
    It follows that, if $\frac{J(\x) + J^\top(\x)}{2}$ is strongly positive definite for all $\x \in
    \Delta$ with any $\alpha > 0$, i.e.  $\x^\top J(\x) \x \geq \alpha$ for all $\x \in \Delta$,
    then $F(\x)$ is strongly monotone with the same constant $\alpha$. We can rewrite the
    pseudo-Hessian as
    \begin{equation*}
        J(\x) = D(\x) + N(\x),
    \end{equation*}
    where $D(\x)$ is a block diagonal matrix with $-D^2_{\x_k \x_k} u_k^H(\x_k, \x_{-k})$ along the
    diagonal. $N(\x)$ is an off-diagonal block matrix with
    \begin{equation*}
        [N(\x)]_{k, l} = \begin{cases}
            - D_{\x_k, \x_l} u_k^H(\x_k, \x_{-k}) &\text{ if } (k, l) \in \edgeset \\
            \zeros &\text{ otherwise}
        \end{cases}.
    \end{equation*}
    In words, $N(x)$ shares the same structure of the adjacency matrix $G$ of the game, except that
    it has $-D_{\x_k, \x_l} u_k^H(\x_k, \x_{-k})$ wherever $G$ takes the value $1$ and the block
    matrix $\zeros$ wherever $G$ has $0$. Next we evaluate these partial differentials. Recall that
    \begin{equation*}
        -u_k^H(\x_k, \x_{-k}) = T_k \langle \x_k, \ln \x_k \rangle - \sum_{(k, l) \in \edgeset} \x_k \cdot A^{kl} \x_l.
    \end{equation*}
    As a result, for all $(k, l) \in \edgeset$, $\left[N(\x)\right]_{k, l} = - A^{kl}$, so that
    $N(\x)$ represents the network interaction. By contrast, $D(\x)$ depends on $T_k$ and is
    independent of the payoffs $u_k$. As such, it measures the strength of the game perturbation.
    Now, let $\Bar{J}(\x)$ be defined as
    \begin{equation*}
        \Bar{J}(\x) = \frac{J(\x) + J^\top(\x)}{2}
                    = D(\x) + \frac{N(\x) + N^\top(\x)}{2}.
    \end{equation*}
    Then, from Proposition \ref{prop::strong-conv} it  follows that $D(\x)$ is strongly positive
    definite with constant $T = \min_{k} T_k$. In particular, this means that $\lambda_{\min} D(\x)
    \geq T$. Finally, applying Weyl's inequality
    \begin{align*}
        \lambda_{\min}(\Bar{J}) &\geq T + \lambda_{\min} \left( \frac{N + N^\top}{2} \right) \\
        &\geq T - \rho\left( \frac{N + N^\top}{2} \right) \\
        &= T - \left\lVert\frac{N + N^\top}{2} \right\rVert_2 \\
        &\geq T - \frac{1}{2} \left\lVert A + B \right\lVert_2  \sqrt{\lVert G \rVert_\infty \lVert G \rVert_1}  \\
        &= T - \frac{1}{2} \left\lVert A + B^\top \right\rVert_2 \left\lVert G \right\rVert_\infty\\
        &= T - \frac{1}{2} \sigma_I \left\lVert G \right\rVert_\infty,
    \end{align*}
    where we employ Propositions \ref{prop::spectral-radius}, Lemma \ref{lem::two-norm} and the fact
    that $G$ is symmetric so that $\lVert G \rVert_\infty = \lVert G \rVert_1$. The matrices $A, B$
    are chosen so that
    \begin{equation*}
        \left\lVert A + B^\top \right\rVert_2 = \max_{(k, l) \in \edgeset} \left\lVert A^{kl} + (A^{lk})^\top \right\rVert_2 = \sigma_I.
    \end{equation*}
    Then, under (\ref{eqn::infty-cond}), $\lambda_{\min}(\Bar{J}(\x)) \geq T - \frac{1}{2} \sigma_I
    \lVert G \rVert_\infty > 0$ and, therefore $F(\x)$ is strongly monotone with constant $T -
    \frac{1}{2} \sigma_I \left\lVert G \right\rVert_\infty$. Using Lemma \ref{lem::ql-conv}, it
    follows that Q-Learning Dynamics converge to a unique QRE.

    To achieve (\ref{eqn::2-cond}) we apply Proposition \ref{prop::kron} which
    yields that
    \begin{align*}
        & T - \left\lVert\frac{N + N^\top}{2} \right\rVert_2 \\
        = &T - \left\lVert\frac{(A + B^\top) \otimes G}{2} \right\rVert_2  \\
        = &T - \frac{1}{2} \left\lVert A + B^\top \right\rVert_2 \left\lVert G \right\rVert_2\\
        = &T - \frac{1}{2} \sigma_I \left\lVert G \right\rVert_2.
    \end{align*}

    Finally, we prove (\ref{eqn::influence-cond}). In this case, it holds that,
    for any $k$ and any $\x, \y \in \Delta$
    \begin{align*}
        (\x_k - \y_k)^\top(F_k(\x) - F_k(\y)) &= (\x_k - \y_k)^\top(T_k \ln \x_k - T_k \ln \y_k) - (\x_k - \y_k)^\top(T_k r_k(\x_{-k}) - T_k r_k(\y_{-k})) \\
        &\geq T_k \lVert \x_k - \y_k \rVert_1^2 - \left| (\x_k - \y_k)^\top \left( r_k(\x_{-k}) - r_k(\y_{-k}) \right) \right| \\
        &\geq T_k \lVert \x_k - \y_k \rVert_1^2 - \lVert \x_k - \y_k \rVert_1 \delta_k \sum_{l \in \agentset_k} \lVert \x_l - \y_l \rVert_1 \\
        &\geq T_k \lVert \x_k - \y_k \rVert_1^2 - \lVert \x_k - \y_k \rVert_1 \delta_k \sum_{l \neq k} [G]_{kl} \lVert \x_l - \y_l \rVert_1 \\
        &= \xi_k (M \xi)_k,
    \end{align*}
    where $\xi = (\x_k - \y_k)_{k \in \agentset}$ and $M = (diag(T_k)_{k \in
    \agentset} - diag(\delta_k)_{k \in \agentset} \cdot G)$. Notice that, to achieve the third inequality, we applied Lemma \ref{lem::meloslem} Then under
    (\ref{eqn::influence-cond}), $M$ is strictly diagonally dominant and so is strictly
    positive definite. Then
    \begin{align*}
        \sum_k (\x_k - \y_k)^\top(F_k(\x) - F_k(\y)) \geq \xi^\top M \xi > 0,
    \end{align*}
    so that $F(\x)$ is strictly monotone. Then, Lemma \ref{lem::ql-conv} can be
    applied to yield convergence of Q-Learning Dynamics.
    \end{proof}

\section{Proofs from Section \ref{sec::e-NE}} \label{sec::e-NE-proof}

\begin{figure*}[t!]
    \captionsetup{justification=centering}
    \centering
    \begin{subfigure}[b]{0.45\textwidth}
        \centering
        \includegraphics[width=0.9\textwidth]{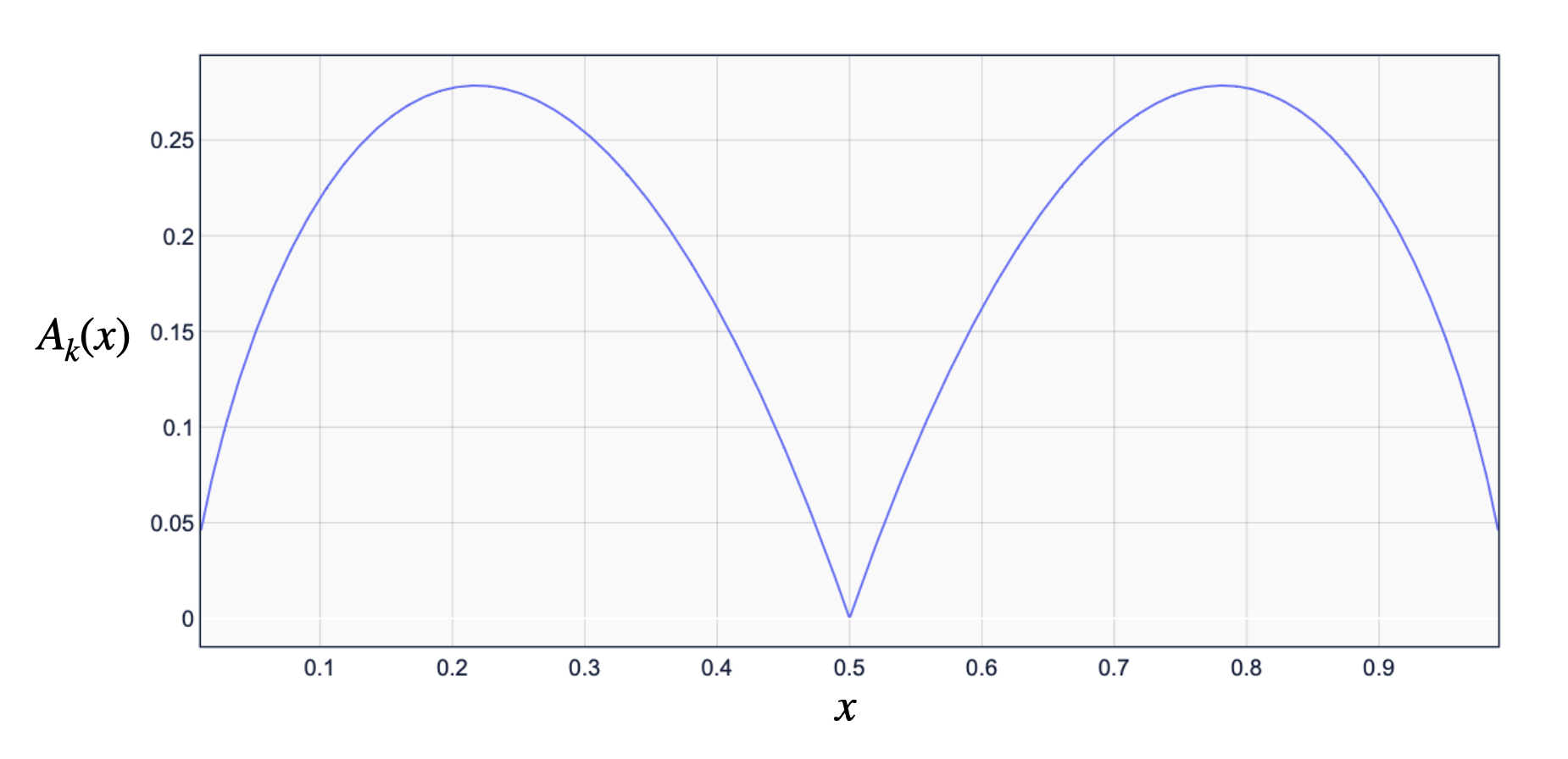}
        \caption*{$n_k = 2$}
    \end{subfigure}
    \begin{subfigure}[b]{0.45\textwidth}
        \centering
        \includegraphics[width=0.6\textwidth]{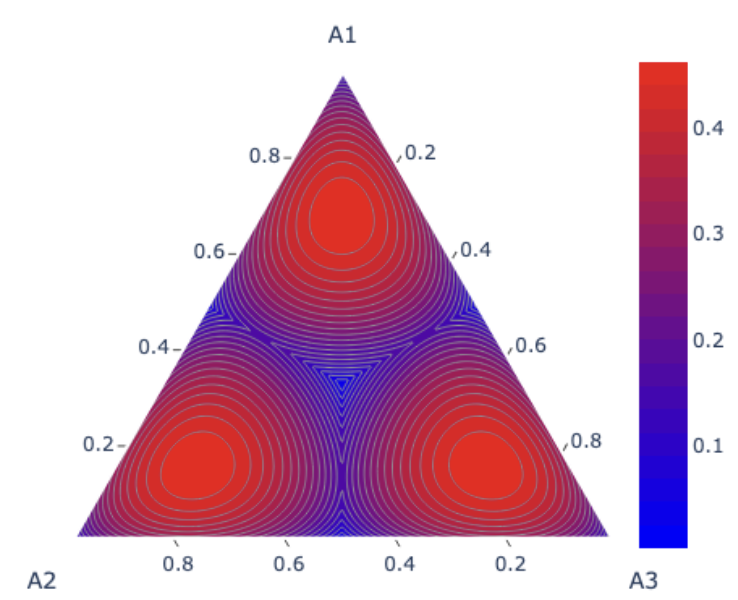}
        \caption*{$n_k = 3$}
    \end{subfigure}

    \caption{$A_k(\x_k)$ plotted on a unit simplex $\Delta_k$ }\label{fig::surprisal_plot}
\end{figure*}

First we show that any QRE $\NE$ is an approximate Nash Equilibrium.

\begin{proof}[Proof of Proposition \ref{prop::e-NE}] 
    We first notice that, for some $\epsilon > 0$, the definition of an $\epsilon$-Nash Equilibrium in Definition \ref{def::e-NE} holds if
    \begin{equation*}
        \max_{k \in \agentset} \max_{i \in \actionset{k}} r_{ki}(\NE_{-k}) - \langle \NE_k, r_k(\NE_{-k}) \rangle = \epsilon.
    \end{equation*}
    Next, recall that a QRE $\NE \in \Delta$ corresponds to an interior fixed point of the Q-Learning Dynamics \cite{piliouras:zerosum}. From this it holds that, for any $k$ and any $i \in \actionset{k}$
    \begin{align*}
        0 &= r_{ki}(\NE_{-k}) - \langle \NE_k, r_k(\NE_{-k}) \rangle + T_k \sum_{j \in \actionset{k}} \bar{x}_{kj} \ln \frac{x_{kj}}{x_{ki}} \\
        r_{ki}(\NE_{-k}) - \langle \NE_k, r_k(\NE_{-k}) \rangle &= -T_k \sum_{j \in \actionset{k}} \bar{x}_{kj} \ln \frac{x_{kj}}{x_{ki}} \\
        &= T_k \ln x_{ki} - \langle \x_k, \ln \x_k \rangle.
    \end{align*}
    As this holds for any $i \in \actionset{k}$, the following holds
    \begin{align*}
        \max_{k \in \agentset} \max_{i \in \actionset{k}} r_{ki}(\NE_{-k}) - \langle \NE_k, r_k(\NE_{-k}) \rangle &= \max_{k \in \agentset} T_k \max_{i \in \actionset{k}}  \ln x_{ki} - \langle \x_k, \ln \x_k \rangle \\
        &= \max_{k \in \agentset} T_k A_k(\NE_k),
    \end{align*}
    so that $\NE$ is an $\epsilon$-Nash Equilibrium with $\epsilon = \max_{k \in \agentset} T_k A_k(\NE_k)$
\end{proof}

    \begin{lemma} [Full version of Lemma \ref{lem::A_k-bound}] \label{lem::A_k-exact}
    Let $\U_k = (1/n_k,\ldots,1/n_k) \in \Delta_k$ and $\e_{ki} \in \Delta_k$ be the canonical basis vector with $i^\text{th}$ entry equal to $1$ and $0$ elsewhere. Then
	\begin{equation}
		\bar{A}_k := \max_{\x_k \in \Delta_k}\left(\max_{i \in \actionset{k}} \ln{x_{ki}} - \sum_{j \in \actionset{k}} x_{kj} \ln{x_{kj}}\right) = \frac{\ln{(n_k - 1)} - \ln\left(W\left(\frac{n_k - 1}{e}\right)\right)}{1 + 1 /W\left(\frac{n_k - 1}{e}\right)} ,\label{eq:ln_minus_entropy}
	\end{equation}
	with equality if $\x^{*} = c \, \e_{ki} + (1-c)\U_k\,$  for any $i \in \actionset{k}$, where $c =1 / \left(W\left(\frac{n_k-1}{e}\right)+1\right)$ and $W(\cdot)$ is the Lambert W function.
    \end{lemma}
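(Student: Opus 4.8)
The plan is to peel off the outer maximisation and reduce to a one-dimensional problem. Write $n=n_k$ and, for each $i\in\actionset{k}$, set $g_i(\x_k)=\ln x_{ki}-\langle\x_k,\ln\x_k\rangle$, so that $A_k(\x_k)=\max_i g_i(\x_k)$ and therefore $\bar{A}_k=\max_{\x_k\in\Delta_k}\max_i g_i(\x_k)=\max_i\max_{\x_k\in\Delta_k}g_i(\x_k)$. Since $\Delta_k$ and the entropy term $\langle\x_k,\ln\x_k\rangle$ are invariant under permuting coordinates, $\max_{\x_k}g_i$ is independent of $i$, so it suffices to maximise $g_1(\x_k)=\ln x_{k1}-\sum_j x_{kj}\ln x_{kj}$ over $\Delta_k$. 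Observe already that $\ln x_{k1}\le 0$ and $-\sum_j x_{kj}\ln x_{kj}\le\ln n$, whence $\bar{A}_k\le\ln n_k$; this is the $\mathcal{O}(\ln n_k)$ bound of Lemma~\ref{lem::A_k-bound}, and the rest of the argument identifies the exact maximum.

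Next I would show the maximiser is interior. As $x_{k1}\to 0^+$ we have $g_1\to-\infty$, whereas $g_1(\U_k)=0$; hence any maximising sequence keeps $x_{k1}$ bounded away from $0$, lies in a compact subset of $\{x_{k1}>0\}$, and the maximum is attained. It cannot lie on a face with $x_{kj}=0$ for some $j\ne 1$, since there $\partial g_1/\partial x_{kj}=-\ln x_{kj}-1\to+\infty$, so moving along $\e_j-\e_1$ (or $\e_j-\e_m$ for any $m\ne 1$ with $x_{km}>0$) strictly increases $g_1$; and the vertex $\e_1$ gives $g_1=0$, which is not optimal. So the maximiser lies in the relative interior of $\Delta_k$.

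Then I would apply Lagrange multipliers for the constraint $\sum_j x_{kj}=1$. Stationarity reads $1/x_{k1}-\ln x_{k1}-1=\lambda$ and $-\ln x_{kj}-1=\lambda$ for $j\ne 1$; the latter forces $x_{k2}=\dots=x_{kn}$, and calling this common value $b$ and writing $a=x_{k1}=1-(n-1)b$, eliminating $\lambda$ leaves the scalar equation $1/a=\ln(a/b)=\ln(a(n-1)/(1-a))$. With $t=1/a$ this rearranges to $(t-1)e^{t-1}=(n-1)/e$, whose unique positive solution is $t-1=W((n-1)/e)$ (principal branch, $W(z)e^{W(z)}=z$). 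Thus $a=1/(W((n-1)/e)+1)=c$ and $b=(1-c)/(n-1)$, so the optimal strategy places mass $c$ on one action and distributes $1-c$ uniformly over the remaining $n-1$ (one checks $c>(1-c)/(n-1)$, consistent with the reduction to $g_1$). Uniqueness of the critical point is immediate because $a\mapsto 1/a$ is strictly decreasing while $a\mapsto\ln(a(n-1)/(1-a))$ is strictly increasing on $(0,1)$; together with the existence of an interior maximiser, this critical point is the global maximiser (alternatively, $g_1$ restricted to the segment $x_{k2}=\dots=x_{kn}$ is a strictly concave function of $a$).

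Finally I would evaluate the maximum. Using $(n-1)b=1-a$, $g_1=\ln a-a\ln a-(1-a)\ln b=(1-a)\ln(a/b)$, and by stationarity $\ln(a/b)=1/a$, so $\bar{A}_k=(1-a)/a=1/a-1=W((n-1)/e)$. To put this in the stated form, set $w=W((n-1)/e)$; from $we^w=(n-1)/e$ we obtain $w+1=\ln(n-1)-\ln w$, hence $(\ln(n-1)-\ln w)/(1+1/w)=w(w+1)/(w+1)=w$, which is exactly (\ref{eq:ln_minus_entropy}). The main obstacle is purely computational: forming the reduced one-variable stationarity equation, recognising the Lambert-$W$ substitution, and rewriting the clean answer $W((n-1)/e)$ in the advertised closed form; some care is needed for the interior-maximiser and critical-point-uniqueness steps, since $g_1$ is not concave on all of $\Delta_k$.
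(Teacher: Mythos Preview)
Your proof is correct and follows essentially the same route as the paper: symmetry to drop the inner $\max_i$, Lagrange multipliers to force $x_{k2}=\cdots=x_{kn}$, and then a one-dimensional optimisation solved via Lambert $W$; your evaluation $\bar{A}_k=W((n_k-1)/e)$ is a nice simplification of the stated closed form. One small slip: your closing remark that $g_1$ is ``not concave on all of $\Delta_k$'' is incorrect---its Hessian on the interior is $\mathrm{diag}(-1/x_{k1}^2-1/x_{k1},\,-1/x_{k2},\ldots,-1/x_{kn})$, which is negative definite, so $g_1$ is in fact strictly concave (the paper uses exactly this observation); this does not affect your argument, since you never rely on non-concavity.
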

    \begin{proof} The $\max$ over $i \in \{1,\ldots,n_k\}$ can be eliminated, as 
        \begin{equation}
            \max_{\x_k \in \Delta_k}\left(\max_{i \in \actionset{k}} \ln{x_{ki}} - \sum_{j \in \actionset{k}} x_{kj} \ln{x_{kj}}\right) = \max_{\x_k \in \Delta_k}\left(\ln{x_{ki}} - \sum_{j \in \actionset{k}} x_{kj} \ln{x_{kj}}\right).\label{eq:no_max_over_i}
        \end{equation} 
    Notice that any value of the LHS realized by $\x_k$ can also be realized by the RHS for $\y_k$, where $\y_k$ is obtained from $\x_k$ by swapping the largest and $i^\text{th}$ entries of $\x_k$. Explicitly, let $i_0 = \arg\max_{j \in \actionset{k}} x_{kj}$ and 
    \begin{equation*}
    y_{kj} = 
        \begin{cases}
            x_{ki_0} &\text{ if } j = i\\
            x_{ki} &\text{ if } j = i_0 \\
            x_{kj} &\text{ otherwise }
        \end{cases}
    \end{equation*}
    Having eliminated the inner maximization, we can formulate the RHS of \eqref{eq:no_max_over_i} as a constrained optimization problem, which we solve in two steps.
    
    Firstly, we show that the $\x^{*}$ maximizing the RHS of \eqref{eq:no_max_over_i} lies on one of the line segments connecting the vertices $\e_{ki}$ of the simplex $\Delta_k$ with the uniform distribution $\U_k$ (see Figure \ref{fig::surprisal_plot}). Next, we determine the position of $\x^{*}$ on this segment, and substitute the value of $\x^{*}$ to obtain the RHS of \eqref{eq:ln_minus_entropy}. We then derive tight bounds on this quantity.
    
    For the first step, consider the Lagrangian 
    $L \colon \Delta_k \times (0,-\infty)  \rightarrow \R \cup \{\infty\}$ given by 
    \begin{equation*}
    L\left(\x_k\right) = \ln{x_{ki}} - \sum_{j \in \actionset{k}} x_{kj} \ln{x_{kj}} + \lambda \left(\sum_{j \in \actionset{k}} x_{kj} -1\right).
    \end{equation*}
    Setting the partial derivatives to $0$, we obtain that
    \begin{align*}
        \frac{\partial L}{\partial x_{ki}} &= -1 + \lambda + 1/x_{kj}  - \ln{x_i} = 0,\\
        \frac{\partial L}{\partial x_{kj}} &= -1 + \lambda  - \ln{x_{kj}} = 0 \text{ for all } j \in \actionset{k} \text{ with } j \neq i, \\
        \frac{\partial L}{\partial \lambda} &= \sum_{j \in \actionset{k}} x_{kj} -1 = 0,
    \end{align*}
    with the solution given by $\x^{*}$ with $x_{kj} = 1/W(e^{1-\lambda})$ and $x_{ki} = e^{-1+\lambda}$ for $j \neq i$, where $W$ is the Lambert W function. Note that $L(\x_k) = -\infty$ for $\x_k$ on the boundary of $\Delta_k$ and further that the mapping $\x_k \to \ln{x_{ki}} - \sum_{j \in \actionset{k}} x_{kj} \ln{x_{kj}}$ is concave, as its Hessian is diagonal with negative entries, hence the stationary point of $L$ gives a maximum. Finally, determining $\lambda$ is equivalent to solving 
    \begin{equation*}
        \sum_{j \in \actionset{k}} x_{kj} = \frac{1}{W(e^{1-\lambda})} + (n_k-1) e^{-1+\lambda} = 1,
    \end{equation*}
    which is intractable to the best of our knowledge, even with modern software such as Mathematica. Nevertheless, we have proved that the maximizer $\x^{*}$ of \eqref{eq:no_max_over_i} lies on the line segment connecting a vertex and the centre of the simplex, which reduces our initial constrained optimization problem to one dimension. Without loss of generality, pick the first vertex and let $\x^{*} = (x,\frac{1-x}{n_k-1},\ldots,\frac{1-x}{n_k-1})$ for some $x \in [0,1]$. We have that
    \begin{equation*}
    \ln{x_i} - \sum_{j \in \actionset{k}} x_{kj} \ln{x_{kj}} = \ln{x} + x \ln{x} + (1-x) \ln{\frac{1-x}{n_k-1}},
    \end{equation*}
    which involves no special functions. By setting the derivative to $0$, we find that this expression is maximized for $x = 1/\left(1+W\left(\frac{n_k-1}{e}\right)\right)$ and that the maximum value is given by
    \begin{equation*}
        \Bar{A}_k = \frac{\ln{(n_k - 1)} - \ln\left(W\left(\frac{n_k - 1}{e}\right)\right)}{1 + 1 /W\left(\frac{n_k - 1}{e}\right)}.
    \end{equation*}
    Finally, we give bounds for $\Bar{A}_k$. \cite{hoorfar:inequalities} proves the sharp bound
    \begin{equation*}
    \ln{x} - \ln{\ln{x}} < W(x) <\ln{x} - \ln{\ln{x}} + \ln{(1+ 1/e)},
    \end{equation*}
    which translates to
    \begin{equation*}\Bar{A}_k < \frac{\ln{n_k} - \ln{\left(\ln{n_k} - \ln{\ln{n_k}}\right)}}{1 + \frac{1}{\ln n_k - \ln{\ln{n_k}}}} < \ln{n_k} - \ln{\left(\ln{n_k} - \ln{\ln{n_k}}\right)} <  \ln n_k.
    \end{equation*}
    \end{proof}

\begin{algorithm}
\caption{Iterative improvement of QRE}\label{alg:cap}
\flushleft \textbf{Input: } Network game $\game = (\agentset, \edgeset, (u_k, \actionset{k})_{k \in \agentset})$; Exploration Rate annealing step $\Delta T$; Maximum number of anneals $M$; Q-Learning horizon $H$; Convergence Window Length $h$; Tolerance $\texttt{tol}$. \\
\flushleft \textbf{Output: } Learned QRE $\NE \in \Delta$ \\
\begin{algorithmic}
\State $T_k \gets \delta_k |\agentset_k| \;$ \algorithmicforall{ $k \in \agentset$} \Comment{or (\ref{eqn::infty-cond}), (\ref{eqn::2-cond})}
\For{$\tau = 1:H$}
    \For{k = 1, \ldots, N}
        \State $Q_{ki} \gets (1 - \alpha_k) Q_{ki} + \alpha_k r_{ki}(\x_{-k})$
        \State $\x_k(\tau) \gets \texttt{softmax}(Q_k / T_k)$
    \EndFor
\EndFor
\State $\NE \gets \x(H)$
\For{t = 1:M} \Comment{or until break statement is reached}
    \For{$k = 1, \ldots, N$}
        \State $\epsilon_k \gets T_k A_k(\NE_k) $ \Comment{from (\ref{eqn::A_k})}
    \EndFor
\State $l = \arg\max_{k \in \agentset} \epsilon_k$ \Comment{ties broken arbitrarily}
\State $T_l \gets T_l - \Delta T$
\For{$\tau = 1:H$}
    \For{k = 1, \ldots, N}
        \State $Q_{ki} \gets (1 - \alpha_k) Q_{ki} + \alpha_k r_{ki}(\x_{-k})$
        \State $\x_k(\tau) \gets \texttt{softmax}(Q_k / T_k)$
    \EndFor
\EndFor
\State $V \gets \max_{k, i} \left\{\frac{\max_{\tau \in H} x_{ki}(\tau) - \min_{\tau \in H} x_{ki}(\tau)}{\min_{\tau \in H} x_{ki}(\tau)} \right\}$
\If{$V < \texttt{tol}$}
    \State $\NE \gets \x(H)$
\Else
    \State \textbf{break} 
\EndIf
\EndFor
\end{algorithmic}
\end{algorithm}

\section{Additional Experiments} \label{sec::add-expt}

In this section, we present additional experiments on the behaviour of Q-Learning in Network Games, as well as on the exploration update scheme. In Figure \ref{fig::mismatching-traj}, we examine a Network Mismatching Game, which was analysed in \cite{kleinberg:nashbarrier} as an example of limit cycle behaviour in replicator dynamics. Here, the payoff to each agent $k$ is given as
\begin{align*}
    &u_k(\x_k, \x_{-k}) = \x_k^\top \A \x_l, \; l = k-1 \mod N, \\
    &A = \begin{pmatrix}
        0 & 1 \\ M & 0
    \end{pmatrix}, \; M \geq 1
\end{align*}
From Figure \ref{fig::mismatching-traj} it is clear that, as exploration rates increase, the dynamics are driven towards a QRE from all initial conditions.

Next, we present additional experiments on the exploration updating scheme in Section \ref{sec::iterative-scheme}. In particular, we apply the scheme to a Network Mismatching Game with 5 agents. We plot the exploitability (\ref{eqn::exploitability}) and $\epsilon$ (\ref{eqn::e-NE}) over $150,000$ iterations of learning. In both cases it is again clear that the distance to Nash decreases as the exploration updating scheme is applied. In the case that $M=2$, the scheme is applied until (\ref{eqn::conv-criteria}) fails at approx. $60,000$ iterations, whilst in the case $M=4$, agents learn for $80,000$ iterations before the dynamics are considered unstable. In Figure \ref{fig::centralised-scheme-traj} we plot the trajectories of Q-Learning using the first action played by three representative agents. The dynamics move between QRE as the exploration rates are adjusted, however stability of the dynamic is maintained.

\begin{figure*}[t]
	\centering
 	\begin{subfigure}[b]{0.225\textwidth}
		\centering
		\includegraphics[width=\textwidth]{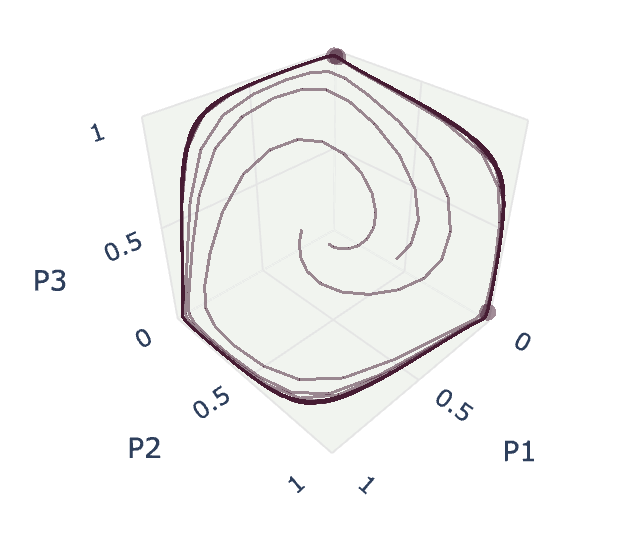}
		\caption*{$T = 0.15$}
	\end{subfigure}
	\begin{subfigure}[b]{0.225\textwidth}
		\centering
		\includegraphics[width=\textwidth]{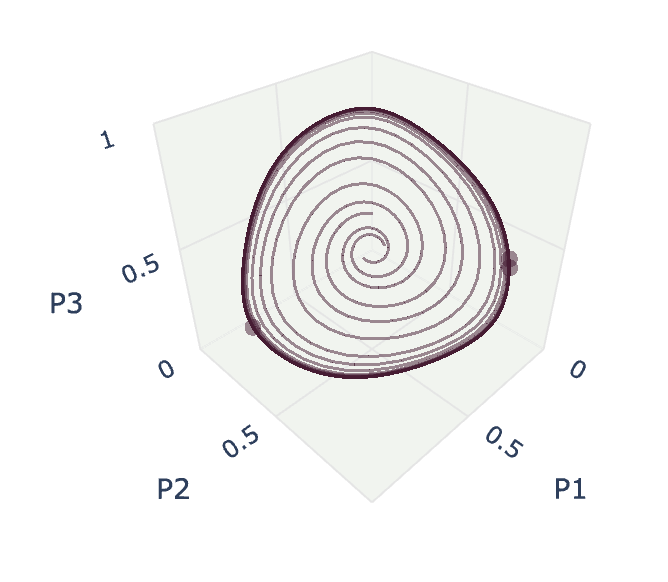}
		\caption*{$T = 0.3$}
	\end{subfigure}
	\begin{subfigure}[b]{0.225\textwidth}
		\centering
		\includegraphics[width=\textwidth]{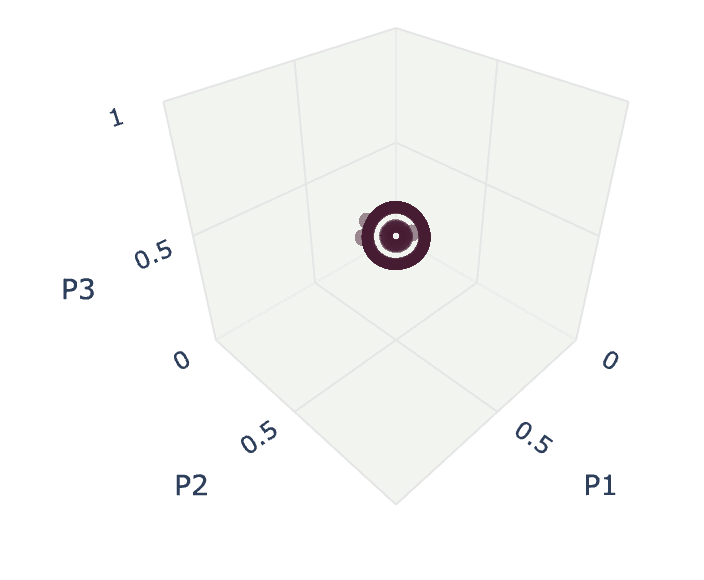}
		\caption*{$T = 0.4$}
	\end{subfigure}
	\begin{subfigure}[b]{0.225\textwidth}
		\centering
		\includegraphics[width=\textwidth]{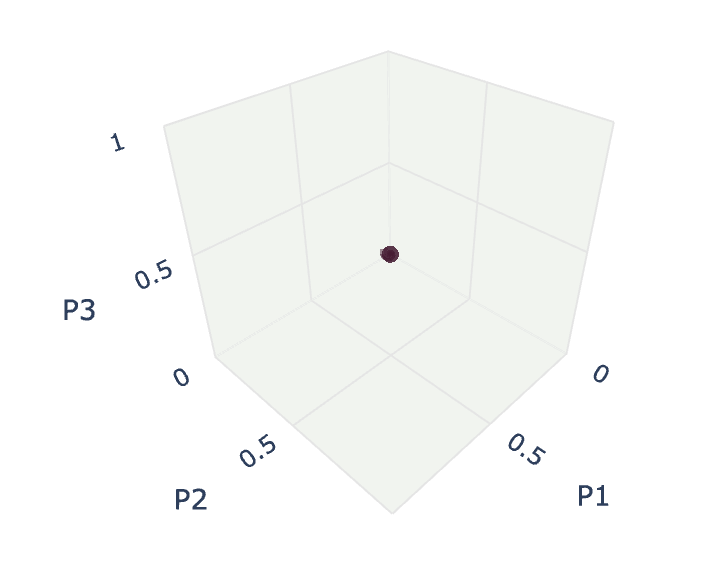}
		\caption*{$T = 0.55$}
	\end{subfigure}
 \caption{Trajectories of Q-Learning in a three agent Network Mismatching Game with $M = 2$. Axes denote the probabilities with which each player chooses their first action.}\label{fig::mismatching-traj}
\end{figure*}

\begin{figure*}[t!]
	\centering
	\begin{subfigure}[b]{0.45\textwidth}
		\centering
		\includegraphics[width=\textwidth]{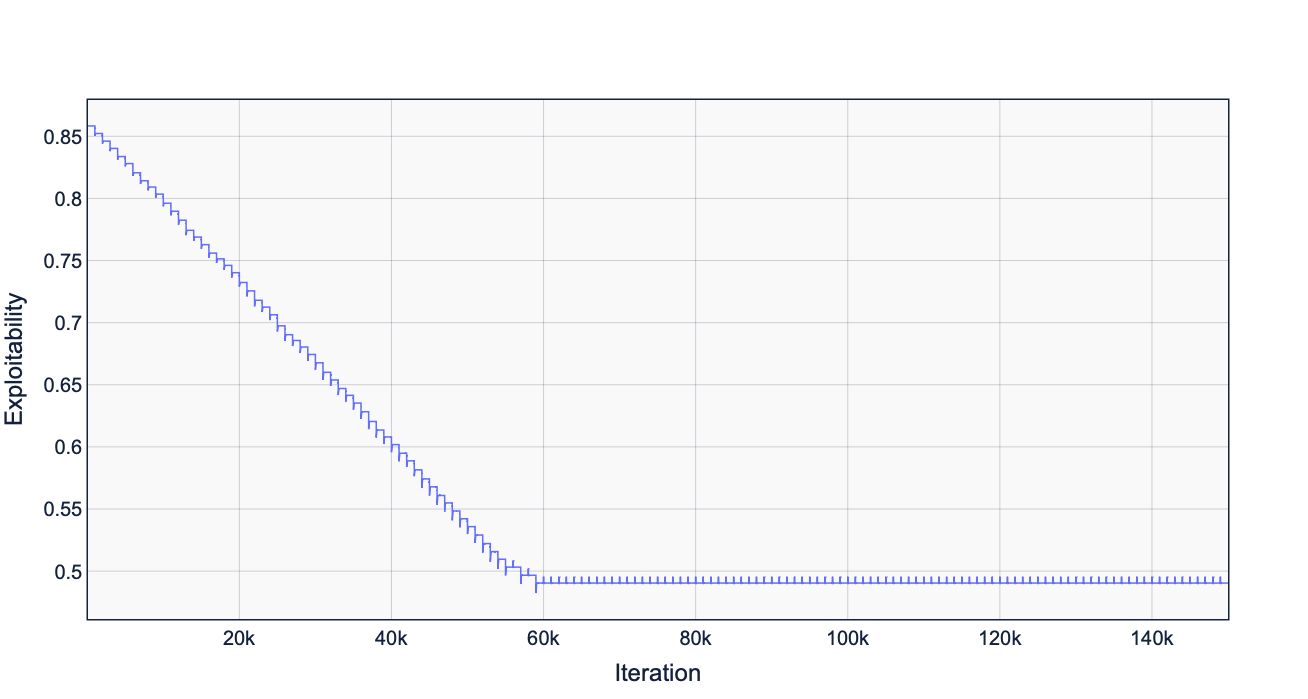}
	\end{subfigure}
	\begin{subfigure}[b]{0.45\textwidth}
		\centering
		\includegraphics[width=\textwidth]{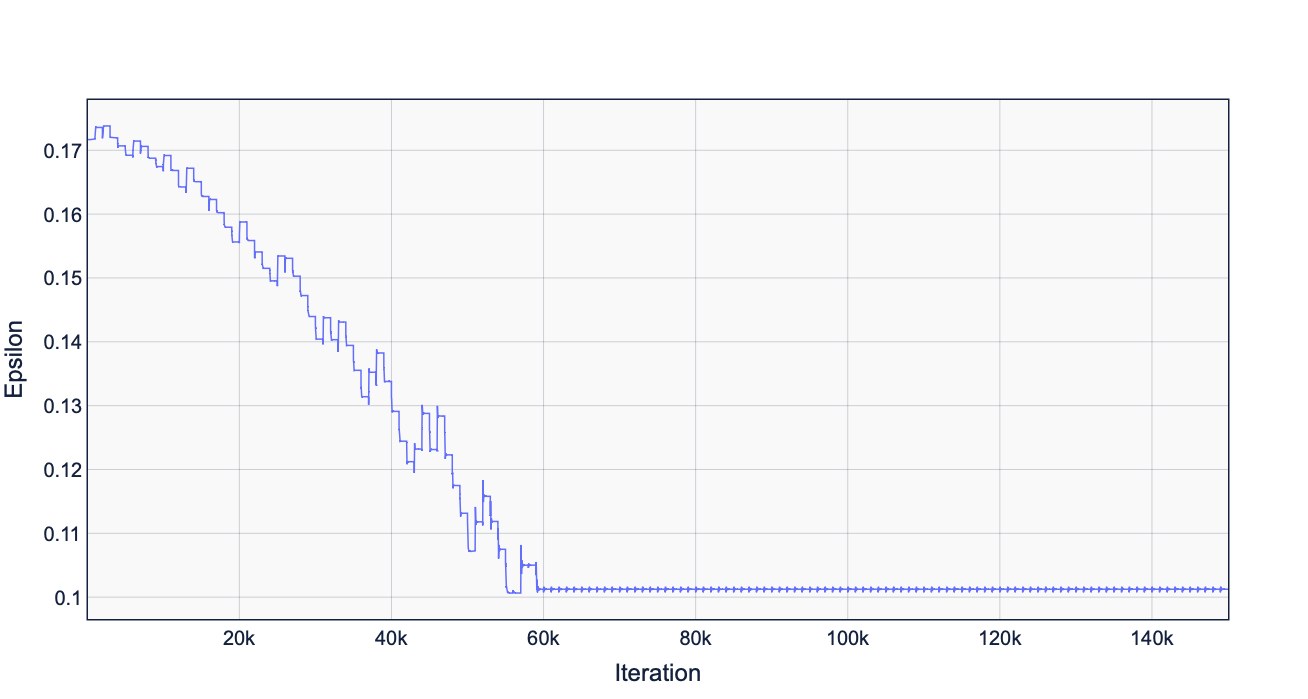}
	\end{subfigure}

  	\begin{subfigure}[b]{0.45\textwidth}
		\centering
		\includegraphics[width=\textwidth]{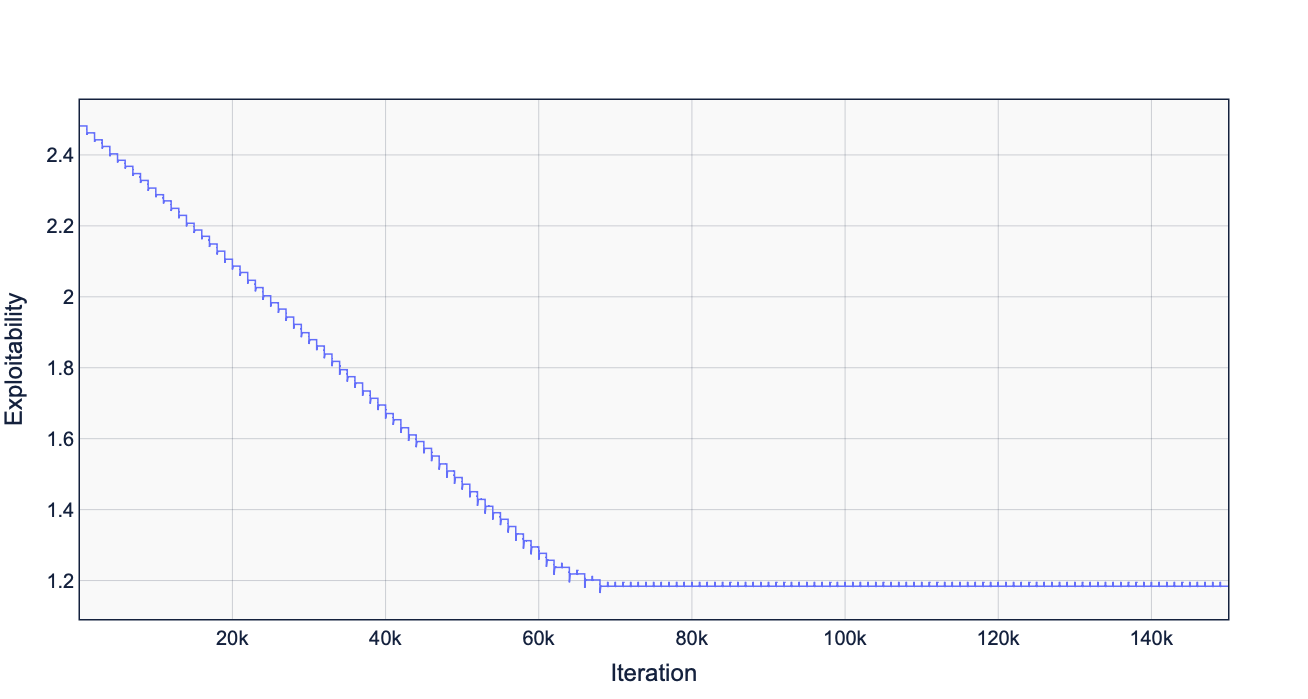}
	\end{subfigure}
	\begin{subfigure}[b]{0.45\textwidth}
		\centering
		\includegraphics[width=\textwidth]{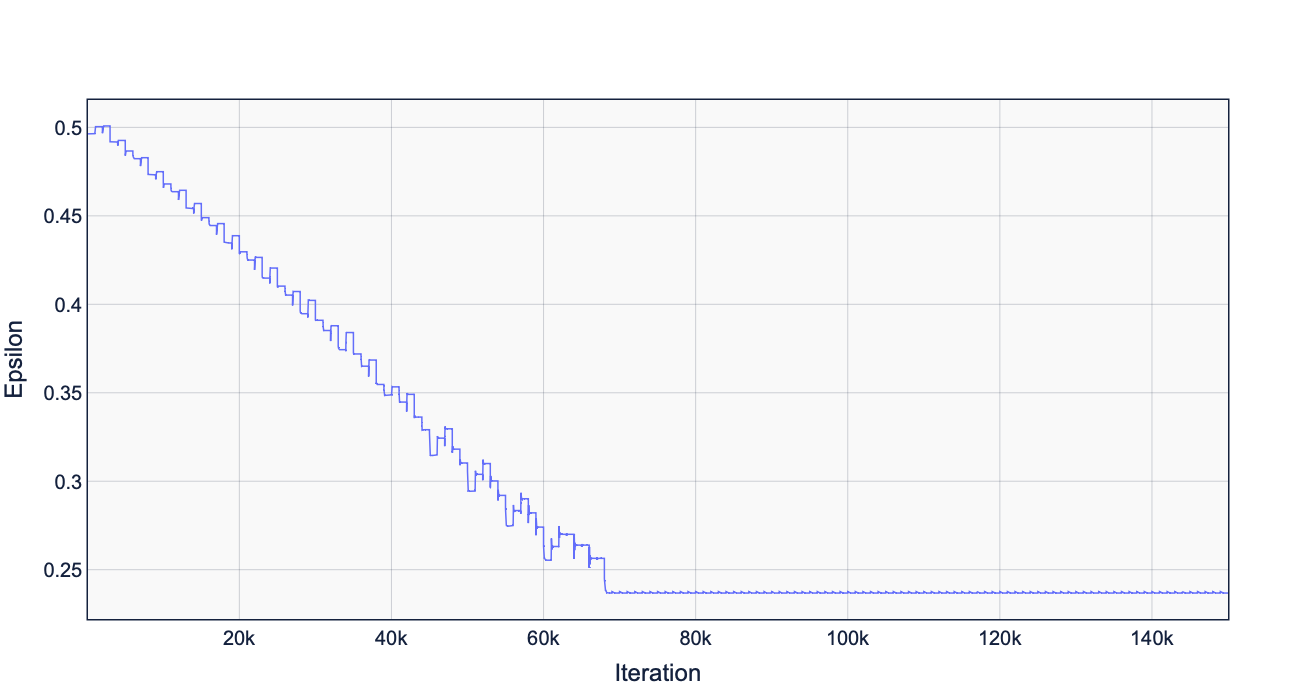}
	\end{subfigure}
 
 \caption{Measures of `closeness' to Nash Equilibrium as the exploration update scheme is applied to the Network Mismatching Game with five agents and (Top) $M=2$ (Bottom) $M=4$. (Left) Distance to NE measured by exploitability (\ref{eqn::exploitability}) of the joint strategy $\x(t)$. (Right) $\epsilon$ as defined by (\ref{eqn::e-NE}). Both metrics decreases as exploration rates are updated until condition (\ref{eqn::conv-criteria}) fails, after which learning is halted.}
 \label{fig::centralised-scheme-mismatch}
\end{figure*}

\begin{figure}[t!]
	\centering
	\begin{subfigure}[b]{0.45\columnwidth }
		\centering
		\includegraphics[width=\columnwidth]{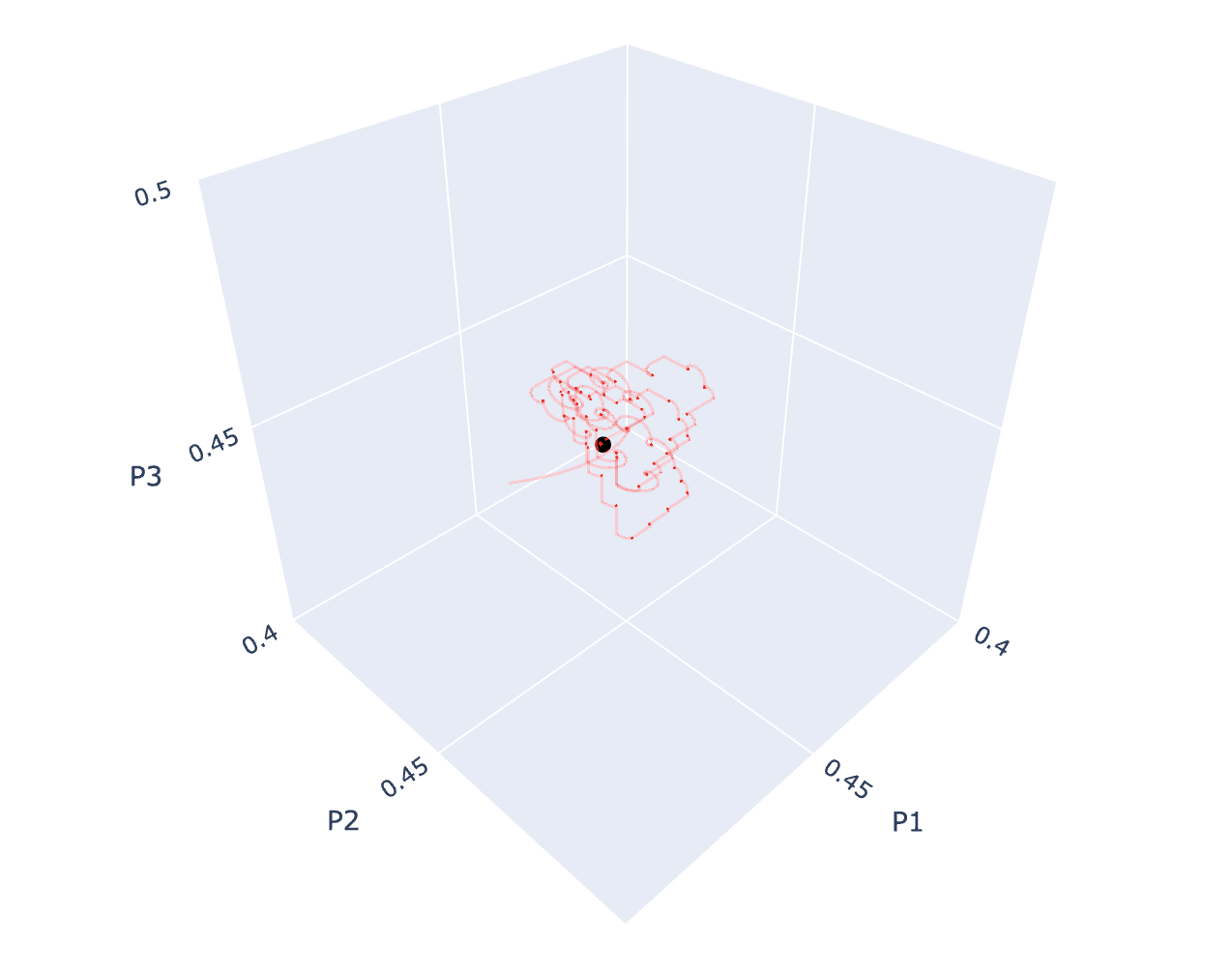}
	\end{subfigure}
	\begin{subfigure}[b]{0.45\columnwidth}
		\centering
		\includegraphics[width=\columnwidth]{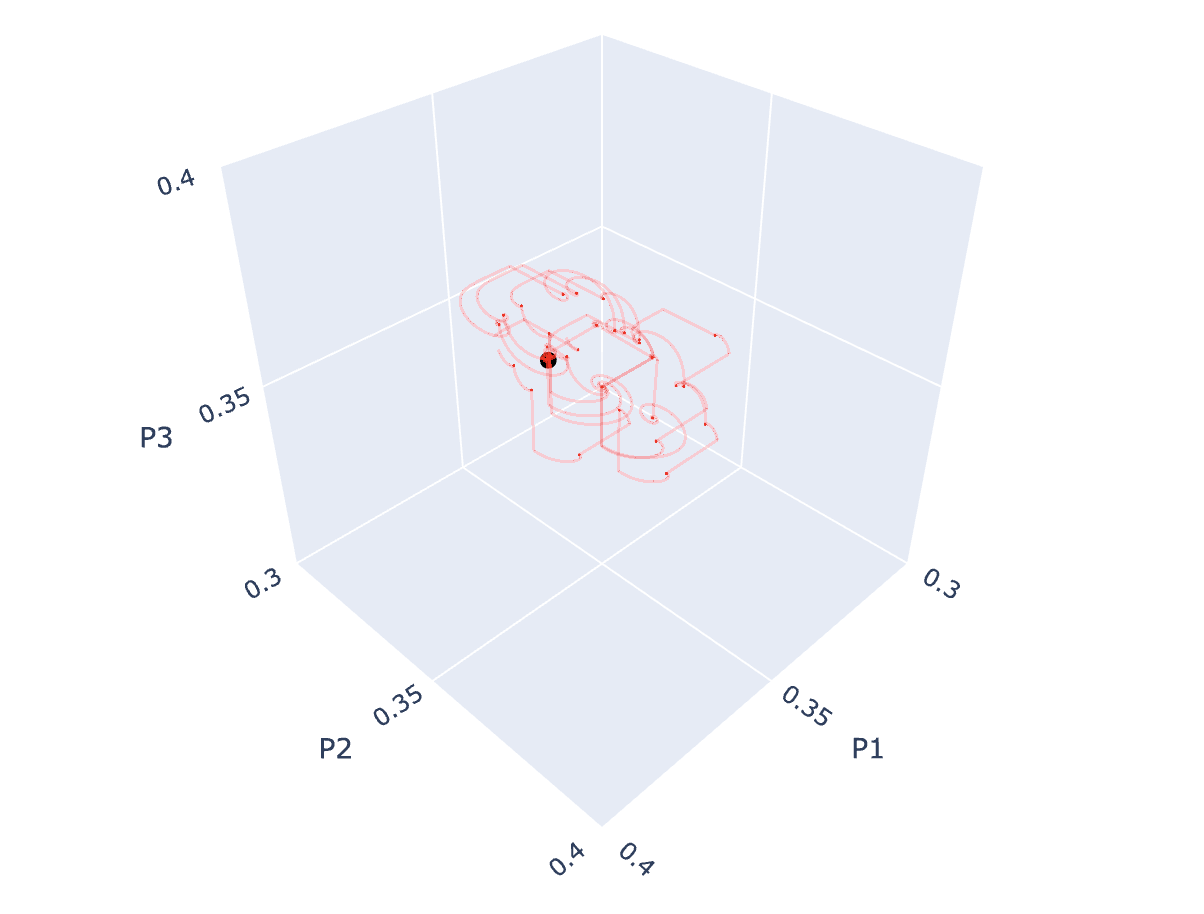}
	\end{subfigure}
 
 \caption{Trajectories of Q-Learning generated as the centralised scheme is applied to (Left) Mismatching Game with $M=2$ (Right) Mismatching Game with $M=4$.}
 \label{fig::centralised-scheme-traj}
\end{figure}

\end{document}